\newif\ifSODA
\newif\ifArxiv
\newcommand{\prt}{\operatorname{\vdash}}
\DeclareMathOperator {\Aut}  {Aut}
\DeclareMathOperator {\Tran}  {Tran}
\DeclareMathOperator {\cnt}  {count}
\newcommand {\bell}    {{\cal{B}}}
\newcommand {\abell}    {{\cal{A}}}
\newcommand{\Erangle}{\rangle_{\mathbb{E}}}
\newcommand{\Ebigodot}{%
  \mathop{%
    \tikz[baseline=(X.base)] \node[draw,thick,circle,inner sep=3pt] (X)
      {\scalebox{0.8}{$\mathbb{E}$}};%
  }%
}
\DeclareMathOperator {\Var}  {Var}
\newcommand {\roundup}   [1] {{\lceil {#1} \rceil}}
\newcommand {\brc}   [1] {\left(#1\right)}
\newcommand {\Exp}       {\mathbb{E}}
\newcommand {\Prob}  [1] {\Pr \brc{#1 }}
\newcommand {\EE}    [2] {\Exp_{#1}\left[#2\right]}
\newcommand {\bbN}    {\mathbb{N}}
\newcommand {\bbR}    {\mathbb{R}}
\newcommand {\calS}    {{\cal{S}}}
\newcommand {\calP}    {{\cal{P}}}
\newcommand {\tensor}   {\otimes}
\DeclareMathOperator {\cost}{cost}
\newtheorem{theorem}{Theorem}[section]
\newtheorem{definition}[theorem]{Definition}
\newtheorem{lemma}[theorem]{Lemma}
\newtheorem{remark}[theorem]{Remark}
\newtheorem{claim}[theorem]{Claim}
\newtheorem{corollary}[theorem]{Corollary}
\author{}
\begin{document}

\ifSODA
\newcommand\relatedversion{}
\renewcommand\relatedversion{\thanks{The full version of the paper can be accessed at \protect\url{https://arxiv.org/abs/0000.00000}}} 
\fi

\ifSODA
\title{\Large Hardness of Approximation for Shortest Path with Vector Costs}
    \author{Charlie Carlson\thanks{State University of New York at Buffalo (\email{cc387@buffalo.edu}).}
    \and Yury Makarychev \thanks{Toyota Technological Institute at Chicago (\email{yury@ttic.edu}). Supported by NSF grants CCF-1955173 and ECCS-2216899.}    
    \and Ron Mosenzon\thanks{Toyota Technological Institute at Chicago
  (\email{ron.mosenzon@ttic.edu}). Supported in part by NSF grant CCF-2402283.}}
\date{}
\else
\title{\Large Hardness of Approximation for Shortest Path with Vector Costs\footnote{The conference version of this paper will appear in the Proceedings of SODA 2026.}}
    \author{Charlie Carlson\thanks{State University of New York at Buffalo ({cc387@buffalo.edu}).}
    \and Yury Makarychev \thanks{Toyota Technological Institute at Chicago ({yury@ttic.edu}). Supported by NSF grants CCF-1955173 and ECCS-2216899.}    
    \and Ron Mosenzon\thanks{Toyota Technological Institute at Chicago
  ({ron.mosenzon@ttic.edu}). Supported in part by NSF grant CCF-2402283.}}
\date{}  
\fi

\maketitle

\ifSODA
\fancyfoot[R]{\scriptsize{Copyright \textcopyright\ 2026 by SIAM\\
Unauthorized reproduction of this article is prohibited}}
\fi 

\begin{abstract}
We obtain hardness of approximation results for the $\ell_p$-Shortest Path problem, a variant of the classic Shortest Path problem with vector costs. For every integer $p \in [2,\infty)$, we show a hardness of $\Omega(p(\log n / \log^2\log n)^{1-1/p})$ for both polynomial- and quasi-polynomial-time approximation algorithms. This nearly matches the approximation factor of $O(p(\log n / \log\log n)^{1-1/p})$ achieved by a quasi-polynomial-time algorithm of Makarychev, Ovsiankin, and Tani (ICALP 2025). No hardness of approximation results were previously known for any $p < \infty$. We also present results for the case where $p$ is a function of $n$.

For $p = \infty$, we establish a hardness of $\tilde\Omega(\log^2 n)$, improving upon the previous $\tilde\Omega(\log n)$ hardness result. Our result nearly matches the $O(\log^2 n)$ approximation guarantee of the quasi-polynomial-time algorithm by Li, Xu, and Zhang (ICALP 2025).

Finally, we present asymptotic bounds on higher-order Bell numbers, which might be of independent interest.
\end{abstract}
\section{Introduction}
In this paper, we prove an inapproximability result for the $\ell_p$-Shortest Path problem. The $\ell_p$-Shortest Path Problem is an analogue of the classical $s$-$t$ path problem, in which edge costs are vectors rather than scalars.
\begin{definition}
We are given a directed graph $G = (V, E)$, vector edge costs $c(e)\in\mathbb{R}_{\geq 0}^\ell$ with non-negative coordinates, two vertices $s$ and $t$, and a parameter $p\geq 1$. Define a cost vector for a path $P$ as $\cost(P) = \sum_{e\in P} c(e)$. The $\ell_p$-Shortest Path problem with a parameter $p \in [1,\infty]$  asks to find an $s$-$t$ path $P$ that minimizes $\|\cost(P)\|_p$. 
\end{definition}

The $\ell_p$-Shortest Path problem generalizes the classical Shortest Path problem to account for fairness and robustness constraints. The coordinates of the cost vector represent costs incurred by different parties (modeling fairness), across varying scenarios (capturing robustness), or for different resources (reflecting a traditional multi-objective optimization setting). The parameter $p$ governs how these individual costs are aggregated. As $p \to \infty$, the objective shifts toward minimizing the worst-case (maximum) cost, thereby promoting fair and robust solutions. The $\ell_p$-Shortest Path problem arises in applications ranging from physical networks -- such as transportation and communication systems -- to abstract configuration spaces. Reflecting the growing interest in algorithmic fairness, this problem has recently received renewed attention (see \cite{ABV21,CJMMS23,CMV22,GSV21,MV21} for recent papers modeling fairness constraints using a similar multiobjective optimization approach).
Note that when $\ell=1$ or $p = 1$, the problem is equivalent to the standard shortest path problem.

The problem was first studied for $p=\infty$ under the name of Robust Shortest Path. 
Papadimitriou and Yannakakis showed that the problem admits an FPTAS when the number of coordinates $\ell$ is a fixed constant~\cite{PY2000} (see also a related paper by Aissi, Bazgan, and Vanderpooten~\cite{aissi2007approximation}). Kasperski and Zieliński~\cite{kasperski2009approximability} proved that the $\ell_\infty$-Shortest Path does not admit an 
$\Omega(\log^{1-\varepsilon} n)$-approximation assuming $NP \nsubseteq \mathrm{ZPTIME}(n^{\operatorname{polylog} n})$ (this hardness result was slightly improved to $\Omega(\log n/\log\log n)$ by Makarychev, Ovsiankin, and Tani~\cite{MOT24}). Later, Kasperski and Zieliński also
presented an
$O(\sqrt{n \log \ell / \log \log \ell})$-approximation algorithm, which remains the best known true-polynomial-time algorithm for general instances of $\ell_\infty$-Shortest Path~\cite{kasperski2018approximating}. However, a significantly better approximation is achievable in quasi-polynomial time. Li, Xu, and Zhang~\cite{li2024polylogarithmic} presented an $O(\log n \log \ell)$-approximation algorithm for $p=\infty$ in quasi-polynomial time $n^{O(\log n)}$. Further, their algorithm runs in true polynomial time for graphs of bounded treewidth, and yields an $O(k \log \ell)$-approximation for series-parallel graphs of order $k$.

Makarychev, Ovsiankin, and Tani studied the problem for arbitrary integers $p\geq 2$. They showed that for every constant $c > 0$, the problem admits a $c p \log^{1-1/p} n$ approximation in quasi-polynomial time $n^{O(\log n)}$ and a $c p \frac{\log^{1-1/p} n}{\log\log n}$ approximation in quasi-polynomial $n^{\operatorname{polylog} n}$ time. Also, for a fixed integer $p\geq 2$, the problem can be approximated within $\bell_k(p)^{1/p} = O(pk^{1-1/p})$ in polynomial time in series-parallel graphs of order $k$, where $\bell_k(p)$ is the Bell number of order $k$ (see Section~\ref{sec:bell} for the definition of higher-order Bell numbers and asymptotic formulas for them). The paper also showed that the variant of $\ell_p$-Shortest Path allowing negative coordinates is as hard as the $\ell_p$-Closest Vector Problem and therefore does not admit a better than $\ell^{c_p/\log\log \ell}$ approximation for every $p\in [1,\infty]$ and some $c_p > 0$.

Prior to our work, all known hardness of approximation results for the $\ell_p$-Shortest Path problem exclusively considered the $\ell_\infty$-norm;  no hardness results were known for any $p < \infty$.

\paragraph{Our Results}
We begin by considering the $\ell_p$-Shortest Path problem for a fixed integer parameter $p \geq 2$.
Leveraging Feige’s $r$-prover system~\cite{Feige}, as formulated by Bhattacharya, Chalermsook, Mehlhorn, and Neumann~\cite{BCMN}, we show that the $\ell_p$-Shortest Path problem admits no constant-factor approximation.
Moreover, we establish a tight inapproximability result for series-parallel graphs of order $k$.

\begin{theorem}\label{thm:main-pvsnp}
For every integer $p \geq 2$, the $\ell_p$-Shortest Path problem does not admit a constant factor approximation in series-parallel graphs if $P\neq NP$.

Further, for every $\varepsilon > 0$, and integer parameters $p\geq 2$ and $k\geq 1$, the problem does not admit a $\bell_k(p)^{1/p} - \varepsilon$ approximation in series-parallel graphs of order $k$ if $P\neq NP$.
\end{theorem}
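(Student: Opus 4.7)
My plan is to reduce from Feige's $r$-prover system in the form used by Bhattacharya, Chalermsook, Mehlhorn, and Neumann, with $r$ chosen as a function of $k$, $p$, and $\varepsilon$. I would first establish the second (tight) statement and obtain the first as an immediate consequence: since $\bell_k(p)^{1/p}\to\infty$ as $k\to\infty$ for any fixed $p\geq 2$, letting $k$ grow even mildly with the instance size already rules out every constant-factor polynomial-time approximation in general series-parallel graphs.

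Given an $r$-prover instance, I would build a series-parallel graph $G$ of order $k$ with vector edge costs whose recursive structure mirrors the $r$-prover protocol. The $s$--$t$ paths of $G$ are placed in correspondence with strategies for the provers: at each of the $k$ nested series levels, a parallel choice selects an answer for one prover. The $\ell$ coordinates of the cost vectors are indexed by the verifier's random strings (equivalently, by tuples of correlated questions to the $r$ provers). An edge contributes $1$ to coordinate $\omega$ precisely when the answer labelling that edge is inconsistent with a globally satisfying assignment under random string $\omega$, and $0$ otherwise. In the completeness case (satisfiable 3SAT instance) an honest prover strategy yields a path whose cost vector is supported on few coordinates with small values; in the soundness case, the $r$-prover soundness forces every $s$--$t$ path to incur positive cost in many coordinates simultaneously.

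The heart of the argument is the tight $\ell_p$-norm analysis, which is where I expect the main obstacle to lie. To match the algorithmic upper bound $\bell_k(p)^{1/p}$ of Makarychev, Ovsiankin, and Tani rather than obtain a looser $\Omega(pk^{1-1/p})$ lower bound, I would set up a recursive accounting of the ``inconsistency patterns'' that a path can realize across the $k$ nested series levels, weighted by their contribution to $\|\cost(P)\|_p^p$. At each additional series level, distributing the $p$-th power of the cost across parallel branches multiplies the count by exactly the factor appearing in the defining recurrence of the higher-order Bell numbers, yielding $\bell_k(p)$ as the tight soundness constant after a power-mean (Jensen) step.

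The delicate part will be calibrating the parameters of the $r$-prover system and the geometry of the gadget so that the $\varepsilon$ slack in the soundness analysis is absorbed and the combinatorial recursion aligns exactly with $\bell_k(p)$ rather than with a weaker bound. In particular, the soundness of Feige's system must be driven below $1/\bell_k(p)$ by choosing $r$ large enough, while the gadget size must stay polynomial in the 3SAT instance size for every constant $k$ and $p$; both requirements pin down the quantitative dependence between $r$, $k$, $p$, and $\varepsilon$.
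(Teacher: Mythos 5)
Your high-level skeleton matches the paper's: start from the Bhattacharya--Chalermsook--Mehlhorn--Neumann formulation of Feige's $r$-prover system (the $r$-Hypergraph Label Cover problem), build a series-parallel instance of order $k$, and extract $\bell_k(p)$ from the defining recurrence of higher-order Bell numbers; deducing the first statement from the second by taking $k$ large enough for each constant is also how the paper proceeds. However, there are two genuine gaps in how you propose to realize this.

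First, your cost vectors are not well-defined and could not be produced by a reduction. You say an edge contributes $1$ to coordinate $\omega$ ``precisely when the answer labelling that edge is inconsistent with a globally satisfying assignment under random string $\omega$.'' In the soundness case there is no globally satisfying assignment, and in the completeness case the reduction cannot compute one. The consistency that a reduction can encode is \emph{pairwise} consistency between provers' answers on their shared variables (the colors $\pi_h^u(\sigma(u))$). The paper's solution is an explicit algebraic gadget, the $(r,q,p)$-vector system built from affine hyperplanes over $\mathbb{F}_r^p$: two distinct vectors of the same color have multilinear product $0$, and any $p$-tuple with $t$ distinct vectors of pairwise distinct colors has multilinear product exactly $r^{-t}$. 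These exact values are what make the $p$-th power of the path cost expand, tuple by tuple, into the Bell-number recurrence; a generic $0$/$1$ ``inconsistency indicator'' does not have this property, and without it you will not land on the tight constant $\bell_k(p)$.

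Second, your recursive structure conflates the order $k$ with the number of provers ($r$): you let each of the $k$ nested series levels choose an answer for one prover. In the paper these parameters are independent: the order-$1$ graph already encodes a complete labeling (one block of parallel edges per vertex/question), and order $k$ is obtained by $k$-fold \emph{tensoring} (replacing every edge with a fresh copy of the base graph, with tensor-product costs). The main difficulty, which your plan does not address, is that an $s$-$t$ path in the tensored graph follows \emph{different} paths --- hence different labelings $\sigma_1,\dots,\sigma_p$ --- in different copies of the inner graph. Closing the induction requires (i) soundness of the label cover against \emph{weak} satisfaction (two provers agreeing), and (ii) the ``colorful hyperedge'' lemma showing that a $1-p^2\varepsilon$ fraction of hyperedges remain fully inconsistent with respect to all $p$ assignments simultaneously, proved by a random mixing of the $\sigma_i$. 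The induction then tracks, for each $p$-tuple of edges, the partition of $[p]$ it induces and sums $\bell_{k-1}(T')$ over refinements $T'$; this is a direct multilinear expansion rather than a Jensen/power-mean step. Your calibration condition (``drive the soundness below $1/\bell_k(p)$'') is also not the right one: what is needed is $\varepsilon = o(1/(p^2 k))$ and $r = \omega(p^2 k)$ so that the multiplicative loss $\alpha_k = (1-p/r)^{pk}(1-p^2\varepsilon)^k$ stays above $1-\delta$.
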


To obtain a superconstant lower bound for the problem, we adopt the stronger complexity assumption that $NP \nsubseteq \bigcap_{\delta > 0} \mathrm{DTIME}(2^{n^{\delta}})$, which allows us to apply the reduction from Theorem~\ref{thm:main-pvsnp} with superconstant values of $k$. We get the following result.
\begin{theorem} \label{thm:hardness-of-ell-p-shortest-path}
    The following hardness result holds for some absolute constant $C' \geq 1$ and every $c \geq 1$ if $NP \nsubseteq \bigcap_{\delta > 0} \mathrm{DTIME}(2^{n^{\delta}})$. Assume that integer parameter $p = p(n) \geq 2$ satisfies
    $$p(n) \cdot \log^* p(n) \leq \frac{\log{n}}{C'\cdot c \cdot \log{\log{n}}}$$    %
    for all sufficiently large $n$ and is non-decreasing as a function of $n$; in particular, $p(n)$ might be equal to some constant $p_0 \geq 2$.
    
    Then, there is no deterministic $\alpha$-approximation algorithm for the $\ell_p$-Shortest Path problem with $p = p(n)$ running in quasi-polynomial time $O(2^{(\log n)^c})$, where
$$\alpha = \frac{1}{C' \cdot c} \cdot \min\left\{p(n) \cdot \left(\frac{\log n}{\log^2\log n}\right)^{1-\frac{1}{p(n)}}, \frac{\log n}{\log \log n} \right\}.$$         
In particular, there is no polynomial-time $\alpha$-approximation algorithm.
\end{theorem}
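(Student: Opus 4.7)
The plan is to scale up Theorem~\ref{thm:main-pvsnp} by letting the series-parallel order $k$ grow as a function of the instance size. The reduction behind Theorem~\ref{thm:main-pvsnp} takes an NP-hard SAT instance of size $n_0$ and produces an $\ell_p$-Shortest Path instance of size $N = N(n_0,k)$ on a series-parallel graph of order $k$, with a blow-up that is only polynomial in $n_0$ for constant $k$ and more generally grows in a controlled (subexponential) way with $k$. The weaker hypothesis $P\neq NP$ only tolerates constant $k$; the stronger assumption $NP \nsubseteq \bigcap_{\delta>0}\mathrm{DTIME}(2^{n^{\delta}})$ permits any $k=k(n_0)$ that keeps $N$ subexponential in $n_0$. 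To convert order $k$ into an inapproximability ratio, I plan to invoke the asymptotic estimate $\bell_k(p)^{1/p} = \Omega\brc{p\cdot k^{1-1/p}/\log^* p}$ established in the later section on higher-order Bell numbers (Section~\ref{sec:bell}).

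First, I would choose $k=k(n)$ to be the largest value for which $\bell_k(p)^{1/p}$ equals the target ratio $\alpha$ up to a constant. Solving the Bell-number asymptotic for $k$ yields $k=\Theta\brc{\log n/(c\log^2\log n)}$ (for the first branch of the $\min$), in which case
$$\bell_k(p)^{1/p} \;=\; \Omega\brc{p(n)\cdot\brc{\log n/\log^2\log n}^{1-1/p(n)}\big/\log^* p(n)},$$
which matches the first term of the $\min$ once the hypothesis $p\cdot\log^* p \leq \log n/(C'c\log\log n)$ is absorbed into the leading constant. Second, assume toward a contradiction a deterministic $2^{(\log n)^c}$-time $\alpha$-approximation algorithm $A$ for $\ell_p$-Shortest Path. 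Composing the reduction of Theorem~\ref{thm:main-pvsnp} with $A$: from a SAT instance of size $n_0$, produce an $\ell_p$-Shortest Path instance of size $N$ in time polynomial in $N$, then run $A$ in time $2^{(\log N)^c}$. Substituting the chosen $k$ makes the total running time at most $2^{n_0^{\delta}}$ for every fixed $\delta>0$, contradicting the complexity hypothesis.

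To obtain the second term $\log n/\log\log n$ in the $\min$, observe that once $p(n)$ is so large that $p(n)\brc{\log n/\log^2\log n}^{1-1/p(n)}$ exceeds $\log n/\log\log n$, the first expression ceases to be the binding constraint. In this regime I simply cap $k$ at the value yielding a $\Theta(\log n/\log\log n)$-hardness gap; equivalently, run the same argument with a smaller effective $p$, which already achieves this ratio. The main obstacle is careful parameter bookkeeping: verifying that the reduction underlying Theorem~\ref{thm:main-pvsnp} has size small enough as a function of $k$ for the subexponential reduction to close, and that the Bell-number asymptotic is strong enough uniformly over the permitted range of $(p,k)$. The $\log^* p$ factor appearing in the Bell-number estimate is precisely what forces the form of the hypothesis $p\log^* p \leq \log n/(C'c\log\log n)$, and tracking it through the non-decreasing, integer-valued schedule for $p(n)$ is the remaining detail.
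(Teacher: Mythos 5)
Your proposal follows essentially the same route as the paper: the proof of Theorem~\ref{thm:hardness-of-ell-p-shortest-path} instantiates the reduction behind Theorem~\ref{thm:main-pvsnp} with superconstant $k$ and $r$ (packaged as Claim~\ref{cl : reduction from 3SAT to ell-p shortest path}, parameterized by the target instance size $N$), invokes the lower bound $\bell_k(p)^{1/p} = \Omega\brc{p\,(k-\log^* p)^{1-1/p}}$ from Theorem~\ref{thm:bkp_lower_bound} with $k = \Theta\brc{\log N/(\log m\cdot\max\{\log p,\log\log N\})}$, and derives a contradiction with $\mathrm{3SAT}\notin\mathrm{DTIME}(2^{n^{\delta^*}})$ exactly as you outline. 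One small correction: the second branch of the $\min$ is obtained in the paper not by ``a smaller effective $p$'' (the norm parameter is fixed by the problem) but because the gap the reduction actually delivers is $\frac{p}{C}\brc{\log \ell/((p+\log m)\log\log \ell)}^{1-1/p}$, which upper-bounds the $\min$ and automatically degrades to $\tilde\Omega(\log n/\log\log n)$ when $p$ is large; apart from that and the slightly different (weaker) form in which you quote the Bell-number bound, your parameter choices and the role of the $\log^* p$ constraint match the paper's argument.
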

For a fixed parameter $p < \infty$, this theorem yields an $\Omega(p(\log n / \log^2\log n)^{1-1/p})$-hardness, where the implicit constant in the $\Omega$-notation does not depend on $p$.

Theorems~\ref{thm:main-pvsnp} and~\ref{thm:hardness-of-ell-p-shortest-path} do not directly yield strong hardness results for the $\ell_\infty$-variant of the problem. To obtain a hardness result for $\ell_\infty$, we would need to set $p = \log \ell$, where $\ell$ is the dimension of the cost vector.
However, when we change parameters in our reduction to increase the hardness of approximation factor $\alpha$, the dimension $\ell$ increases as an exponential function of $\alpha$. As a result, we can only establish a hardness of $\tilde{\Omega}(\log \ell)$ -- a bound that is already known. To obtain a stronger $\Omega(\log^2 \ell)$ hardness, we apply a related but modified reduction and derive the following result.

\begin{theorem}\label{thm: hardness of ell-infinity shortest path} 
For some absolute constant $C'' \geq 1 $ and all $c >0$, the following hardness of approximation result holds, assuming that $NP \nsubseteq \bigcap_{\delta > 0} \mathrm{DTIME}(2^{n^{\delta}})$.
There is no deterministic $\alpha$-approximation algorithm for the $\ell_\infty$-Shortest Path problem that runs in time $O(2^{(\log n)^c})$, where  
$$\alpha = \left(\frac{\log n}{C'' \cdot c \cdot (\log \log n)^2}\right)^2 = \tilde\Omega(\log^2 n).$$
\end{theorem}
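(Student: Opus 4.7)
The plan is to adapt the Feige $r$-prover reduction used in the proofs of Theorems~\ref{thm:main-pvsnp} and~\ref{thm:hardness-of-ell-p-shortest-path}, but to reorganize the gap analysis so that it exploits the $\ell_\infty$ norm directly rather than going through $\ell_p$ and losing the factor $\ell^{1/p}$ that effectively caps Theorem~\ref{thm:hardness-of-ell-p-shortest-path} at $\log n/\log\log n$ hardness.

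\emph{Parameters and construction.} Assuming $NP\not\subseteq \bigcap_{\delta>0}\mathrm{DTIME}(2^{n^\delta})$, I would apply Feige's $r$-prover PCP, in the formulation of Bhattacharya--Chalermsook--Mehlhorn--Neumann, with $r:=\lceil \log n/(C''\cdot c\cdot (\log\log n)^2)\rceil$ provers to a sufficiently large 3-SAT instance. I would then build a series-parallel DAG whose $s$-$t$ paths encode joint strategies for the $r$ provers, allocating one cost coordinate to each $r$-tuple of random strings: an edge would pay $1$ on coordinate $R$ iff the choice it represents falsifies the verifier on $R$, and $0$ otherwise. Completeness would yield an $s$-$t$ path of small (essentially constant) $\ell_\infty$ cost, while Feige's soundness would guarantee that in the NO case a $1-2^{-\Omega(r)}$ fraction of all $r$-tuples is rejected by every prover strategy encoded by an $s$-$t$ path.

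\emph{Soundness analysis (main obstacle).} The crux is extracting an additional factor of $r$ in the $\ell_\infty$ norm beyond what a single level of averaging gives. A naive pigeonhole only shows that some coordinate of $\cost(P)$ is at least $\Omega(r)$ times the completeness value, yielding merely $\tilde\Omega(\log n)$ hardness. To reach the required $r^2=\tilde\Omega(\log^2 n)$ gap, I would use a two-level nested construction in which each edge of an outer Feige gadget is replaced by an inner copy of the same reduction. A counting argument, analogous in spirit to those used in~\cite{kasperski2009approximability} and~\cite{MOT24} for $\ell_\infty$, would show that for every $s$-$t$ path, some outer coordinate receives contributions from many inner-rejecting tuples, and within it some inner coordinate must further concentrate the rejection mass, producing $\ell_\infty$ cost proportional to the product of the two per-level gaps. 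Verifying that this product-gap survives composition under the series-parallel gluing appears to be the most delicate step.

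\emph{Parameter check.} Each Feige layer multiplies the instance size by $2^{O(r)}$, so two nested layers still yield a graph of size $2^{O(r)}=O(2^{(\log n)^c})$, well within the allowed quasi-polynomial budget. Substituting $r=\Theta(\log n/(c(\log\log n)^2))$ then gives the claimed hardness factor $\alpha=(\log n/(C''\cdot c\cdot (\log\log n)^2))^2=\tilde\Omega(\log^2 n)$, completing the proof.
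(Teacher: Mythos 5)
Your proposal has a genuine gap at exactly the step you flag as ``the most delicate'': the claim that nesting two Feige layers yields an $\ell_\infty$ gap equal to the \emph{product} of the two per-level gaps. This composition step fails. In the composed instance a path $P$ is specified by an outer path $P'$ together with inner paths $P_e$, one per edge of $P'$, and these inner paths are chosen independently. If some outer coordinate receives contributions from $r$ inner copies, each inner path $P_{e_i}$ does have \emph{some} coordinate of cost $\Omega(r)$ --- but these heavy coordinates need not align across the different inner copies. A sum of $r$ nonnegative vectors, each with maximum entry $\Omega(r)$ but with (potentially) disjoint heavy supports, can have $\ell_\infty$ norm as small as $O(r)$ plus the typical per-coordinate average; no pigeonhole or counting argument forces a single coordinate to ``further concentrate the rejection mass.'' This is precisely the obstruction the paper identifies in its technical overview, and it is why a two-level product construction cannot reach $r^2$.

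The paper's actual route is structurally different and is designed around this obstruction. It uses $r=2$ provers and $k=\tilde\Theta(\log n)$ levels of tensoring rather than two levels with large $r$, and it replaces the multiplicative-gap hope with a quantitatively much weaker but \emph{provable} gain per level: via a ``modified vector system'' whose vectors behave like independent Bernoulli $0$--$1$ samples, an anti-concentration bound (Paley--Zygmund/Khintchine) shows that the sum of $2y$ such samples exceeds its mean $y$ by $\sqrt{y}/10$ in some coordinate. This yields the recurrence $y_{j+1}\ge y_j+\sqrt{y_j}/C$ across tensoring levels, which solves to $y_k=\Omega(k^2)$; the quadratic hardness comes from accumulating $k$ additive square-root gains, not from multiplying two gaps. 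To repair your proposal you would need to either prove an alignment lemma forcing the inner heavy coordinates to coincide (which is false for the natural constructions) or switch to the additive-gain-per-level analysis, at which point you are reproving the paper's argument and must also redo the parameter check: with only an additive $\sqrt{y}$ gain per level, two levels give $O(1)$ extra, and you need $\tilde\Theta(\log n)$ levels, which changes the size accounting from $2^{O(r)}$ to $2^{O(k\log k\log m)}$.
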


To obtain our results for $p<\infty$, we need asymptotic lower bounds on the higher-order Bell numbers $\bell_k(p)$.
Although introduced by Bell in 1938 -- almost a century ago -- higher-order Bell numbers have received much less attention than the standard, first order ones~\cite{bell1938iterated}.
Recently, Skau and Kristensen obtained an asymptotic formula for Bell numbers $\bell_k(p)$ for fixed $p$ and $k\to \infty$~\cite{skau2019asymptotic}.
Then Makarychev, Ovsiankin, and Tani obtained an asymptotic upper bound on $\bell_k(p)^{1/p}$ for all $k$ and $p$~\cite{MOT24}. We complement their result by presenting matching lower bounds. We believe this result is of independent interest.

\paragraph{Near-optimality of our results}
Let us compare our hardness results with the corresponding approximation results and see that they are nearly optimal.
We begin with series-parallel graphs. We show that $\ell_p$-Shortest Path does not admit a better than $\bell_k(p)^{1/p} - \varepsilon$ approximation in series-parallel graphs of order $k$; this matches the algorithmic upper bound of $\bell_k(p)^{1/p}$ exactly.

Next, we turn to general graphs. Since all known polylogarithmic approximation algorithms for such graphs run in quasi-polynomial time, we compare our hardness results to those. Note that our hardness results also apply to quasi-polynomial-time algorithms.
For fixed $p$, we show that $\ell_p$-Shortest Path does not admit a better than $\Omega(p \log n / \log^2 \log n)^{1 - 1/p}$ approximation. This matches the approximation result by Makarychev, Ovsiankin, and Tani up to a $(\log \log n)^{1 - 1/p}$ factor.
Our hardness result of $\Omega(\log^2 n / \log^4 \log n)$ for $\ell_\infty$ matches the approximation guarantee of Li, Xu, and Zhang up to a $\log^4 \log n$ factor.

To summarize, our results for series-parallel graphs match the known approximation bounds exactly, and our results for general graphs match the best known guarantees up to $\log \log$-factors.

\paragraph{Non-integer values of $p$}
While all of the aforementioned results hold only for integer values of $p$, we emphesize that obtaining any non-trivial positive and negative results for fractional values of $p$ seems very challenging using current techniques. Indeed, even getting any polylogarithmic lower or upper-bounds for e.g. $p=\nicefrac{3}{2}$, is an interesting open problem.

\subsection*{Technical Overview of the Paper}

Let us present a high-level overview of our hardness reductions. 
We begin with a hard instance of the $r$-Hypergraph Label Cover problem. In this problem, we must label the vertices of a given hypergraph $\cal H$ so that loosely speaking, the labelings of all vertices in each hyperedge are consistent. The exact definition of the consistency requirement is unimportant for this informal overview, though it plays a crucial role in the formal proofs. In the yes-case, the consistency requirement is satisfied for all hyperedges; in the no-case, in a $(1-\varepsilon)$-fraction of hyperedges, the labelings of every pair of vertices are inconsistent.

We remark that this problem serves as a natural and effective starting point for our reduction. As we will see shortly, it is easy to define an $\ell_p$-Shortest Path instance on a series-parallel graph of order 1 such that every $s$–$t$ path corresponds to a hypergraph labeling and vice versa. By assigning suitable vector costs, we ensure a substantial gap between the optimal costs of the resulting yes- and no-instances of $\ell_p$-Shortest Path. For $p < \infty$, it is crucial that our reductions use $r$-Hypergraph Label Cover with large constant or superconstant values of $r$. Although we could instead begin with a Label Cover or even an Independent Set instance, such alternatives would yield considerably weaker hardness results even for series-parallel graphs of order 1 and would prevent any significant amplification of the hardness gap.

Now we present our reduction of $r$-Hypergraph Label Cover to $\ell_p$-Shortest Path. For each vertex $u$ of hypergraph $\cal H$, we create a block of parallel edges and connect these blocks serially (in an arbitrary order); see Figure~\ref{fig:basic}. Edge number~$i$ in the block corresponding to vertex~$u$ represents the labeling $u \leftarrow i$. This yields a graph~$G$ in which every $s$-$t$ path selects exactly one edge per block and thereby defines a labeling of~$\cal H$.

Next, we assign $0$-$1$ cost vectors to the edges to ensure that the following property holds for every hyperedge $h$. Consider any pair of vertices $u, v \in h$. The cost vectors for edges in $G$ corresponding to labelings $u \leftarrow i$ and $v \leftarrow j$ have disjoint supports if the labelings are consistent, and resemble random $0$-$1$ vectors with a fixed number of 1s otherwise. We construct the assignment vectors separately for each hyperedge $h$, and then concatenate them across all hyperedges.

Let us now analyze the cost of the obtained instance. In the yes-case, a consistent labeling for $\cal H$ defines an $s$-$t$ path whose edge cost vectors are pairwise orthogonal, so the total $\ell_p$-cost is small. In the no-case, the vectors overlap in many 1-coordinates, leading to a significantly higher $\ell_p$-cost.

This construction yields a hardness of only $\Theta(p / \log p)$ for $\ell_p$-Shortest Path. To amplify this, we apply a \emph{tensoring} operation: given graphs $G_1$ and $G_2$, we define $G_1 \otimes G_2$ by replacing each edge of $G_1$ with a copy of $G_2$. The cost vector of edge $e_1 \otimes e_2$ in $G_1 \otimes G_2$ is the tensor product of the cost vectors of $e_1$ and $e_2$.

Now we consider the hardness instance $G_1 \otimes G_2$ with $G_1 = G_2 = G$. It is straightforward to show that there is a good $s$-$t$ path in the yes-case. Analyzing the no-case, however, is more challenging. If a path $P$ follows the same path in every copy of $G_2$ within $G_1 \otimes G_2$, the analysis reduces to that for $G$. However, an optimal path~$P$ will most likely follow different paths $P_1, \dots, P_n$ in different copies of $G_2$. Each $P_i$ defines its own labeling of~$\cal H$. Nevertheless,  using that every labeling of $\cal H$ is not even consistent in a weak sense, we are able to analyze this case.
We iterate this tensoring procedure to obtain the final instance $G^{\otimes k} = G \otimes \dots \otimes G$, yielding our hardness result for $\ell_p$-Shortest Path with $p < \infty$.

In the reduction construction for $\ell_\infty$ and its analysis, we use a similar setup but with a different set of cost vectors and a distinct analysis. Unlike in the case of $p < \infty$, the parameter $r$ does not have to be large, and we set $r = 2$. We define the graph $G$ as before and then consider its tensor power $G^{\tensor k}$.

In the yes-case, there exists a path $P$ of $\ell_\infty$-cost 1 corresponding to a satisfying labeling of $\cal H$.  
In the no-case, we inductively prove that the cost of every $s$-$t$ path $P$ in $G^{\otimes j}$ is at least  $y_j$, which is defined by the following recurrence for some fixed $C > 0$:  
$$
y_{j+1} \geq y_{j} + \sqrt{y_{j}}/C.
$$
Let us briefly discuss the inductive step. Consider an $s$-$t$ path $P$ in $G^{\tensor(j+1)}$. We think of $G^{\tensor(j+1)}$ as $G^{\tensor j}$ in which every edge is replaced with a copy of $G$. Thus, $P$ is defined by a path $P'$ in $G^{\tensor j}$ and paths $P_e$ in $G$ for all edges $e$ of $P'$. By the induction hypothesis, the cost of $P'$ is at least $y_j$. In our vector system, all cost vectors have only 0–1 coordinates; this means that there are at least $y_j$ edges of $P'$ sharing a single coordinate 1. We focus on these edges and consider the corresponding paths $P_{1}, \dots, P_{y_j}$. As in the case of $p < \infty$, we use the fact that no labeling of $\cal H$ is consistent, even in a weak sense; without delving into details, this means that for some edge $h = (u, v)$ of $\cal H$, the cost vectors of edges corresponding to vertices $u$ and $v$ in paths $P_{e_1}, \dots, P_{y_j}$ behave like samples of independent Bernoulli 0–1 random variables (in a formalized sense). In each coordinate, there are two samples for each path (one for $u$ and one for $v$), so there are $2y_j$ samples in total.  

Now we make the following observation.  
Consider $2y_j$ independent Bernoulli 0–1 random variables. Their sum $S$ has expectation $\mu = \frac{1}{2} \cdot 2y_j = y_j$ and standard deviation $\sigma = \sqrt{y_j/2}$. Then $S$ exceeds $\mu + \sigma/10$ with constant probability.  
This means that in every coordinate, the sum of cost vectors for $P_{e_i}$ is at least $y_j + \sqrt{y_j/200}$ with constant probability, and the probability that at least one coordinate reaches $y_j + \sqrt{y_j/200}$ is very high if the number of dimensions is large enough. We therefore obtain that the total $\ell_\infty$-cost of $P_1, \dots, P_{y_j}$, and thus of $P$, is at least  
$$
y_{j} + \sqrt{y_{j}/200} \geq y_{j+1},
$$
as desired. Finally, solving the recurrence for $y_j$, we get $y_j = \Omega(j^2)$. Plugging in appropriately chosen parameters, we obtain a hardness of $\tilde{\Omega}(\log^2 N)$, where $N$ is the size of the resulting instance.

\paragraph{Organization}
In Section~\ref{sec:prelim}, we present the preliminaries, including background on Feige’s $r$-prover system and the $r$-Hypergraph Label Cover problem introduced by Bhattacharya, Chalermsook, Mehlhorn, and Neumann.
In Section~\ref{sec:vec-costs}, we construct a vector system that will serve as the basis for our hardness reductions for $p < \infty$.
Section~\ref{sec:reduction-l2} describes our hardness reduction for $\ell_2$. We begin by showing how to convert instances of $r$-Hypergraph Label Cover into instances of $\ell_2$-Shortest Path on series-parallel graphs of depth 1. This reduction yields a hardness factor of only $\sqrt{2} - \varepsilon$.
To amplify the hardness result, we introduce a tensoring operation on instances of $\ell_p$-Shortest Path in series-parallel graphs. Applying this operation, we obtain a hardness of $\sqrt{k+1} - \varepsilon$ in series-parallel graphs of order $k$, establishing Theorem~\ref{thm:main-pvsnp} for $p = 2$.
After completing the case $p = 2$, we turn to the general case of  $p \geq 2$. In Section~\ref{sec:setup-p-greater-than-2}, we introduce additional definitions and background, including partitions and higher-order Bell numbers.
Then, in Section~\ref{sec:analysis-for-general-p}, we adapt our reduction for $p = 2$ to the general case, thereby proving Theorem~\ref{thm:main-pvsnp}.
In Section~\ref{sec:superconstant}, we strengthen our complexity assumption to $NP \nsubseteq \bigcap_{\delta > 0} \mathrm{DTIME}(2^{n^{\delta}})$ and apply our reduction with  superconstant values of $k$ and $r$, resulting in an $\ell_p$-Shortest Path instance of superpolynomial size. This yields Theorem~\ref{thm:hardness-of-ell-p-shortest-path}.
Section~\ref{sec:infinity} addresses the case of $p = \infty$. Using the same high-level framework but modifying many details, we prove that $\ell_\infty$-Shortest Path does not admit a $\tilde\Omega(\log^2 n)$-approximation, establishing Theorem~\ref{thm: hardness of ell-infinity shortest path}.
Finally, in Section~\ref{sec:bounds-on-bell}, we prove tight upper and lower bounds (up to a constant factor) on $\bell_k(p)^{1/p}$.
Appendix~\ref{sec:Feige} provides further discussion of Feige’s $r$-prover system. Appendix~\ref{sec:Bernoulli} contains a bound on sums of Bernoulli random variables used in Section~\ref{sec:infinity}.
\section{Preliminaries}\label{sec:prelim}
\subsection{Hypergraph Label Cover}
Our hardness result is based on a reduction from the $r$-Hypergraph Label Cover, which was introduced by Bhattacharya, Chalermsook, Mehlhorn, and Neumann~\cite{BCMN} (restated).

\begin{definition}[$r$-Hypergraph Label Cover]\label{def: hypergraph label cover}
We are given an $r$-partite hypergraph $\mathcal{H} = (V, \mathcal{E})$ on a set of vertices $V = \bigcup_{j=1}^{r} V_j$ with hyperedges of size $r$. Each hyperedge $h \in \mathcal{E}$ contains exactly one vertex from each $V_i$. Additionally, we are given sets of labels $L$ and colors $C$. Finally, for every hyperedge $h \in \mathcal{E}$ and vertex $u\in h$, we are given a map $\pi_h^u : L \to C$ that maps labels to colors. 

Consider an assignment/labelling $\sigma: V \to L$. We say that $u$ has color $\pi_h^u(\sigma(u))$ in $h$ w.r.t.\ $\sigma$. Note that the color of $u$ may be different in different hyperedges $h$.

An assignment $\sigma$ satisfies a hyperedge $h$ if all the vertices $u\in h$ have the same color in $h$ w.r.t.\  $\sigma$. An assignment $\sigma$ weakly satisfies a hyperedge $h$ if at least two distinct vertices $u', u''\in h$ share the same color in $h$,
$\pi_h^{u'}(\sigma(u')) = \pi_h^{u''}(\sigma(u''))$
\end{definition}

Bhattacharya, Chalermsook, Mehlhorn, and Neumann~\cite{BCMN} show that Feige's $r$-prover system~\cite{Feige} can be cast as an instance of $r$-Hypergraph Label Cover. Therefore, it is NP-hard to distinguish between (a) completely satisfiable instances and (b) instances in which every solution satisfies at most a $\varepsilon$-fraction of the constraints \cite[Theorem 3]{BCMN}.

In our paper, we will need a hardness result with parameters different from those used in \cite{BCMN}. Namely, the hardness reduction for $r$-Hypergraph Label Cover from Theorem 3 of \cite{BCMN} uses Feige's $r$-prover system in which the number $\ell$ of clauses selected by the verifier is linear in the number of provers $r$ (denoted $k$ in Feige's paper). However, Feige showed how to construct an $r$-prover system with $\ell=O(\log r)$; see Section 6 in \cite{Feige} for details.
To achieve our results, we will need a version of the reduction from \cite{BCMN} altered to use this parameterization. We present the statement of this altered reduction in the following theorem.

\begin{theorem}[Hardness reduction for $r$-Hypergraph Label Cover, based on the reduction from \cite{BCMN}]\label{thm:hardness-r-hypergraph-label-cover}
There is an algorithm (reduction) that, given an $n$-variable 3-SAT formula, a parameter $r\geq 2$, and a parameter $0 < \varepsilon < 1$, outputs an instance $\cal H$ of $r$-Hypergraph Label Cover such that

\begin{itemize}
\item \textbf{Yes-Instance.} If the input formula is satisfiable, then the instance $\cal H$ is completely satisfiable.
\item \textbf{No-Instance.} If the input formula is unsatisfiable, then every assignment for $\cal H$ weakly satisfies at most an $\varepsilon$-fraction of the hyperedges.
\end{itemize}
The hypergraph in the produced instance of $r$-Hypergraph Label Cover contains $n^{O\left(\log \frac{r}{\varepsilon}\right)}$ vertices and hyperedges.
The sets of labels and of colors are both of cardinality at most $2^{O\left(\log \frac{r}{\varepsilon}\right)}$.
The running time of the algorithm is $n^{O\left(\log \frac{r}{\varepsilon}\right)}$.
\end{theorem}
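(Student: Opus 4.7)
The plan is to follow the BCMN construction~\cite{BCMN} essentially verbatim, but instantiated with the more flexible parameterization of Feige's $r$-prover system described in Section~6 of~\cite{Feige}. Recall the key features of Feige's system: on each query, the verifier picks $\ell$ clauses of the input 3-SAT formula uniformly at random, picks a variable from each, and then sends carefully chosen subsets of these clauses (together with designated variables) to each of the $r$ provers, accepting only if the provers return consistent satisfying assignments. In the parameterization used by BCMN, $\ell$ is chosen linear in $r$; in Section~6 of Feige's paper, $\ell$ is decoupled from $r$ and can be as small as $O(\log r + \log(1/\varepsilon))$ while driving the soundness error below any desired $\varepsilon$.

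First I would instantiate Feige's system with $\ell = c\log(r/\varepsilon)$ for a sufficiently large absolute constant $c$ so that the soundness error is at most $\varepsilon$; completeness remains perfect because a satisfying assignment to the formula induces an honest prover strategy that always makes the verifier accept. Next I would translate this PCP to an instance of $r$-Hypergraph Label Cover following BCMN: for each prover $j \in [r]$, the vertex set $V_j$ consists of all possible queries that prover $j$ may receive; each hyperedge $h \in \mathcal{E}$ corresponds to a random string of the verifier and contains the queries it sends to the $r$ provers; the label set $L$ consists of all possible prover answers (assignments to the variables in a query); and for each $u \in h$, the map $\pi_h^u$ projects an answer onto the local view on which the verifier's acceptance predicate depends when consulting prover $u$ in hyperedge $h$. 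Completeness and soundness translate directly: a satisfying 3-SAT assignment yields a completely satisfying labeling, and conversely a labeling that weakly satisfies more than an $\varepsilon$-fraction of hyperedges gives, by an averaging argument, a pair of provers whose answers agree with probability exceeding the soundness threshold, yielding a strategy contradicting the soundness of Feige's system.

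Finally, for parameter tracking, the number of verifier random strings (and hence hyperedges) is $n^{O(\ell)} = n^{O(\log(r/\varepsilon))}$; for each random string there is one query per prover, so the number of vertices is of the same order; and the label alphabet $L$ consists of truth assignments to at most $O(\ell)$ variables, giving $|L| \leq 2^{O(\log(r/\varepsilon))}$. The color set $C$ arises from projections of such assignments and satisfies the same bound. The reduction enumerates these objects explicitly in time $n^{O(\log(r/\varepsilon))}$.

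The main obstacle I expect is bookkeeping around Feige's Section~6 parameterization: one must verify that, once parallel repetition (or the $\ell$-fold clause-selection in Feige's construction) is tuned to $\ell = O(\log(r/\varepsilon))$, the soundness error is indeed at most $\varepsilon$ and not merely a constant, and that the resulting PCP still has the multi-prover structure compatible with the BCMN translation. Once these ingredients are assembled, no new ideas are required beyond careful parameter counting, and the theorem follows.
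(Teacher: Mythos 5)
Your proposal follows essentially the same route as the paper: it instantiates Feige's $r$-prover system with the Section~6 parameterization $\ell = O(\log(r/\varepsilon))$ (rather than the $\ell = \Theta(r)$ choice in BCMN), applies the same PCP-to-Hypergraph-Label-Cover translation (vertices are prover queries, hyperedges are verifier random strings, labels are prover answers, colors are assignments to the sampled variables), and performs the same parameter counting to get $n^{O(\log(r/\varepsilon))}$ size and $2^{O(\log(r/\varepsilon))}$ label and color sets. This matches the paper's argument in Appendix~\ref{sec:Feige}, including the soundness bound of the form $r^2 \cdot 2^{-C\ell} \leq \varepsilon$ and the existence of the required $\ell$-bit codewords.
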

See Appendix~\ref{sec:Feige} for further details on Theorem~\ref{thm:hardness-r-hypergraph-label-cover}.

\subsection{Series-Parallel Graphs}
Recall the definition of series-parallel graphs with source $s$ and sink $t$. We define series-parallel graphs recursively and for convenience we allow parallel edges in series-parallel graphs. 
A pair of vertices $s$ and $t$ connected by a single or multiple edges is a series-parallel graph of order 0.
Given series parallel graphs $G_1, \dots, G_t$ of order $k$, we consider their parallel and series compositions.
The parallel composition of the graphs has depth $k$; the series composition of the graphs has depth $k+1$.
Note that in our definition, we count only the number of series compositions, not the number of parallel ones.

\subsection{Inner Products}
We first present our results for the Euclidean norm. We will need two variants of the inner product and the $\ell_2$ norm. First, we use the standard, unscaled inner product on $\mathbb{R}^D$: $\langle v_1, v_2\rangle = \sum_{j=1}^D v_1(j)\cdot v_2(j)$, where $v_i(j)$ denotes the $j$-th coordinate of vector $v_i$. Accordingly, $\|v\|_2 = \langle v, v\rangle$.

Then, it will be convenient to use a scaled variant of the inner product in some spaces. In such cases, we fix a scaling parameter $\alpha$ for the space and write $\langle v_1, v_2\Erangle = \alpha \sum_{j=1}^D v_1(j)\cdot v_2(j)$. Correspondingly, we use a scaled $\ell_2$-norm: $\|v\|^2_2 = \langle v, v\Erangle = \alpha\sum_{j=1}^D v(j)^2$. We will never consider both the unscaled and scaled inner product on the same space.

We also consider tensor products of Euclidean spaces. As is standard, $\langle a_1\tensor b_1, a_2\tensor b_2\rangle = \langle a_1, a_2\rangle\cdot \langle b_1, b_2\rangle$. If spaces $U_1$ and $U_2$ have scaling parameters $\alpha_1$ and $\alpha_2$, then we assign $U\tensor V$ the scaling parameter $\alpha_1\alpha_2$ so that
$\langle a_1\tensor b_1, a_2\tensor b_2\Erangle = \langle a_1, a_2\Erangle\cdot \langle b_1, b_2\Erangle$ for $a_i \in U_1$ and $b_j \in U_2$.

\section{Vector Costs}\label{sec:vec-costs}
In this section, we describe a vector system that we will later use to define vector edge costs.
\begin{definition}\label{def : vector system}
Consider a collection of vectors $\calS = \{v_i^c: i\in [r], c\in [q]\}$ (all vectors $v_i^c$ are distinct). Let us say that $c$ is the color of vector $v_i^c$. $\calS$ is an $(r, q, p)$-vector system if it satisfies the following property. Consider $p$ vectors $u_1, \dots, u_p\in S$. The same vector may appear more than once in this list.
\begin{itemize}
    \item  If there are two distinct vectors $u_i \neq u_j$ of the same color in the list, then 
\begin{equation}\label{eq:vs-orthogonal}
\langle u_1,\dots, u_p\rangle = 0
\end{equation}
where $\langle u_1,\dots, u_p\rangle$ denotes the multilinear product of $u_1, \ldots, u_p$ (See Definition \ref{def:multilinear_product}). 
    \item Otherwise, 
\begin{equation}\label{eq:vs-non-orthogonal}
\langle u_1,\dots, u_p\rangle = r^{-t}
\end{equation}    
    where $t$ is the number of distinct vectors among $u_1,\dots, u_p$.
    In particular, for $t=1$, we have 
\begin{equation}
    \label{eq:vs-norm}
\|u\|_p^p = \langle u,\dots, u\rangle = r^{-1}.
\end{equation}
\end{itemize}
\end{definition}

\begin{lemma}\label{lem:vector-system}
Assume that $r$ is a power of 2 and $p\geq 2$ is an integer.
There exists an $(r,q,p)$-vector system with dimension $r^p$ and $q = (r^p-1)/(r-1)$. 
Further, it can be constructed in time polynomial in $q$.
\end{lemma}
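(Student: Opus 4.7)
My plan is to construct the vector system explicitly from the projective geometry of $\mathbb{F}_r^p$ and verify each part of Definition~\ref{def : vector system}. The key observation is that $q=(r^p-1)/(r-1)$ equals the number of one-dimensional subspaces of $\mathbb{F}_r^p$, equivalently the number of non-zero linear functionals $\phi\colon\mathbb{F}_r^p\to\mathbb{F}_r$ modulo $\mathbb{F}_r^*$. This provides a natural parametrization of the $q$ colors; the field $\mathbb{F}_r$ is well-defined because $r$ is a power of $2$.

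I would index the $r^p$ coordinates of the ambient space by $\mathbb{F}_r^p$, fix for each color $c$ a representative functional $\phi_c$, and start from the coset-indicator candidate
\[
v_i^c(x) \;=\; \tfrac{1}{r}\cdot\bigl[\phi_c(x) = i\bigr], \qquad x\in\mathbb{F}_r^p,\ i\in\mathbb{F}_r,
\]
using the Iverson bracket $[\cdot]$. Since the $r$ cosets of $\ker\phi_c$ partition $\mathbb{F}_r^p$, the $r$ vectors of a single color have disjoint supports, which immediately yields $\langle u_1,\ldots,u_p\rangle=0$ whenever the list contains two distinct same-color vectors. A direct computation gives $\|v_i^c\|_p^p = (1/r)^p\cdot r^{p-1} = r^{-1}$. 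For a list whose $t$ distinct vectors have colors with $\mathbb{F}_r$-linearly independent representatives $\phi_{c_1},\ldots,\phi_{c_t}$, the joint map $x\mapsto(\phi_{c_k}(x))_{k=1}^t$ is surjective onto $\mathbb{F}_r^t$ with fibers of size $r^{p-t}$, so the multilinear product collapses to $(1/r)^p\cdot r^{p-t}=r^{-t}$, exactly as required.

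The technical heart, and the main obstacle, is handling the case of $\mathbb{F}_r$-linearly \emph{dependent} representatives, which is forced to arise once $t\geq 3$. Here the plain indicator construction breaks: the intersection of cosets is either empty (when $(i_1,\ldots,i_t)$ is incompatible with the dependence) or strictly larger than $r^{p-t}$ (when it is compatible), producing the wrong multilinear product. The resolution will be to refine each $v_i^c$ from a bare coset indicator into a suitable combination of coset indicators weighted by additive characters of $\mathbb{F}_r^p$, so that character orthogonality forces the over-count on dependent tuples to cancel while leaving the correct value on independent tuples untouched. Once this refined construction is established, verification of every clause of Definition~\ref{def : vector system} reduces to a short character-sum calculation, and polynomial-time constructibility in $q$ is automatic because each coordinate of each vector is given by a closed-form formula using $\mathbb{F}_r$-arithmetic and character evaluations that cost only $\mathrm{polylog}(q)$ time apiece.
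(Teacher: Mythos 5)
Your construction is exactly the paper's: index the $r^p$ coordinates by $\mathbb{F}_r^p$, take the $q=(r^p-1)/(r-1)$ colors to be the parallel classes of affine hyperplanes (nonzero functionals modulo $\mathbb{F}_r^*$), and let $v_i^c$ be $1/r$ times the indicator of the $i$-th hyperplane in class $c$. Your verifications of same-color orthogonality, of $\|v_i^c\|_p^p=r^{-1}$, and of $\langle u_1,\dots,u_p\rangle=r^{-t}$ when the $t$ distinct normals are linearly \emph{independent} are correct and coincide with the paper's computation. You have also put your finger on a genuine subtlety: for $t\ge 3$ the normals can be dependent (e.g.\ $r=2$, $p=t=3$, normals $(1,0,0),(0,1,0),(1,1,0)$), and then $|H_1\cap\dots\cap H_t|$ is either $0$ or $r^{p-s}$ with $s=\mathrm{rank}<t$, not $r^{p-t}$; the paper's proof asserts $|H_1\cap\dots\cap H_t|=r^{p-t}$ for pairwise non-parallel hyperplanes without addressing this, which is valid only for independent normals.

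The problem is that the step you yourself label ``the technical heart'' is announced but never carried out, so the proposal does not prove the lemma. No character-weighted refinement is written down, none of its properties are verified, and the plan runs into a structural obstruction you do not mention: the cost vectors must have non-negative coordinates (this is part of the problem definition and is used explicitly in the no-case analysis, where non-negativity of every term in the sum over tuples is invoked). The nontrivial additive characters of $\mathbb{F}_{2^k}$ take values in $\{\pm 1\}$, so any nontrivial character-weighted combination of coset indicators has negative entries, and shifting to restore non-negativity destroys the product identities. You also do not establish that a system achieving the exact equality of Definition~\ref{def : vector system} on dependent color tuples exists at all. So while your diagnosis of where the naive indicator construction deviates from the stated definition is sharper than the paper's own writeup, the proof as proposed has a real gap where the paper simply (and without justification for $t\ge 3$) declares the computation done.
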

\begin{proof}
Fix $P = {\mathbb F}_r^p$.
For each $a = (a_1, \ldots, a_p) \in P \setminus \{(0,\ldots, 0)\}$ and $b \in {\mathbb F}_r$, we associate an affine hyperplane $H(a,b) = \{ z \in P : \langle a, z \rangle = b\}$ in $P$. Note that $(a,b)$ and $(a',b')$ define the same hyperplane $H = H(a,b) = H(a',b')$ if and only if $a' = \lambda a$ and $b' = \lambda b$ for some $\lambda \in \mathbb{F}_r^* \equiv \mathbb{F}_r\setminus \{0\}$.
Consequently, there are $r\cdot |P|/(r-1) = r\cdot (r^p-1)/(r-1) = r\cdot q$ such affine hyperplanes in $P$. 

Fix a direction $a \in P \setminus \{(0,\ldots, 0)\}$. Hyperplanes $H(a,b)$ with $b\in\mathbb{F}_r$ are parallel and partition space $P$: $P = \bigsqcup_{b\in \mathbb{F}} H(a,b)$.
Two directions $a$ and $a'$ define the same class of parallel hyperplanes if and only if vectors $a$ and $a'$ are collinear. 
All affine hyperplanes are divided into disjoint classes of parallel ones. The total number of classes is $q = (r^p-1)/(r-1)$. We refer to these classes as \emph{colors} and observe that there are $r$ affine hyperspaces of each color.

Fix an arbitrary ordering of the color classes and the affine hyperplanes in each color class. 
For each $i \in [r]$ and $c \in [q]$ we define a vector
\[
    v^c_i := \begin{cases}
        1/r & \text{if } z \in H^c_i \text{ and} \\
        0 & \text{otherwise.}
    \end{cases}
\]
where $H^c_i$ is the $i$th affine hyperplane in the $c$-th color class. 
Let $\mathcal{S} := \{v_i^c : i \in [r], c \in [q]\}$.

We will now show that $\mathcal{S}$ is a $(r,q,p)$-vector system by showing it satisfies both properties in the previous definition.
Fix a set of $p$ vectors $u_1, \ldots, u_p$ in $\mathcal{S}$.
Suppose there exist $u_i \neq u_j$ that correspond to the same color class.
Then for each coordinate $w$, $u_j(w) \cdot u_i(w) = 0$ and thus,
$$
    \langle u_1, \ldots, u_p \rangle = 0.
$$
Now suppose otherwise, and let $t$ be the number of distinct vectors among $u_1, \ldots, u_p$ which correspond to distinct, non-parallel affine hyperplanes $H_1, \ldots, H_t$.
Then by definition, 
$$
\langle u_1, \ldots, u_p \rangle = \frac{\left| H_1 \cap \ldots \cap H_t \right|}{r^p} = \frac{r^{p-t}}{r^p} = r^{-t}.
$$ 

Finally, we show that we can construct $\mathcal{S}$ in time polynomial in $q$. 
We can enumerate every affine hyperplane and assign it to the appropriate color class in time $O(rq)$ by enumerating every $a \in P \setminus \{(0,\ldots, 0)\}$ and $b \in {\mathbb F}_r$.
With the set of all color classes and affine hyperplanes in some fixed order, we can construct each vector in $\mathcal{S}$ in time $O(r^p)$. 
Since $|\mathcal{S}| = rq$, this completes the proof.
\end{proof}




\section{Hardness Reduction for \texorpdfstring{$\ell_2$}{l2}}
\label{sec:reduction-l2}
In this section, we will present our hardness result for $p=2$. In subsequent sections, we will generalize it to an arbitrary integer $p \geq 2$.
We will start with describing a basic reduction that lies in the core of our argument. This reduction alone shows only that the $\ell_2$-Shortest Path does not admit a $\sqrt{2 - \varepsilon}$-approximation in series-parallel graphs of order 1. However, later we will amplify this result by ``tensoring'' the instance of $\ell_2$-Shortest Path given by the basic reduction. 

\ifSODA
\pagebreak
\fi

\subsection{Basic Reduction}\label{subsec: basic reduction}
\paragraph{Reduction} Consider an instance $\mathcal{H} = (V, \mathcal{E})$ of $r$-Hypergraph Cover with labels $L$ and colors $C$. 
For every vertex $u$, we construct a block of $|L|$ parallel edges $\{e(u, l)\}_{l\in L}$, indexed by labels $L$. Then we serially compose all the blocks, in an arbitrary order, and obtain graph $G$ (see Figure~\ref{fig:basic}).
Observe that there is a one-to-one correspondence between assignments $\sigma$ for $\cal H$ and $s$-$t$ paths $P$ in $G$: assignment $\sigma$ corresponds to the path $P_\sigma$ formed by edges $\{e(u, \sigma(u)):u\in V\}$.

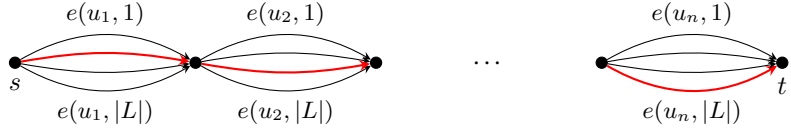
\begin{figure}
\centering
\begin{tikzpicture}[>=stealth, scale=1.2]
    \tikzset{
      vertex/.style={
         circle, draw, fill=black,
         inner sep=1.5pt
      }
    }

    \node[vertex, label=below:{$s$}] (s) at (0,0) {};
    \node[vertex] (a) at (2,0) {};
    \node[vertex] (b) at (4,0) {};

    \node[vertex] (c) at (6.5,0) {};
    \node[vertex, label=below:{$t$}] (t) at (8.5,0) {};

    \draw[->,bend left=30] (s) to node[above] {\small $e(u_1,1)$} (a);
    \draw[->,bend right=10] (s) to (a);
    \draw[->,bend right=30] (s) to node[below] {\small $e(u_1,|L|)$}(a);
    \draw[->,bend left=10,red,thick] (s) to (a);

    \draw[->,bend left=30] (a) to node[above] {\small $e(u_2,1)$} (b);
    \draw[->,bend left=10] (a) to (b);
    \draw[->,bend right=30] (a) to node[below] {\small $e(u_2,|L|)$}(b);
    \draw[->,bend right=10,red,thick] (a) to (b);

    \draw[->,bend left=30] (c) to node[above] {\small $e(u_n,1)$} (t);
    \draw[->,bend left=10] (c) to (t);
    \draw[->,bend right=30,red,thick] (c) to node[below] {\small\color{black} $e(u_n,|L|)$}(t);
    \draw[->,bend right=10] (c) to (t);
   
    \path (b) -- (c) node[midway] {\dots};
\end{tikzpicture}
\caption{This figure shows the basic reduction when $|L| = 4$. The red path corresponds to the assignment $\sigma$ with $\sigma(u_1) = 2$, $\sigma(u_2) = 3$, \dots,  $\sigma(u_n) = 4$}
\label{fig:basic}
\end{figure}

Now we are ready to assign vector costs to all the edges.
To construct an 
$(r, \lvert C \rvert, 2)$-vector system, we first construct a $(r, (r^{p'}-1)/(r-1),p')$-vector system $\mathcal{S}'$ using Lemma~\ref{lem:vector-system} where $p' \geq 2$ is the smallest integer such that $r^{p'} - 1 \geq \lvert C \rvert$.
We then let $\mathcal{S} \subseteq \mathcal{S}'$ be the subset of vectors that correspond only to the first $\lvert C \rvert$ color classes.

Let $d_0 = r^{p'}$ be the dimension of the vectors in the vector system $\mathcal{S}$ and $m = |\mathcal{E}|$.
Define $d=d_0 \cdot m$. All cost vectors will lie in $\mathbb{R}^d$. We divide all the coordinates of $\mathbb{R}^d$ into $m = |\mathcal{E}|$ groups with $d_0$ coordinates in each group. The groups are indexed by the hyperedges $h \in \mathcal{E}$ of $\mathcal{H}$. We define each cost vector $c(e(u,l))$ by specifying its restriction $c^h(e(u,l))$ to each of the groups $h$. In other words, we define $c^h(e(u,l))\in {\mathbb R}^{d_0}$ and then let 
$$c(e(u,l)) = \bigoplus_{h\in \mathcal{E}} c^h(e(u,l))\in {\mathbb R}^d.$$
Consider a hyperedge $h$. For $u\in h$,
let  $c^h(e(u,l)) = v_j^c$ where $j$ is such that $u\in V_j$ and $c = \pi_h^u(l)$.
 For $u\notin h$, $c^h(e(u,l)) = 0$.
 To simplify formulas, we use the rescaled inner product in $\mathbb{R}^d$ (as described in Section~\ref{sec:multilinear})
 $$\langle c(e_1), c(e_2)\Erangle = \frac{1}{m} \sum_{h\in \mathcal{E}} \langle c^h(e_1), c^h(e_2)\rangle =  {\mathbb{E}_h}\langle c^h(e_1), c^h(e_2)\rangle$$
 and
  $$\|c(e)\|_2^2 = {\mathbb{E}_h}[\|c^h(e)\|_2^2].$$
In the expectation $\mathbb{E}_h[\cdot]$, $h$ is sampled uniformly at random from $\cal E$.

\paragraph{Size Analysis}
The graph $G$ contains one block for every vertex of the instance $\mathcal{H}$ of $r$-Hypergraph Label Cover. Furthermore, each block contains $|L|$ edges, where $L$ is the set of labels in $\mathcal{H}$.
Lastly, the dimension of the cost vectors in the resulting instance of $\ell_p$-shortest paths is $d_0 \cdot |\mathcal{E}|$, where $d_0$ is the dimension of the vector system, and $\mathcal{E}$ is the set of hyperedges in $\mathcal{H}$.
To construct a $(r,|C|,2)$-vector system using Lemma~\ref{lem:vector-system}, we need to use this lemma with parameter $p'\geq 2$ such that $(r^{p'}-1)/(r-1) \geq |C|$. By choosing the smallest integer $p'$ satisfying these conditions, we get that the dimension $d_0$ of the vector system is upper bounded by $|C| \cdot r^2$, and so is the running time required to build the vector system.
Let $|\mathcal{H}|$ denote the maximum between the number of vertices in $\mathcal{H}$, the number of labels in $\mathcal{H}$, the number of colors in $\mathcal{H}$, and the number of hyperedges in $\mathcal{H}$.
Then, $|E(G)| \leq |\mathcal{H}|^2$, and the dimension of the cost vectors in the resulting $\ell_p$-shortest paths instance is upper bounded by $\max\{r^2, |\mathcal{H}|\cdot r\} \cdot |\mathcal{H}|$.

\begin{lemma}\label{lem:l2-basic}
Consider an instance $\mathcal{H}$ of $r$-Hypergraph Cover and the corresponding $\ell_2$-Shortest Path instance $G$. 
\begin{itemize}
    \item If $\cal H$ is a yes-instance, then there is an $s$-$t$ path of squared $\ell_2$-length $1$ in $G$.
    \item If $\cal H$ is a no-instance, then every $s$-$t$ path has squared $\ell_2$-length at least $2-(1/r + 2\varepsilon)$ in $G$.
\end{itemize}
\end{lemma}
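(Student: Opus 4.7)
The plan is to compute $\|\cost(P_\sigma)\|_2^2$ by decomposing it into per-hyperedge contributions and then classifying pairs of block-edges of $P_\sigma$ according to the inner-product identities of the $(r, |C|, 2)$-vector system.

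Since every $s$-$t$ path in $G$ equals $P_\sigma$ for a unique assignment $\sigma : V \to L$, and since $c^h(e(u, l)) = 0$ whenever $u \notin h$, expansion through the rescaled inner product gives
$$\|\cost(P_\sigma)\|_2^2 = \mathbb{E}_h\Bigl[\sum_{u_1, u_2 \in h}\langle v_{j_1}^{c_1}, v_{j_2}^{c_2}\rangle\Bigr],$$
where $j_i$ is determined by $u_i \in V_{j_i}$ and $c_i = \pi_h^{u_i}(\sigma(u_i))$. The diagonal contribution $u_1 = u_2$ equals $\|v_j^c\|_2^2 = r^{-1}$ per vertex by \eqref{eq:vs-norm}, summing to $1$ per hyperedge. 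For off-diagonal $u_1 \neq u_2 \in h$, the $r$-partiteness of $\cal H$ forces $j_1 \neq j_2$, so the two vectors $v_{j_1}^{c_1}$ and $v_{j_2}^{c_2}$ are distinct; by \eqref{eq:vs-orthogonal} their inner product vanishes when $c_1 = c_2$, and by \eqref{eq:vs-non-orthogonal} with $t = 2$ it equals $r^{-2}$ when $c_1 \neq c_2$. Hence the contribution of a fixed hyperedge $h$ is $1 + r^{-2} \cdot N_\sigma(h)$, where $N_\sigma(h)$ counts ordered pairs $(u_1, u_2)$ in $h$ with $u_1 \neq u_2$ and $c_1 \neq c_2$.

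For the yes-case, a completely satisfying $\sigma$ has all colors within each hyperedge equal, so $N_\sigma(h) = 0$ and the total is exactly $1$. For the no-case, by assumption at most an $\varepsilon$-fraction of hyperedges are weakly satisfied by $\sigma$; on each of the remaining (at least $(1-\varepsilon)$-fraction) hyperedges, no two distinct vertices share a color, so all $r(r-1)$ ordered off-diagonal pairs contribute $r^{-2}$, giving contribution $1 + r(r-1)\cdot r^{-2} = 2 - 1/r$. Weakly satisfied hyperedges still contribute at least the diagonal value $1$. Averaging over $h$,
$$\|\cost(P_\sigma)\|_2^2 \geq (1-\varepsilon)(2 - 1/r) + \varepsilon \geq 2 - 1/r - 2\varepsilon,$$
which is the claimed bound.

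The argument is essentially bookkeeping once the decomposition is in place; the main conceptual point is that $r$-partiteness of $\cal H$ forces $j_1 \neq j_2$, and hence distinct vector-system vectors, for any two distinct vertices of a hyperedge. This is precisely what activates the orthogonality clause \eqref{eq:vs-orthogonal} on weakly-satisfied pairs and produces the factor-$2$ gap between the satisfiable and unsatisfiable cases.
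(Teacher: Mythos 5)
Your proof is correct and follows essentially the same route as the paper's: expand $\|\cost(P_\sigma)\|_2^2$ as an expectation over hyperedges, reduce each hyperedge's contribution to inner products of vector-system vectors via \eqref{eq:vs-orthogonal}--\eqref{eq:vs-norm}, and split into the satisfied and not-weakly-satisfied cases. Your accounting is marginally sharper (you retain the contribution of at least $1$ from weakly satisfied hyperedges, giving $(1-\varepsilon)(2-1/r)+\varepsilon$, where the paper simply drops those terms), but both yield the stated bound.
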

\begin{proof}
Consider an assignment $\sigma$ and hyperedge $h = (u_1, \dots, u_r)$ with $u_j \in V_j$. Let $c_j$ be the color of $u_j$ in $h$ w.r.t.\  $\sigma$. Then
$$\Bigl\|\sum_{e\in P_{\sigma}} c^h(e)\Bigr\|_2^2 = \Bigl\|\sum_{j=1}^r c^h(e(u_j, \sigma(u_j)))\Bigr\|_2^2 = \Bigl\|\sum_{j=1}^r v_{j}^{c_j}\Bigr\|_2^2.$$

\noindent If $\sigma$ satisfies $h$, then all colors $c_j$ are equal; let $c= c_1 = \dots = c_r$. By \eqref{eq:vs-orthogonal}, all vectors $v_j^c$ are orthogonal, and by~\eqref{eq:vs-norm}, $\|v_j^c\|^2 = 1/r$. Therefore,
$$\Bigl\|\sum_{e\in P_{\sigma}} c^h(e)\Bigr\|_2^2 = \Bigl\|\sum_{j=1}^r v_j^c\Bigr\|_2^2 = r\cdot \frac{1}{r} = 1.$$

\noindent If $\sigma$ does not weakly satisfy $h$, then all colors $c_j$ are distinct. Then $\langle v_{j}^{c_j}, v_{j'}^{c_{j'}}\rangle = 1/r^2$ by \eqref{eq:vs-non-orthogonal}. Therefore, 
$$\Bigl\|\sum_{e\in P_{\sigma}} c^h(e)\Bigr\|_2^2 = \Bigl\|\sum_{j=1}^r v_j^{c_j}\Bigr\|_2^2 = 
\sum_{j=1}^r \|v_j^{c_j}\|_2^2 + \sum_{j\neq j'} \langle v_{j}^{c_{j}}, v_{j'}^{c_{j'}}\rangle
= r\cdot \frac1r + r(r-1) \cdot \frac{1}{r^2} = 2 - \frac1r.
$$

\noindent In the \textbf{yes-case}, there is an assignment $\sigma$ satisfying all $h \in \cal H$. We have,
\begin{equation}    \label{eq:base-case-yes}
\|\cost(P_\sigma)\|_2^2 = \Bigl\|\sum_{e\in P_\sigma}c(e)\Bigr\|_2^2 = {\mathbb{E}_h}\Bigl\|\sum_{e\in P_{\sigma}} c^h(e)\Bigr\|_2^2 = 1.
\end{equation}

\noindent In the \textbf{no-case}, for every assignment $\sigma$, at least a $1-\varepsilon$ fraction of the hyperedges is not even weakly satisfied. Thus,
\begin{equation}\label{eq:base-case-no}
\|\cost(P_\sigma)\|_2^2 = \Bigl\|\sum_{e\in P_\sigma}c(e)\Bigr\|_2^2 = {\mathbb{E}_h}\Bigl\|\sum_{e\in P_{\sigma}} c^h(e)\Bigr\|_2^2  \geq (1-\varepsilon) \left(2 - \frac1r\right) \geq 2 - 2\varepsilon - \frac1r.
\end{equation}
\end{proof}

\subsection{Series-parallel graphs of order 2 and higher}\label{sec: l2-general-reduction}
We next extend Lemma~\ref{lem:l2-basic} to series-parallel graphs of higher depths and thus, obtain stronger hardness results for $\ell_2$-Shortest Path. We now describe the main technical tool we employ in this section.

\paragraph{Tensoring}
Given two series-parallel graphs with vector edge costs $G_1=(V_1, E_1)$ and $G_2 = (V_2, E_2)$, we define a new graph $G_1\tensor G_2$. We start with $G_1$ and replace each edge of $G_1$ with a copy of $G_2$. The edges of $G_1\tensor G_2$ are indexed by pairs of edges $(e_1, e_2)$ with $e_1 \in E_1$ and $e_2\in E_2$ in the natural way. We define the cost $c(e_1, e_2)$ of edge $(e_1, e_2)$ as $c_1(e_1)\tensor c_2(e_2)$, where $c_1(e_1)$ and $c_2(e_2)$ are the costs of $e_1$ and $e_2$ in $G_1$ and $G_2$, respectively. 

Now define $G^{\tensor k} = \underbrace{G\tensor \cdots \tensor G}_{k \text{ times}}$. If $G$ is a series-parallel graph of order 1 (as in our reduction), then $G^{\tensor k}$ is of order $k$. To simplify the notation, we also define $G^{\tensor 0}$ as a single edge $(s,t)$ of cost $1\in \mathbb{R}^1$. Note that then $G^{\tensor 1} \equiv G =  G\tensor G^{\tensor 0}.$

Consider an $s$-$t$ path $P$ in $G_1\tensor G_2$. Path $P$ defines a ``projected" path $P' = (e_1, \dots, e_t)$ in $G_1$. For every edge $e_i\in P'$, let $P_i$ be the segment of $P$ inside the copy of $G_2$ that replaced edge $e_i$ in $G_1$. Then $P$ is uniquely specified by $(e_1, P_1),\dots, (e_t, P_t)$. If all $P_i$ are equal: $P'' = P_1 = \cdots = P_t$, we write $P = P' \tensor P''$. 

\paragraph{Colorful Hyperedges}
We will also need the following definition of a colorful hyperedge $h$ w.r.t.\ assignments $\sigma_1, \dots, \sigma_p$. Loosely speaking, $h$ is colorful w.r.t.\ $\sigma_1, \dots, \sigma_p$ if not only does no single assignment $\sigma_i$ weakly satisfy $h$ but also, in a precise sense explained below, all the assignments $\sigma_1,\dots, \sigma_p$ together do not weakly satisfy $h$.
\begin{definition}\label{def:colorful} Consider arbitrary assignments $\sigma_1, \dots, \sigma_p$. Let us say that $h$ is colorful w.r.t.\ if $\pi_h^u(\sigma_i(u)) \neq \pi_h^v(\sigma_j(v))$ for all $i,j\in [p]$ and all distinct $u,v\in h$. 
\end{definition}
\noindent Note that $h$ is colorful w.r.t.\ one assignment $\sigma$ if and only if $\sigma$ does not weakly satisfy $h$.

\begin{lemma} \label{lem:colorful} Consider a no-instance $\cal H$ of $r$-Hypergraph Cover and assignments $\sigma_1,\dots, \sigma_p$. Then at least a $1 - p^2 \varepsilon$ fraction of hyperedges $h\in\cal E$ are colorful.
\end{lemma}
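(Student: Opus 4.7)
The plan is a standard probabilistic averaging argument using a random hybrid of $\sigma_1, \ldots, \sigma_p$. Concretely, for each part index $a \in [r]$ I would draw $X_a$ uniformly from $[p]$, independently across $a$, and then define $\tau(u) := \sigma_{X_a}(u)$ for every $u \in V_a$. The point of randomizing per part, rather than per vertex or per assignment, is to exploit the $r$-partite structure of $\mathcal{H}$: any two distinct vertices of a single hyperedge automatically lie in distinct parts.

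The key claim is that every non-colorful hyperedge is weakly satisfied by $\tau$ with probability at least $1/p^2$. Indeed, if $h$ is not colorful, fix witnesses $i, j \in [p]$ and distinct $u, v \in h$ with $\pi_h^u(\sigma_i(u)) = \pi_h^v(\sigma_j(v))$. Letting $a, b$ be the (necessarily distinct) parts containing $u$ and $v$, the independence and uniformity of $X_a, X_b$ give $\Pr[X_a = i \text{ and } X_b = j] = 1/p^2$, and on this event $\tau(u) = \sigma_i(u)$ and $\tau(v) = \sigma_j(v)$, so the color identity for $h$ is witnessed by $\tau$.

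Let $N$ denote the number of non-colorful hyperedges. Summing the above bound over $h$ and applying linearity of expectation shows that the expected number of hyperedges weakly satisfied by $\tau$ is at least $N / p^2$, so some deterministic realization $\tau^*$ weakly satisfies at least $N / p^2$ hyperedges. Since $\mathcal{H}$ is a no-instance of $r$-Hypergraph Label Cover, $\tau^*$ weakly satisfies at most $\varepsilon |\mathcal{E}|$ hyperedges, yielding $N \leq p^2 \varepsilon |\mathcal{E}|$, which is exactly the claim.

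I do not anticipate any real technical obstacle. The whole argument rests on recognizing that the $r$-partite structure forces the witness pair to lie in two different parts, which is what allows independent per-part randomization to give the clean $1/p^2$ success probability. Any coarser construction, such as choosing one index uniformly and using $\sigma_i$ everywhere, would only guarantee probability $1/p$ when $i = j$ and probability $0$ when $i \neq j$, while a finer per-vertex randomization would introduce a spurious $r$-dependent loss and fail to reach the stated $p^2 \varepsilon$ bound.
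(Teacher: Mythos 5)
Your proof is correct and is essentially the paper's argument: a random hybrid of $\sigma_1,\dots,\sigma_p$, the observation that a non-colorful hyperedge is weakly satisfied by the hybrid with probability at least $1/p^2$, and averaging against the no-instance guarantee. The only difference is that the paper randomizes \emph{per vertex} (choosing $\sigma(u)$ uniformly from $\sigma_1(u),\dots,\sigma_p(u)$ independently for each $u$) rather than per part. Your closing remark that per-vertex randomization ``would introduce a spurious $r$-dependent loss and fail'' is mistaken: the witness pair consists of two \emph{distinct vertices} $u\neq v$, so their label choices are independent under per-vertex randomization as well, and the event $\sigma(u)=\sigma_i(u)\wedge\sigma(v)=\sigma_j(v)$ still has probability at least $1/p^2$; the choices at all other vertices are irrelevant to weak satisfaction via that pair. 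The $r$-partite structure is thus not actually needed for this lemma — only the distinctness of $u$ and $v$ is.
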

\begin{proof} Let $\delta$ be the fraction of colorful hyperedges. Define a new assignment $\sigma$ by choosing the value of $\sigma(u)$ uniformly at random among $\sigma_1(u), \dots, \sigma_p(u)$ for every $u$; the choices for all vertices $u$ are independent. If $h$ is not colorful then $\pi_h^u(\sigma_i(u)) = \pi_h^v(\sigma_j(v))$ for some distinct $u$ and $v$ and some $i$ and $j$. Then with probability at least $1/p^2$, we have $\sigma(u) = \sigma_i(u)$ and $\sigma(v) = \sigma_j(v)$. In that case, $\sigma$ weakly satisfies $h$. We get that $\sigma$ weakly satisfies at least a $\delta/p^2$ fraction of the hyperedges, in expectation. However, $\cal H$ is a no-instance and thus no assignment weakly satisfies more than an $\varepsilon$-fraction of the vertices. We get that $\delta/p^2 \leq \varepsilon^2$ and $\delta \leq p^2 \varepsilon$, as required.
\end{proof}

We are ready to state the main results of this section.

\begin{theorem} \label{thm:main} Consider an instance $\mathcal{H}$ of $r$-Hypergraph Cover and the corresponding $\ell_2$-Shortest Path instance $G$. 
\begin{itemize}
    \item If $\cal H$ is a yes-instance, then there is an $s$-$t$ path of squared $\ell_2$-length $1$ in $G^{\tensor k}$.
    \item  If $\cal H$ is a no-instance, then every $s$-$t$ path has squared $\ell_2$-length at least $\alpha_k\cdot (k+1)$ in $G^{\tensor k}$, where $\alpha_k = (1 - (1/r + 4\varepsilon))^k = 1 - O(k(1/r + \varepsilon))$.
\end{itemize}
\end{theorem}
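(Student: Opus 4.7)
The \emph{yes-case} is immediate. Take $P := P_\sigma^{\tensor k}$, the $k$-fold tensor of the path associated with any assignment $\sigma$ that satisfies $\mathcal{H}$. Since cost distributes over tensoring, $\cost(P) = \cost(P_\sigma)^{\tensor k}$, and hence $\|\cost(P)\|_2^2 = \|\cost(P_\sigma)\|_2^{2k} = 1^k = 1$ by the yes-case of Lemma~\ref{lem:l2-basic}.

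For the \emph{no-case}, I would argue by induction on $k$. The base case $k=0$ is trivial (a single unit-cost edge), and $k=1$ reduces to Lemma~\ref{lem:l2-basic} after checking that $2 - \tfrac1r - 2\varepsilon \geq 2\alpha_1$. For the inductive step, write $G^{\tensor(k+1)} = G \tensor G^{\tensor k}$ and decompose an arbitrary $s$-$t$ path $P$ into an outer path $P'$ in $G$, corresponding to some assignment $\sigma$, together with inner paths $Q_u$ in $G^{\tensor k}$ indexed by the vertices $u$ of $\mathcal{H}$. Setting $w_u := \cost(Q_u)$ and using bilinearity of the tensor product,
\[
\cost(P) = \sum_{u \in V} c(e_u) \tensor w_u,
\qquad \|\cost(P)\|_2^2 = \EE{h}{\Bigl\|\sum_{u \in h} c^h(e_u) \tensor w_u\Bigr\|_2^2}.
\]
For each hyperedge $h$ colorful w.r.t.\ $\sigma$ (a $(1-\varepsilon)$-fraction of all $h$ in the no-case), the vector-system identities from Definition~\ref{def : vector system} evaluate the inner quantity as
\[
\frac{1}{r^2}\Bigl\|\sum_{u \in h} w_u\Bigr\|_2^2 \;+\; \frac{r-1}{r^2}\sum_{u \in h}\|w_u\|_2^2,
\]
while non-colorful $h$ admit the weaker lower bound $\tfrac1r\sum_{u \in h}\|w_u\|^2$.

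The inductive hypothesis gives $\|w_u\|_2^2 \geq \alpha_k(k+1)$, and combined with the identity $\sum_u\|c(e_u)\|^2 = 1$ this contributes a baseline of $\alpha_k(k+1)$ to $\|\cost(P)\|_2^2$. To amplify this to $\alpha_{k+1}(k+2)$, I would lower-bound $\bigl\|\sum_{u \in h} w_u\bigr\|_2^2$ strictly above the trivial non-negativity bound $\sum_u \|w_u\|_2^2$ by applying the colorful lemma (Lemma~\ref{lem:colorful}) to the $r$ assignments induced at the outermost tensor layer of the inner paths $\{Q_u\}_{u \in h}$: for almost all next-layer hyperedges $h'$, these $r$ assignments are simultaneously non-weakly-satisfying, which forces the vectors $w_u$ to overlap on enough coordinates to produce the needed gain. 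Aggregating colorful and non-colorful contributions and absorbing all lower-order $1/r$ and $\varepsilon$ losses into the single multiplicative factor $(1 - 1/r - 4\varepsilon)$ yields $\|\cost(P)\|_2^2 \geq \alpha_{k+1}(k+2)$.

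The main obstacle is precisely this amplification step. The trivial inequality $\bigl\|\sum_{u \in h} w_u\bigr\|^2 \geq \sum_u \|w_u\|^2$ (from coordinate-wise non-negativity) only reproduces $\alpha_k(k+1)$ and gives no gain; extracting the extra factor requires genuinely exploiting the recursive structure of $G^{\tensor k}$, namely that the $r$ inner paths cannot be ``too orthogonal'' because the assignments induced by their outer layers violate the colorful condition on at most an $\varepsilon$-fraction of next-layer hyperedges. Tracking the constants so that the compounded multiplicative error matches exactly $(1 - 1/r - 4\varepsilon)^k$ is the most delicate part of the proof.
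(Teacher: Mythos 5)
Your yes-case and the overall skeleton of the no-case (induction on $k$, decomposition of a path in $G^{\tensor(k+1)}$ into an outer path in $G$ and inner paths in $G^{\tensor k}$, use of the vector-system identities and of Lemma~\ref{lem:colorful}) match the paper. But there is a genuine gap exactly where you locate the difficulty: your amplification step does not close. To gain the extra additive $+1$ you must lower-bound the cross terms $\langle w_u, w_v\Erangle = \langle \cost(Q_u), \cost(Q_v)\Erangle$ for \emph{arbitrary} pairs of $s$-$t$ paths $Q_u, Q_v$ in $G^{\tensor k}$. Your plan — apply Lemma~\ref{lem:colorful} to the assignments induced by the outermost tensor layer of the $Q_u$'s — only peels off one layer: after expanding $\langle \cost(Q_u),\cost(Q_v)\Erangle$ over pairs of outer edges, each term factors into a first-layer inner product (which the colorful lemma controls) times an inner product of costs of paths in $G^{\tensor(k-1)}$, which again needs a lower bound, and so on all the way down. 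A single application of the colorful lemma at one layer cannot produce the bound.

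The paper's fix is to \emph{strengthen the induction hypothesis} to a two-part statement: (1) $\|\cost(P)\|_2^2 \geq \alpha_k(k+1)$ for every path $P$ in $G^{\tensor k}$, and (2) $\langle \cost(P),\cost(Q)\Erangle \geq \alpha_k$ for every \emph{pair} of paths $P, Q$ in $G^{\tensor k}$. Item (2) has its own base case (two assignments $\sigma_1,\sigma_2$, Lemma~\ref{lem:colorful} with $p=2$ giving a $1-4\varepsilon$ fraction of colorful hyperedges, whence $\langle\cost(P_{\sigma_1}),\cost(P_{\sigma_2})\Erangle \geq 1-4\varepsilon$) and its own inductive step (factor each cross term as $\langle c(e_i),c(e_j')\Erangle\cdot\langle\cost(P_i),\cost(Q_j)\Erangle$ and apply Item (2) at level $k-1$ together with the level-$1$ bound for the projected paths). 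With Item (2) in hand, Item (1) follows by exactly the computation you set up, bounding $\sum_{i\neq j}\langle c(e_i),c(e_j)\Erangle \geq 1 - 1/r - 2\varepsilon$ via Lemma~\ref{lem:l2-basic} and each $\langle\cost(P_i),\cost(P_j)\Erangle \geq \alpha_{k-1}$ via Item (2). Your write-up is missing this second invariant, which is the key idea of the proof; without it the recursion you describe never terminates in a usable bound.
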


By choosing parameters $r$ sufficiently large and $\varepsilon > 0$ sufficiently small in Theorem~\ref{thm:hardness-r-hypergraph-label-cover}, we get the following corollary that yields Theorem~\ref{thm:main-pvsnp} for $p=2$.

\begin{corollary} Assume $P\neq NP$. For every $k\geq 1$, the $\ell_2$-Shortest Path problem does not admit a $\sqrt{k + 1} - \varepsilon$ approximation in series-parallel graphs of order $k$ (for every $\varepsilon > 0$).  Consequently, it
does not admit a constant factor approximation in series-parallel graphs.
\end{corollary}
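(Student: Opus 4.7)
The plan is to derive the corollary by composing the NP-hardness reduction for $r$-Hypergraph Label Cover (Theorem~\ref{thm:hardness-r-hypergraph-label-cover}) with the construction of Theorem~\ref{thm:main}. Fix $k \geq 1$ and $\varepsilon' > 0$ (the approximation factor to rule out is $\sqrt{k+1} - \varepsilon'$). Since $\alpha_k = (1-1/r-4\varepsilon)^k \to 1$ as $r \to \infty$ and $\varepsilon \to 0$ with $k$ held fixed, I would first choose constants $r$ and $\varepsilon$ so that $\alpha_k(k+1) > (\sqrt{k+1}-\varepsilon')^2$.

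Given a 3-SAT instance $\phi$, I would then invoke Theorem~\ref{thm:hardness-r-hypergraph-label-cover} with these parameters to obtain an $r$-Hypergraph Label Cover instance $\mathcal{H}$ of size polynomial in $|\phi|$, apply the reduction of Subsection~\ref{subsec: basic reduction} to produce a series-parallel graph $G$ of order $1$ with vector edge costs, and form $G^{\tensor k}$, a series-parallel graph of order $k$. The entire pipeline runs in polynomial time. By Theorem~\ref{thm:main}, a satisfiable $\phi$ yields an $s$-$t$ path in $G^{\tensor k}$ of squared $\ell_2$-length $1$, whereas an unsatisfiable $\phi$ forces every $s$-$t$ path to have squared $\ell_2$-length at least $\alpha_k(k+1)$. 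The ratio between the two optima is therefore at least $\sqrt{\alpha_k(k+1)} > \sqrt{k+1} - \varepsilon'$, so any polynomial-time $(\sqrt{k+1}-\varepsilon')$-approximation for $\ell_2$-Shortest Path would decide 3-SAT, contradicting $P \neq NP$. The ``consequently'' part of the corollary then follows by letting $k$ grow: for any constant $C$, choose $k$ with $\sqrt{k+1}-1 \geq C$ to rule out $C$-approximation.

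The main obstacle is proving Theorem~\ref{thm:main} rather than the corollary itself. For the no-case of that theorem, I would argue by induction on $k$ with a strengthened hypothesis: simultaneously, every $s$-$t$ path $P$ in $G^{\tensor k}$ satisfies $\|c(P)\|_2^2 \geq \alpha_k(k+1)$, and every pair of $s$-$t$ paths $P_1, P_2$ satisfies $\langle c(P_1), c(P_2)\rangle \geq \alpha_k$. Using the decomposition $G^{\tensor k} = G \tensor G^{\tensor(k-1)}$, a path $P$ projects to an outer assignment $\sigma$ on $\mathcal{H}$ plus inner sub-paths $P_u$ in $G^{\tensor(k-1)}$ with cost vectors $w_u$; for an outer hyperedge $h$ colorful with respect to $\sigma$ (a $1-\varepsilon$ fraction by the no-instance property), the restricted contribution is $\tfrac{1}{r}\sum_{u \in h}\|w_u\|_2^2 + \tfrac{1}{r^2}\sum_{u\neq v \in h}\langle w_u, w_v\rangle$, so both parts of the IH are needed to reach the target $\alpha_k(k+1)$: using only norm bounds loses a factor on the order of $k$. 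The inner-product part of the IH propagates through the tensoring via Lemma~\ref{lem:colorful} applied with $p = 2$ (a $1-4\varepsilon$ fraction of outer hyperedges are colorful with respect to any pair of top-level assignments), combined with the non-negativity of our cost vectors, which keeps contributions from non-colorful outer hyperedges non-negative.
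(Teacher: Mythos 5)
Your proposal is correct and follows essentially the same route as the paper: the corollary is obtained by fixing $k$, choosing $r$ large and $\varepsilon$ small enough that $\alpha_k(k+1) > (\sqrt{k+1}-\varepsilon')^2$, and composing the Label Cover hardness (Theorem~\ref{thm:hardness-r-hypergraph-label-cover}) with the gap of Theorem~\ref{thm:main}; your sketch of Theorem~\ref{thm:main} itself also matches the paper's proof, using the same strengthened induction hypothesis (norm bound plus pairwise inner-product bound $\geq \alpha_k$), the same tensor decomposition $G^{\tensor k} = G \tensor G^{\tensor(k-1)}$, Lemma~\ref{lem:colorful} with $p=2$, and non-negativity to discard non-colorful hyperedges. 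Your per-hyperedge accounting of the norm term is a cosmetic reorganization of the paper's computation, which instead bounds $\sum_{i\neq j}\langle c(e_i),c(e_j)\Erangle$ globally via Lemma~\ref{lem:l2-basic} applied to the projected path.
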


\begin{proof}[Proof of Theorem~\ref{thm:main}]
Assume first that $\cal H$ is a yes-instance. Let $P = P_{\sigma}$ be an $s$-$t$ path in $G$ of squared length $r$.
Consider the path $P^{\tensor k}$ in $G^{\tensor k}$. It consists of edges $\{(e_1,\dots, e_k):e_1,\dots, e_k \in P\}$.
Therefore, $\cost(P^{\tensor k}) = \cost(P)^{\tensor k}$. Since $\|\cost(P)\|^2 = 1$, we get $\|\cost(P^{\tensor k})\|^2 = 1$, as required.

Now, consider the case when $\cal H$ is a no-instance.
We prove the following by induction on $k$:
\begin{enumerate}
\item $\|\cost(P)\|_2^2 \geq \alpha_k \cdot(k+1)$ for every $s$-$t$ path $P$ in $G^{\tensor k}$ \label{enum : first : thm : main}
\item $\langle \cost(P),\cost(Q)\Erangle \geq \alpha_k$ for every two $s$-$t$ path $P$ and $Q$ in $G^{\tensor k}$. \label{enum : second : thm : main}
\end{enumerate}

\paragraph{Base Case} We proved the first item for $k=1$ in Lemma~\ref{lem:l2-basic}. Let us prove the second item for $k=1$. Consider two paths $P_{\sigma_1}$ and $P_{\sigma_2}$ in $G$. By Lemma~\ref{lem:colorful} (applied with $p=2$), at least a $1-4\varepsilon$ fraction of hyperedges are colorful w.r.t.\  $\sigma_1$ and $\sigma_2$. 
Consider a colorful hyperedge $h=(u_1,\dots, u_r)$. Let $c_1, \dots, c_r$ and $c_1',\dots, c_r'$ be the colors of $u_1,\dots, u_r$ in $h$ w.r.t.\  $\sigma_1$ and $\sigma_2$, respectively. If $i\neq j$, then $c_i \neq c_j'$ and   $\langle v_i^{c_i},v_j^{c_j'}\rangle = 1/r^2$ by~\eqref{eq:vs-non-orthogonal}. If $i = j$, then either (i) $c_i = c_j'$, $v_i^{c_i} = v_j^{c_j'}$, and
$\langle v_i^{c_i}, v_j^{c_j'}\rangle = 1/r$ or (ii)
$c_i\neq c_j'$ and
$\langle v_i^{c_i}, v_j^{c_j'}\rangle = 1/r^2$. In every case, $\langle v_i^{c_i}, v_j^{c_j'}\rangle \geq 1/r^2$ for all $i$ and $j$.
Thus, we have
$$\left\langle\sum_{e_1\in P_{\sigma_1}} c^h(e_1), \sum_{e_2\in P_{\sigma_2}} c^h(e_2)\right\rangle = \left\langle\sum_{i=1}^r v_i^{c_i}, \sum_{j=1}^r v_{j}^{c'_{j}} \right\rangle \geq r^2 \cdot \frac{1}{r^2} = 1.
$$
Hence,
$$\langle \cost(P_1),\cost(P_2)\Erangle \geq 1 - 4\varepsilon \geq \alpha_1.$$

\paragraph{Induction Step}
Consider an $s$-$t$ path $P$ in $G^{\tensor k} = G \tensor G^{\tensor k - 1}$. Write it as $(e_1,P_1),\dots, (e_n, P_n)$, where edges $e_1,\dots, e_n$ form a path $P'=P_{\sigma}$ in $G$, and $P_1, \dots, P_n$ are paths in $G^{\tensor k - 1}$.
We have,
\begin{align*}\|\cost(P)\|^2_2 &= \|\sum c(e_i)\tensor \cost(P_i)\|_2^2 \\&= 
\sum_{i=1}^r \|c(e_i)\|_2^2 \cdot \|\cost(P_i)\|^2_2 +
\sum_{i\neq j} \langle c(e_i), c(e_j)\Erangle \cdot \langle \cost(P_i), \cost(P_j)\Erangle
\end{align*}
where $\cost(P_i)$ is the cost vector of the path $P_i$ in $G^{\tensor k-1}$. Note that
\begin{itemize}
    \item  by~\eqref{eq:vs-non-orthogonal}, $\|c(e_i)\|_2^2=1/r$
    \item by Lemma~\ref{lem:l2-basic},
\begin{align*}
    \sum_{i\neq j} \langle c(e_i), c(e_j)\Erangle &= \sum_{i,j} \langle c(e_i), c(e_j)\Erangle - \sum_i \|c(e_i)\|^2 =\|\cost(P')\|^2 - 1 \\
    &\geq 2-(1/r+2\varepsilon) -1 = 1 -(1/r+2\varepsilon).
\end{align*}
    \item by the induction hypothesis:
\begin{align*}    
    \|\cost(P_i)\|^2_2 &\geq \alpha_{k-1} k\\
    \langle \cost(P_i), \cost(P_j)\Erangle &\geq \alpha_{k-1}.
\end{align*}
\end{itemize}
We get,
$$\|\cost(P)\|^2_2 \geq r \cdot \frac{1}{r}\cdot \alpha_{k-1} k + \alpha_{k-1}\sum_{i\neq j}\langle c(e_i), c(e_j)\Erangle \geq \alpha_{k-1}(k + 1 - (1/r + 2\varepsilon))\geq \alpha_k \cdot(k+1).$$
This completes the proof of Item~\ref{enum : first : thm : main}. 

Now consider two paths $P$ and $Q$. Write them as $(e_1, P_1), \dots, (e_n, P_n)$ and $(e_1', Q_1), \dots, (e_n', Q_n)$. We denote the path formed by edges $e_i$ by $P'$ and by edges $e_i'$ by $Q'$.

\begin{align*}
\langle \cost(P), \cost(Q)\Erangle &= \sum_{i,j} \langle c(e_i), c(e_j')\Erangle\cdot \langle \cost(P_i), \cost(Q_j)\Erangle \geq \alpha_{k-1} \sum_{i,j} \langle c(e_i), c(e_j')\Erangle \\
&=\alpha_{k-1} \cdot \langle \cost(P'), \cost(Q')\Erangle \geq
(1 - 4\varepsilon)\alpha_{k-1} \geq\alpha_k,
\end{align*}
as required to show Item~\ref{enum : second : thm : main}.
\end{proof}

\section{Setup for the \texorpdfstring{$p > 2$}{p > 2} Regime}
\label{sec:setup-p-greater-than-2}
In this section, we present additional definitions that we need to analyze the general case with an arbitrary integer $p > 2$.

\subsection{Partitions and Higher-order Bell Numbers}
\label{sec:bell}
In our reduction for $p > 2$, we make extensive use of labeled partitions and higher-order Bell numbers. We now introduce the necessary definitions.
\begin{definition} We consider the set $[p]$ and its partitions into disjoint, non-empty sets, called parts.  We denote the set of all partitions of $[p]$ by $\calP$. Given a partition $T$ of $[p]$, let $\calP(T)$ be the set of all subdivisions (refinements) of $T$.
\end{definition}
\begin{definition}
A $k$-level partition of $[p]$ is a hierarchical partition of $[p]$ with $k$ levels. Formally, it is a tuple $(T_1,\dots, T_k)$ where $T_1\in \mathcal{P}$ and $T_i \in \mathcal{P}(T_{i-1})$ for $i \in\{2,\dots,k\}$.
\end{definition}
\begin{definition}[see Definition 2.1 in~\cite{MOT24} for discussion]
The order-$k$ Bell number $\bell_k(p)$ is the number of $k$-level hierarchical partitions of $[p]$. Define $\bell_0(p) = 1$ for all $p$.
\end{definition}
\begin{remark}
We note that higher-order Bell numbers (not to be confused with ordered Bell numbers) are also known as multidimensional Bell numbers, iterated Bell numbers, and iterated exponential integers.
\end{remark}
For example, $\bell_2(3) = 12$, since set $[3]=\{1,2,3\}$ has twelve 2-level partitions:
$$
\frac{\text{first level partition}}{\text{second level refinement}}:
\begin{array}{cccccc}
\frac{(1,2,3)}{(1)(2)(3)}&
\frac{(1,2,3)}{(1,2)(3)}&
\frac{(1,2,3)}{(1,3)(2)}&
\frac{(1,2,3)}{(2,3)(1)}&
\frac{(1,2,3)}{(1,2,3)}&
\frac{(1,2)(3)}{(1,2)(3)}\\[5mm]
\frac{(1,2)(3)}{(1)(2)(3)}&
\frac{(1,3)(2)}{(1,3)(2)}&
\frac{(1,3)(2)}{(1)(2)(3)}&
\frac{(2,3)(1)}{(2,3)(1)}&
\frac{(2,3)(1)}{(1)(2)(3)}&
\frac{(1)(2)(3)}{(1)(2)(3)}
\end{array}
$$
It will be convenient to use the following generalization of order-$k$ Bell numbers.
\begin{definition}
For a partition $T\in\calP$, let $\bell_k(T)$ be the number of $k$-level partitions of $p$ such that the first level partition is a refinement of $T$. Note that $\bell_k(p)= \bell_k(\{[p]\})$, where $\{[p]\}$ is the partition of $[p]$ into one part. Define $\bell_0(T)=1$ if $T = \{\{1\},\dots,\{p\}\}$, and $\bell_0(T) = 0$ otherwise.
\end{definition}

We will need the following observations.
\begin{itemize}
\item For all $k \geq 1$,
$$\bell_k(T) = \prod_{S\in T} \bell_k(|S|)$$
since each subdivision of $T$ can be obtained by independently partitioning each part $S\in T$.
\item For all $k \geq 1$,
$$\bell_k(T) = \sum_{T'\in \calP(T)}\bell_{k-1}(T').$$
Indeed, if $k\geq 2$, then the number of $k$-level partitions with the first level partition $T$ and second level partition $T'$ equals $\bell_{k-1}(T')$ when $T'\in \mathcal{P}(T)$. If $k=1$, then both sides are equal to 1.
\end{itemize}

\paragraph{Asymptotic formulas for higher-order Bell numbers}

Skau and Kristensen~\cite{skau2019asymptotic} showed that $\bell_k(p)\sim \frac{p!}{2^{p-1}}k^{p-1}$ when $p$ is fixed and $k \to \infty$. Therefore, for every fixed $p$ and all sufficiently large $k \geq \widehat k(p)$,  
\[
\bell_k(p)^{1/k} = \Theta(p k^{1-1/p}).
\]

Makarychev, Ovsiankin, and Tani established an upper bound on $\bell_k(p)^{1/p}$ that holds for all $p$ and~$k$ \cite{MOT24}. In Section~\ref{sec:bounds-on-bell}, we complement their result with an upper bound matching the lower bound up to a constant factor (slightly sharpening their upper bound in the regime where $k \approx \log^* p$). To state these results, we first recall the definitions of the iterated logarithm and the $\log^*$ function.

\begin{definition}
The iterated logarithm $\log_e^{(k)} p = \log_e\log_e\cdots\log_e p$ is the $k$-th iteration of the $\log_e(\cdot)$ function; $\log^{(0)} p = p$. We define $\log^*_e p$ as the smallest $k$ such that $\log_e^{(k)} p \leq 1$.
\end{definition}

The asymptotic growth of $\bell_k(p)^{1/p}$ is given by the following theorem.
 \begin{theorem}
    \label{thm:bkp_lower_bound}
    Let $k_0 = \log_e^* p - 1$.
    The following bounds on $\bell_k(p)$ hold.
    \begin{itemize}
        \item For $k \leq k_0$,
         \[
            \bell_k(p)^{1/p} = \Theta(p/\log_e^{(k)}(p)) .
         \]
         \item For $k > k_0$,
         \begin{align*}
            \bell_k(p)^{1/p} &= \Theta(p\cdot (k - k_0)^{1-1/p})\\
            &= \Theta(p\cdot \min(k^{1-1/p},k-k_0)).
         \end{align*}    
    \end{itemize}
 \end{theorem}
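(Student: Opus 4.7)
The plan is to prove matching upper and lower bounds on $\bell_k(p)^{1/p}$ via induction on $k$, based on the recursion $\bell_k(p) = \sum_{T \in \calP} \prod_{S \in T} \bell_{k-1}(|S|)$, with the analysis splitting along the two stated regimes. For the upper bound, I would essentially follow the inductive scheme of Makarychev, Ovsiankin, and Tani~\cite{MOT24}: upper bound $\bell_k(p)$ by the maximum summand multiplied by the (Bell-number) count of partitions of $[p]$, and apply the inductive hypothesis on parts of size roughly $\log_e p$. A small bookkeeping refinement near $k = k_0$, interpolating between the iterated-log expression $p/\log_e^{(k)} p$ and the Skau--Kristensen-type expression $p(k-k_0)^{1-1/p}$, sharpens the upper bound in the transition regime.

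For the lower bound in Regime~1 ($k \leq k_0$), I would induct on $k$. The base case $k = 1$ is the classical $B_p^{1/p} = \Omega(p/\log_e p)$, obtained from $B_p \geq \frac{p!}{(q!)^m\,m!}$ with $q = \lceil \log_e p \rceil$ and $m = \lfloor p/q \rfloor$, via Stirling's formula. For the inductive step, I restrict the sum defining $\bell_k(p)$ to the $\frac{p!}{(q!)^m\,m!}$ partitions of $[p]$ into $m$ equal parts of size $q$ to obtain
\[
\bell_k(p) \;\geq\; \frac{p!}{(q!)^m\,m!}\cdot \bell_{k-1}(q)^m.
\]
Choosing $q = \lceil \log_e p \rceil$ (so that $\log_e^{(k-1)} q = \log_e^{(k)} p$ and $m = \lfloor p/q \rfloor$), the multinomial factor contributes $\Theta(p/\log_e p)$ after the $p$-th root (the same Stirling computation as in the base case), and the inductive hypothesis $\bell_{k-1}(q)^{1/q} = \Omega(\log_e p/\log_e^{(k)} p)$ supplies the remaining factor. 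Their product is $\Omega(p/\log_e^{(k)} p)$, as desired.

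For the lower bound in Regime~2 ($k > k_0$), the main contribution comes from counting \emph{saturated-chain} $k$-level partitions. Fix any saturated chain $\{[p]\} = S_1 \prec S_2 \prec \cdots \prec S_p = \{\{1\},\ldots,\{p\}\}$ in the partition lattice of $[p]$; a direct count (merging two blocks at each step, from $p$ blocks down to $1$) gives $N_p = \prod_{i=2}^{p}\binom{i}{2} = p!(p-1)!/2^{p-1}$ such chains. Paired with any strictly increasing sequence $1 = i_1 < i_2 < \cdots < i_p \leq k$ of transition indices (there are $\binom{k-1}{p-1}$ such), one obtains a distinct $k$-level partition by setting $T_j = S_\ell$ for $i_\ell \leq j < i_{\ell+1}$ (with the convention $i_{p+1} = k+1$). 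This yields $\bell_k(p) \geq N_p \binom{k-1}{p-1}$, and a short Stirling computation (with $N_p^{1/p} = \Theta(p^2)$ and $\binom{k-1}{p-1}^{1/p} = \Theta(k^{1-1/p}/p)$) gives $\bell_k(p)^{1/p} = \Omega(p\cdot k^{1-1/p})$ for $k \geq p$. Since $k^{1-1/p} \geq (k-k_0)^{1-1/p}$, this matches the target in this range. For the intermediate range $k_0 < k < p$, I would combine the Regime~1 carryover $\bell_k(p) \geq \bell_{k_0}(p) = (\Omega(p))^p$ with a partial-chain bound $\bell_k(p) \geq M(t,p)\binom{k-1}{t-1}$, where $M(t,p)$ counts strictly refining chains of length $t$ in the partition lattice, optimizing over $t$.

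The main obstacle will be a clean treatment of the intermediate range $k_0 < k < p$, where neither the Regime~1 carryover nor the fully saturated chain count is individually sharp. Estimating $M(t,p)$ for general $t < p$ (or, alternatively, exhibiting a tailored family of $k$-level partitions for each $k$) and optimizing $t$ against $k$ is the main combinatorial task. Once this is handled, the remaining pieces reduce to routine Stirling manipulations along the iterated-logarithm chain, and to verifying that the implicit constants are uniform in both $p$ and $k$ so that the asserted $\Theta$-bound holds jointly.
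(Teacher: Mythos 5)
Your Regime~2 argument for $k \geq p$ (saturated chains in the partition lattice times $\binom{k-1}{p-1}$ choices of transition levels) is correct and is a clean counterpart to the paper's construction, which instead counts binary $(\Delta,q)$-trees with freely chosen internal-node heights. But the theorem is not established in the range you yourself flag as open, $k_0 < k < p$, and this is not a routine loose end: the carryover $\bell_k(p) \geq \bell_{k_0}(p)$ only gives $\Omega(p)$, while the target is $\Omega\bigl(p\,(k-k_0)^{1-1/p}\bigr)$, which is as large as $p^{3/2}$ already at $k = k_0 + \sqrt{p}$; and you give no estimate for $M(t,p)$ nor a verification that optimizing $t$ interpolates correctly. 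The paper resolves exactly this range by a hybrid construction: it builds a single ``nice'' $(k_0,p')$-tree on $p' \approx p\cdot 2^{-(k-k_0)/2}$ groups and attaches to each group a family of roughly $(k-k_0)^{q_i-1}/(8q_i)$ binary trees of depth $k-k_0$ on $q_i \approx p/p'$ leaves (Claim~\ref{claim:Q-trees} and Corollary~\ref{cor:bell-lower-bound-in-terms-of-Q}); i.e., the chain construction is applied \emph{within groups} of the right size rather than to all of $[p]$ with a shortened chain. Until you supply the analogue of this step, item~2 is unproved in its main regime.

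There is also a quantitative flaw in your Regime~1 induction. The recursion $\bell_k(p) \geq \frac{p!}{(q!)^m m!}\,\bell_{k-1}(q)^m$ with $q = \lceil \log_e p\rceil$, $m = \lfloor p/q\rfloor$ is valid, but the multinomial factor satisfies $\bigl(\frac{p!}{(q!)^m m!}\bigr)^{1/p} \sim \frac{p}{e\,q}$ (not $\frac{p}{q}$), while the inherited factor $\bell_{k-1}(q)^{qm/p}$ carries the previous constant essentially to the first power since $qm/p \approx 1$. Hence the implied constant degrades by a factor of about $e$ at every level, yielding only $\bell_k(p)^{1/p} = \Omega\bigl(p/(e^{k}\log_e^{(k)} p)\bigr)$; since $k$ ranges up to $k_0 = \log_e^* p - 1$, the loss $e^{k}$ is unbounded in $p$ and the asserted uniform $\Theta(p/\log_e^{(k)} p)$ does not follow. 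This is why the paper does not equipartition at every level: in Lemma~\ref{lem:automorphism-count} the recursive subtree sits on only $p' = \lfloor p/\log_e^{(k)} p\rfloor$ leaves, so its transformation count enters the $p$-th root raised to the power $p'/p = 1/\log_e^{(k)} p \leq 1/e$, and the accumulated constants telescope to the fixed $C = 10$ via $(Ce)^{1/e}e \leq C$. Your construction cannot be rescued by bookkeeping alone; the tree shape itself must change. (Separately, your upper-bound plan is only a pointer to \cite{MOT24}; the paper actually derives it from the generating-function recurrence $f_{i+1}(x) = \exp(f_i(x)-1)$ of \cite{bell1938iterated}, choosing $x = \log_e^{(k)} p$ for $k \leq k_0$ and $x = \log_e(1+1/M)$ for $k > k_0$.)
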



\subsection{Multilinear Products}\label{sec:multilinear}
First, we introduce a multilinear generalization of the standard inner product. As in the Euclidean case, we define both unscaled and scaled variants of the product and the $\ell_p$ norm to simplify notation. In each space, we will consistently use either the unscaled or the scaled variant, but not both.

\begin{definition}
\label{def:multilinear_product}
The unscaled multilinear product of vectors $v_1,\dots, v_p$ in $\mathbb{R}^{D}$ is given by
$$\langle v_1, \dots, v_p\rangle = \sum_{j=1}^D\prod_{i=1}^p v_i(j),$$
where $v_i(j)$ is the $j$-th coordinate of vector $v_i$.
We will also denote this product by  $\bigodot_{i=1}^p v_i$.
\end{definition}
Importantly, 
$$\|v\|_p^p = \langle v,\dots, v\rangle\qquad
\text{and}\qquad\bigodot_{i=1}^p (a_i \tensor b_i) = \bigodot_{i=1}^p a_i \cdot \bigodot_{i=1}^p b_i.$$
\begin{definition} We will associate a scaling factor $\alpha$ with some spaces $U$ and define the scaled multilinear product of vectors $v_1,\dots, v_p\in U$ as
$$
\langle v_1, \dots, v_p\Erangle = \alpha \sum_{j}\prod_{i=1}^p v_i(j).
$$
We will also denote this product by  $\Ebigodot_{i=1}^p v_i$.
Further, we will also consider the scaled $\ell_p$-norm defined as $\|v\|_\infty = \langle v,\dots,v\Erangle= \bigl(\alpha\sum_j v(j)^p\bigr)^{1/p}$.
However, we will never consider a scaled variant of the $\ell_{\infty}$-norm.
\end{definition}
\noindent  Given two  spaces $U_1$ and $U_2$ with scaling factors $\alpha_1$ and $\alpha_2$, we define the scaling factor in $U_1\tensor U_2$ to be $\alpha_1\alpha_2$ so that $\langle a_1\tensor b_1, \dots,a_p \tensor b_p\Erangle = \langle a_1,\dots,a_p\Erangle \cdot \langle b_1,\dots,b_p\Erangle$ for $a_i\in U_1$ and $b_j\in U_2$.

As in the case of $p=2$, we use the unscaled multilinear product in space $\mathbb{R}^{d_0}$ and a rescaled product in space $\mathbb{R}^{d}$ with $\alpha = 1/m$:
 $$\langle c(e_1),\dots, c(e_p)\Erangle = \frac{1}{m} \sum_{h\in \mathcal{E}} \langle c^h(e_1), \dots, c^h(e_p)\rangle =  {\mathbb{E}_h}\langle c^h(e_1),\dots, c^h(e_p)\rangle$$
 and accordingly $\|c(e)\|_p^p = {\mathbb{E}_h}[\|c^h(e)\|_p^p]$.


\section{Hardness Results for the General \texorpdfstring{$\ell_p$}{lp}-Shortest Path}\label{sec:analysis-for-general-p}
We now extend the results from Section~\ref{sec:reduction-l2} to all $\ell_p$-norms with integer $p\geq 2$, proving Theorem~\ref{thm:main-pvsnp}.
We use the same reduction as in Section~\ref{sec:reduction-l2} except that we define edge costs using vectors $v_i^c$ from an $(r, q, p)$-vector system $\calS$ rather than the $(r,q,2)$-vector system we used previously.

\subsection{The analysis of the reduction for an arbitrary \texorpdfstring{$p$}{p}}

We prove the following analogue of Theorem~\ref{thm:main}.
\begin{theorem} \label{thm:main_general} Fix $p 
\geq 2$ and consider an instance $\cal H$ of $r$-Hypergraph Label Cover and the corresponding instance $G$. 
\begin{itemize}
    \item If $\cal H$ is a yes-instance, then there is an $s$-$t$ path $P$ in $G^{\tensor k}$ with $\|\cost(P)\|_p^p = 1$. 
    \item  If $\cal H$ is a no-instance, then for every $s$-$t$ path $P$ in $G^{\tensor k}$,  $\|\cost(P)\|_p^p \geq \alpha_k \bell_k(p)$, where $\alpha_k = (1-p/r)^{pk} (1 - p^2\varepsilon)^k\geq 1 - O(p^2k(1/r + \varepsilon))$.
\end{itemize}
Additionally, the number of edges in the $\ell_p$-Shortest Path instance $|E(G^{\otimes k})| \leq |\mathcal{H}|^{2k}$. All cost vectors lie in a Euclidean space of dimension at most $r^{p \cdot k} \cdot |\mathcal{H}|^{2k}$.
\end{theorem}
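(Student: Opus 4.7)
For a satisfying assignment $\sigma$ of $\mathcal H$, I would take $P = P_\sigma^{\otimes k}$. Because the scaled multilinear product is multiplicative over tensor products (i.e., $\|a \otimes b\|_p^p = \|a\|_p^p\cdot \|b\|_p^p$ for the scaled norm), we get $\|\cost(P_\sigma^{\otimes k})\|_p^p = (\|\cost(P_\sigma)\|_p^p)^k$, so it suffices to check the base case $\|\cost(P_\sigma)\|_p^p = 1$. For each satisfied hyperedge $h$ with common color $c$, every off-diagonal multilinear term in $\|\sum_j v_j^c\|_p^p$ contains two distinct vectors of the same color and thus vanishes by~\eqref{eq:vs-orthogonal}, leaving only the diagonal contributions $\sum_j \|v_j^c\|_p^p = r\cdot r^{-1} = 1$.

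\textbf{No-case: inductive hypothesis.} The plan is to induct on $k$ with the strengthened hypothesis that for every $p$-tuple of paths $(P_1,\dots,P_p)$ in $G^{\otimes k}$,
\[
\Ebigodot_{i=1}^p \cost(P_i) \;\geq\; \alpha_k \cdot \bell_k\bigl(T(P_1,\dots,P_p)\bigr),
\]
where $T(P_1,\dots,P_p)$ is the partition of $[p]$ defined by $a\sim b$ iff $P_a=P_b$. Setting all $P_i$ equal gives the main claim, since then $T = \{[p]\}$ and $\bell_k(\{[p]\}) = \bell_k(p)$. This is the natural $p$-ary generalization of the two-part invariant from the proof of Theorem~\ref{thm:main}: for $p=2$, $\bell_k(\{[2]\}) = k+1$ recovers the squared-norm lower bound, while $\bell_k(\{\{1\},\{2\}\}) = 1$ recovers the pairwise inner-product bound.

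\textbf{Base case and induction step.} For $k=1$, Lemma~\ref{lem:colorful} provides a $(1-p^2\varepsilon)$-fraction of colorful hyperedges. On a colorful $h$, I would expand the multilinear product over tuples $(j_1,\dots,j_p)\in[r]^p$; by~\eqref{eq:vs-orthogonal}--\eqref{eq:vs-non-orthogonal}, each tuple contributes either $0$ (when two distinct vectors appear in a single block) or $r^{-|\pi|}$, where $\pi$ is the partition of $[p]$ induced by the equalities among the $j_i$'s. Grouping by $\pi$ and counting the non-zero tuples compatible with $T$ should recover $\bell_1(T)$ up to a $(1-p/r)^p$ factor coming from the constraint that the chosen block-indices be distinct across parts of $\pi$. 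For the induction step, I would decompose each path in $G^{\otimes(k+1)} = G\tensor G^{\otimes k}$ into its $G$-projection edges $(e_{i,j})_j$ and inner paths $(P_{i,j})_j$ in $G^{\otimes k}$, and then apply multilinearity:
\[
\Ebigodot_i \cost(P_i) \;=\; \sum_{(j_1,\dots,j_p)} \Bigl(\Ebigodot_i c(e_{i,j_i})\Bigr) \cdot \Bigl(\Ebigodot_i \cost(P_{i,j_i})\Bigr).
\]
The base case bounds the outer factor while the inductive hypothesis bounds the inner factor; the Bell-number recursion $\bell_{k+1}(T) = \sum_{T'\in\calP(T)}\bell_k(T')$ then reassembles these partition-indexed pieces into the desired $\alpha_{k+1}\bell_{k+1}(T)$ lower bound.

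\textbf{Main obstacle and size bounds.} The main technical difficulty is the two-level partition bookkeeping: for each summand in the multilinearity expansion I must identify both the outer partition (induced by equalities of $e_{i,j_i}$'s) and the inner partition (induced by equalities of $P_{i,j_i}$'s), verify that the inner refines the outer in a way consistent with $T$, and show that summing the base-case $\times$ IH bounds -- indexed by these partitions -- matches precisely the Bell recursion for $\bell_{k+1}(T)$. Simultaneously, I must track the two multiplicative losses -- $(1-p/r)^p$ per level from base-case block combinatorics and $(1-p^2\varepsilon)$ per level from non-colorful hyperedges -- so that they telescope cleanly to $\alpha_k = (1-p/r)^{pk}(1-p^2\varepsilon)^k$. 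The size bounds follow directly from the construction: each edge of $G^{\otimes k}$ is a $k$-tuple of edges of $G$, so $|E(G^{\otimes k})|\leq|\mathcal H|^{2k}$; and the cost vectors lie in $(\mathbb R^{d_0 m})^{\otimes k}$ with $d_0 m \leq r^p\cdot|\mathcal H|^2$, yielding total dimension at most $r^{pk}\cdot|\mathcal H|^{2k}$.
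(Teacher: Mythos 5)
Your proposal is correct and follows essentially the same route as the paper: the same tensored yes-case path, the same strengthened inductive invariant $\Ebigodot_i \cost(P_i) \geq \alpha_k\,\bell_k(T)$ over partition-consistent $p$-tuples of paths, the same colorful-hyperedge counting of block-index tuples yielding the $(1-p/r)^p$ loss, and the same use of the recursion $\bell_k(T)=\sum_{T'\in\calP(T)}\bell_{k-1}(T')$ to reassemble the induction step. The only cosmetic differences are that the paper phrases the hypothesis for \emph{any} partition the paths are consistent with (starting the induction at the trivial $k=0$ case) rather than the exact equality partition, which is interchangeable since $\bell_k$ is monotone under coarsening.
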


For fixed integers $p \geq 2$ and $k$, we choose a sufficiently large $r$ and a sufficiently small $\varepsilon > 0$ so that the error term $O(p^2k(1/r + \varepsilon))$ in Theorem~\ref{thm:main_general} is smaller than the desired $\delta$. Then, from the hardness result for $r$-Hypergraph Label Cover (Theorem~\ref{thm:hardness-r-hypergraph-label-cover}), we immediately obtain that the $\ell_p$-Shortest Path problem does not admit a better than $((1 - \delta)\bell_k(p))^{1/p}$ approximation in series-parallel graphs of order $k$, yielding Theorem~\ref{thm:main-pvsnp}.
In the remainder of this section, we will prove Theorem~\ref{thm:main_general}.

\paragraph{Yes-case.} Assume first that $\mathcal{H}$ is a yes-instance. 
Then all hyperedges are satisfied by some assignment $\sigma$. 
Let $P = P_\sigma$. We show that $\|\cost(P)\|_p^p = 1$.
Consider a hyperedges $h = (u_1, \dots, u_r)$. Let $c$ be the color of all $u_i$ in $h$ w.r.t.\  $\sigma$. Then, by~\eqref{eq:vs-orthogonal} and \eqref{eq:vs-norm},
$$\Bigl\|\sum_{e\in P_{\sigma}} c^h(e)\Bigr\|_p^p =  \Bigl\|\sum_{j=1}^r v_{j}^{c}\Bigr\|^p_p = 
\sum_{j_1,\dots, j_p\in [r]} \langle v_{j_1}^{c},\dots, v_{j_p}^{c}\rangle = \sum_{j=1}^r \|v_j^c\|_p^p = 1.$$
We conclude that $\|\cost(P_\sigma)\|_p^p = 1$.
Now consider path $P^{\tensor k}$ in $G^{\tensor k}$. 
It consists of edges $\{(e_1,\dots, e_k):e_1,\dots, e_k \in P\}$.
Therefore, $\cost(P^{\tensor k}) = \cost(P)^{\tensor k}$. Since $\|\cost(P)\|_p^p = 1$, we get $\|\cost(P^{\tensor k})\|_p^p = 1$, as required.

\paragraph{No-case.} Now, assume that $\mathcal{H}$ is a no-instance. 
The following definition and lemma will complete the show the desired bound and complete the proof of Theorem~\ref{thm:main_general}.

\begin{definition}
Let us say that paths $P_1,\dots, P_p$ are consistent with a partition $T\in \calP$ if the following property holds: if $i$ and $j$ are in the same part of $T$, then $P_i = P_j$. (If $i$ and $j$ are in different parts, then $P_i$ and $P_j$ may or may not be equal.)
\end{definition}


\begin{lemma}\label{lem:main_general_nocase}
Consider $s$-$t$ paths $P_1,\dots, P_p$ in $G^{\tensor k}$. Assume that the paths are consistent with some $T\in \calP$.
Then
$$\Ebigodot_{i=1}^p \cost(P_i)\geq \alpha_k \bell_k(T).$$
\end{lemma}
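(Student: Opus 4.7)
The plan is induction on $k$. The base case $k=0$ is immediate: $G^{\tensor 0}$ has a unique $s$-$t$ path of cost $1\in\mathbb{R}^1$, so every choice of $P_1,\dots,P_p$ yields $\Ebigodot_i\cost(P_i)=1\geq\alpha_0\bell_0(T)$.

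For the inductive step, I would view $G^{\tensor k}=G\tensor G^{\tensor k-1}$ and decompose each $P_i$ as $\{(e^{u,i},Q^{u,i})\}_{u\in V}$, where $e^{u,i}$ is the edge chosen in the block corresponding to vertex $u$ of $\mathcal{H}$ and $Q^{u,i}$ is a path in $G^{\tensor k-1}$. Multilinearity of $\Ebigodot$ together with the factorization $\Ebigodot_i(a_i\tensor b_i)=\Ebigodot_i a_i\cdot\Ebigodot_i b_i$ produces
\[
    \Ebigodot_{i=1}^p\cost(P_i)\;=\;\mathbb{E}_h\sum_{u_1,\dots,u_p}\bigodot_{i=1}^p c^h(e^{u_i,i})\;\cdot\;\Ebigodot_{i=1}^p\cost(Q^{u_i,i}).
\]
All summands are non-negative, and only tuples with $\underline u\in h^p$ contribute (otherwise some $c^h(e^{u_i,i})$ vanishes). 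Since consistency of the $P_i$'s with $T$ yields at most $|T|\leq p$ distinct projected assignments, Lemma~\ref{lem:colorful} ensures that a $(1-p^2\varepsilon)$-fraction of hyperedges is colorful, and it suffices to lower bound the inner sum uniformly on colorful $h$.

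Fix a colorful $h$. For each $\underline u\in h^p$, I define two partitions of $[p]$: the same-vertex partition $S$ ($i\sim_S l\iff u_i=u_l$) and the equal-vector partition $W$ ($i\sim_W l\iff c^h(e^{u_i,i})=c^h(e^{u_l,l})$). The key structural claim is that $W$ is sandwiched between $S$ and $S\wedge T$ in the refinement order. On one side, $W$ refines $S$ because $h$ contains one vertex per $V_j$, so $c^h(e^{u_i,i})=c^h(e^{u_l,l})$ forces $u_i=u_l$. On the other, $S\wedge T$ refines $W$ because $i\sim_{S\wedge T} l$ gives both $u_i=u_l$ and $\sigma_i=\sigma_l$, hence equal colors under $\pi_h^{u_i}$. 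By~\eqref{eq:vs-non-orthogonal}, this yields $\bigodot_{i=1}^p c^h(e^{u_i,i})=r^{-|W|}\geq r^{-|S\wedge T|}$. Moreover, the paths $Q^{u_i,i}$ are consistent with $S\wedge T$ (for $i\sim_{S\wedge T} l$, $u_i=u_l$ and $i\sim_T l$ force $Q^{u_i,i}=Q^{u_l,l}$), so the inductive hypothesis supplies $\Ebigodot_{i=1}^p\cost(Q^{u_i,i})\geq\alpha_{k-1}\bell_{k-1}(S\wedge T)$.

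To close, I would group the sum over $\underline u\in h^p$ by the partition $S$, using that the number of tuples with $u$-partition exactly $S$ equals $r(r-1)\cdots(r-|S|+1)$. Discarding non-negative contributions from $S\notin\calP(T)$ and restricting to $S\in\calP(T)$ (so $S\wedge T=S$), the elementary bound $r(r-1)\cdots(r-|S|+1)\geq r^{|S|}(1-p/r)^p$ combines with the Bell recurrence $\bell_k(T)=\sum_{S\in\calP(T)}\bell_{k-1}(S)$ to give
\[
    \sum_{\underline u\in h^p}\bigodot_i c^h(e^{u_i,i})\cdot\Ebigodot_i\cost(Q^{u_i,i})\;\geq\;\alpha_{k-1}(1-p/r)^p\,\bell_k(T).
\]
Incorporating the colorful-fraction factor closes the induction with $\alpha_k=\alpha_{k-1}(1-p/r)^p(1-p^2\varepsilon)$. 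The main obstacle I anticipate is establishing the sandwich ``$S\wedge T$ refines $W$ refines $S$'' on colorful $h$; the upper bound $|W|\leq|S\wedge T|$ (as opposed to the trivial $|W|\leq p$) is exactly what pairs with the falling factorial so that only the terms with $S$ refining $T$ survive in the limit, and it is the interplay between colorfulness of $h$ and consistency of the $P_i$'s with $T$ that delivers this bound.
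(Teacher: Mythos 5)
Your proposal is correct and follows essentially the same route as the paper's proof: induction on $k$ via the tensor decomposition, restriction to colorful hyperedges using Lemma~\ref{lem:colorful}, a falling-factorial count of the surviving tuples, and the recurrence $\bell_k(T)=\sum_{T'\in\calP(T)}\bell_{k-1}(T')$. The only difference is bookkeeping: you group vertex-tuples by the same-vertex partition $S$ and prove the sandwich $S\wedge T\preceq W\preceq S$, whereas the paper groups edge-tuples by the induced subdivision $T'$ of $T$ and isolates the tuples satisfying its Condition~($\star$); on colorful hyperedges these two groupings identify the same contributing terms with the same weights.
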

Before we proceed with the proof of Lemma~\ref{lem:main_general_nocase}, note that by letting $P_1=\dots=P_k = P$ and $T=\{[p]\}$, we immediately get the second part of Theorem~\ref{thm:main_general}.

\begin{proof}[Proof of Lemma~\ref{lem:main_general_nocase}] We prove the statement by induction on $k$.

\paragraph{Base Case} The statement is trivial for $k=0$, since, on one hand, $\cost(P_i) = 1$ when $k=0$ and thus the expression on the left is 1; on the other hand, $\bell_0(T) \leq 1$.
\paragraph{Induction Step}
Consider paths $P_1, P_2,\dots, P_p$ and a partition $T$ as in the statement.
Let $P_i'$ be the projection of $P_i$ onto $G$ and $\sigma_i$ be the assignment defined by $P_i'$.
For an edge $e \in P_{i}'$, let $Q_i(e)$ be the path in $G^{\tensor(k-1)}$ corresponding to edge $e$. Then
$$\cost(P_i) = \sum_{e\in P_i'} c(e)\tensor \cost(Q_i(e)),$$
and therefore
\begin{align*}
\Ebigodot_{i=1}^p \cost(P_i) &= \sum_{e_1\in P_1',\dots, e_p\in P_p'}\Ebigodot_{i=1}^p c(e_i)\tensor \cost(Q_i(e_i)) \\
&=
\sum_{e_1\in P_1',\dots, e_p\in P_p'}\left(\Ebigodot_{i=1}^p c(e_i)\right)\cdot \left(\Ebigodot_{i=1}^p \cost(Q_i(e_i))\right)
\end{align*}
We are going to group tuples of $(e_1,\dots,e_p)$ in this summation based on the subdivision $T'$ of $T$ they define. Consider a choice of $e_1\in P_1',\dots, e_p\in P_p'$. Define a subdivision $T'$ of $T$ as follows: $i$ and $j$ are in the same part of $T'$ if they are in the same part of $T$ and $e_i=e_j$.
Note that if $i$ and $j$ are in the same part of $T'$, then  $P_i = P_j$ and $e_i = e_j$, and thus $Q_i(e_i) = Q_j(e_j)$. Therefore, paths $Q_1(e_1),\dots, Q_p(e_p)$ in $G^{\tensor(k-1)}$ are consistent with $T'$. By the induction hypothesis,
$$\Ebigodot_{i=1}^p \cost(Q_i(e_i)) \geq \alpha_{k-1}\bell_{k-1}(T').$$
We conclude that 
\begin{align*}
\Ebigodot_{i=1}^p \cost(P_i) &\geq \alpha_{k-1} \sum_{T'\in\calP(T)} \sum_{\substack{
e_1\in P_1',\dots,e_p\in P_p'\\\text{define }T'}}
\bell_{k-1}(T')\cdot 
\Ebigodot_{i=1}^p c(e_i)\\
&= \alpha_{k-1}\, \EE{h}{\sum_{T'\in\calP(T)} \bell_{k-1}(T')\cdot \sum_{\substack{
e_1\in P_1',\dots,e_p\in P_p'\\\text{define }T'}}\bigodot_{i=1}^p c^h(e_i)}.
\end{align*}

We now establish a lower bound on 
\begin{equation}\label{eq:sigma-T-prime}
\Sigma_{T'} \equiv\sum_{\substack{
e_1,\dots,e_p \\\text{ define }T'}}\bigodot_{i=1}^p c^h(e_i).
\end{equation}
Assume that $h$ is a colorful hyperedge w.r.t.\ $\sigma_1,\dots, \sigma_p$. 
Write $h=(u_1,\dots, u_r)$. Let $f_{ij}$ be the edge of $P_i'$ corresponding to vertex $u_j$. 
Further, let $c_{ij}$ be the color of $u_j$ in $h$ w.r.t.\  $\sigma_i$. Since $h$ is colorful, $c_{ij}\neq c_{i'j'}$ when $j\neq j'$.
Let us say that edges $f_{1j_1},\dots, f_{pj_p}$ with indices $j_1,\dots,j_p\in[r]$ satisfy Condition ($\star$) if:
\begin{quote}
\textbf{Condition ($\star$):} Edges $f_{1j_1},\dots, f_{pj_p}$ define subdivision $T'$ of $T$ and further $j_a\neq j_b$ if $a$ and $b$ lie in different parts of $T'$.
\end{quote}
We will now prove a lower bound on the contribution of $(f_{1j_1},\dots, f_{pj_p})$ satisfying Condition~($\star$) to $\Sigma_{T'}$. Since all terms in \eqref{eq:sigma-T-prime} are non-negative, this also provides a lower bound on $\Sigma_{T'}$ itself.

Observe that there are $|T'|$ distinct vectors among $\{v_{j_i}^{c_{ij_i}}\}_{i\in [p]}$ and, since $h$ is colorful, all of these distinct vectors have distinct colors. 
Therefore, by~\eqref{eq:vs-non-orthogonal},
$$\bigodot_{i=1}^p c^h(f_{ij_i}) = \bigodot_{i=1}^p v_{j_i}^{c_{ij_i}}\geq \frac{1}{r^{|T'|}}.$$
Finally, let us count the number of choices for edges $f_{1j_1},\dots, f_{pj_p}$ that satisfy Condition ($\star$) for a given $T'$. For every part $S$ of $T'$, we choose an index $j_S\in [r]$; for different parts, we choose different indices. Then we let $j_i = j_S$ if $i\in S$ and obtain $f_{1j_1},\dots, f_{pj_p}$. Clearly the obtained edges $f_{ij_i}$ define partition~$T'$.
Since we choose $|T'|$ distinct indices $j_S$, the total number of ways to choose $f_{1j_1},\dots, f_{pj_p}$ is $r\cdot (r-1) \cdots (r + 1 - |T'|) \geq (r - p)^{|T'|} \geq (1-p/r)^p \cdot r^{|T'|}$.
We conclude that
$$\sum_{\substack{
e_1,\dots,e_p\\\text{define }T'}}\bigodot_{i=1}^p c^h(e_i)
\geq (1-p/r)^p r^{|T'|} \cdot \frac{1}{r^{|T'|}} = (1 - p/r)^p$$
and
$$\sum_{T'\in\calP(T)} \bell_{k-1}(T')\cdot \sum_{\substack{
e_1,\dots,e_p\\\text{define }T'}}\bigodot_{i=1}^p c^h(e_i)
\geq (1-p/r)^p \sum_{T'\in\calP(T)} \bell_{k-1}(T')= (1 - p/r)^p \bell_k(T).$$
By Lemma~\ref{lem:colorful}, at least a $(1-p^2\varepsilon)$ fraction of all hyperedges are colorful w.r.t.\ $\sigma_1,\dots, \sigma_p$ and, therefore,
$$\Ebigodot_{i=1}^p \cost(P_i) \geq \alpha_{k-1}\cdot (1 - p^2 \varepsilon) \cdot (1-p/r)^p\cdot  \bell_k(T)= \alpha_k\bell_k(T).$$

\paragraph{Size Analysis}
The graph $G$ constructed in this reduction is the same as that in Section~\ref{sec:reduction-l2}, meaning that its size is $|E(G)| \leq |\mathcal{H}|^2$.
However, the dimension of the cost vectors assigned to the edges of $G$ differs compared to what we had in Section~\ref{sec:reduction-l2}, since we are using a different vector system. Using a similar argument to the one from the size analysis in Section~\ref{sec:reduction-l2}, we can upper bound the dimension of the cost vectors assigned to the edges of $G$ by $\max\{r^p,|\mathcal{H}| \cdot r\} \cdot |\mathcal{H}| \leq r^p \cdot |\mathcal{H}|^2$.
Thus, the instance of $\ell_p$-Shortest Path that we constructed in this section has size $|E(G^{\otimes k})| \leq |\mathcal{H}|^{2k}$ and uses cost vectors of dimension at most $r^{p \cdot k} \cdot |\mathcal{H}|^{2k}$.


\end{proof}

\section{Reduction with Super-constant \texorpdfstring{$k$}{k} and \texorpdfstring{$r$}{r}}
\label{sec:superconstant}
In this section, we prove Theorem~\ref{thm:hardness-of-ell-p-shortest-path}.
To this end, we show a reduction from 3SAT to $\ell_p$-Shortest Path by combining the tools we have developed in the previous sections of this paper and applying the Bell number lower bound from Theorem~\ref{thm:bkp_lower_bound}. 
The statement of our reduction is presented in the following claim:

\begin{claim}\label{cl : reduction from 3SAT to ell-p shortest path}
For some absolute constant $C \geq 2$, there exists an algorithm that receives a 3SAT formula of size $m \geq 2$, and integer parameters $p\geq 2$ and $N > 1$,
such that
\[\label{eq:lower-bound-on-N}
\log N \geq C \log m \cdot \max\{\log p,\log\log N\} \cdot \log^* p.
\]
The algorithm outputs an instance $\mathcal{I}$ of the $\ell_p$-Shortest Path problem satisfying the following conditions.
\begin{itemize}
\item \textbf{Yes-Instance.} If the input formula is satisfiable, then  $\mathcal{I}$ has a solution of cost at most $1$.
\item \textbf{No-Instance.} If the input formula is unsatisfiable, then every solution to instance $\mathcal{I}$ has cost at least
\begin{equation}\label{eq : ell-p cost of final solution}
    \frac{1}{C} \cdot p \cdot \left( \frac{\log N}{\log m \cdot \max\{\log p,\log\log N\}} \right)^{1-\frac{1}{p}} 
\end{equation}
\end{itemize}
Further, the graph in instance $\mathcal{I}$ is of size at most $N$ and the cost vectors have dimension at most $\ell = N^{1+\frac{p}{\log m}}$. The running time of the algorithm is $\mathrm{poly}(\ell)$.

\end{claim}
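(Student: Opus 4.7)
The plan is to chain three results already established in the paper. First, I would apply Theorem~\ref{thm:hardness-r-hypergraph-label-cover} to reduce the given 3SAT formula to an $r$-Hypergraph Label Cover instance $\mathcal{H}$ with parameters $r$ and $\varepsilon$ to be chosen. Second, I would apply the reduction of Theorem~\ref{thm:main_general} with an appropriate value of $k$ to produce an $\ell_p$-Shortest Path instance on the tensor-power graph $G^{\otimes k}$. Third, I would invoke Theorem~\ref{thm:bkp_lower_bound} to convert the resulting no-case bound $(\alpha_k \bell_k(p))^{1/p}$ into the concrete form in \eqref{eq : ell-p cost of final solution}.

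The key choice is calibrating $r$, $\varepsilon$, and $k$ so that (i) the error factor $\alpha_k = (1-p/r)^{pk}(1-p^2\varepsilon)^k$ is bounded below by an absolute constant, (ii) the resulting instance has size at most $N$, and (iii) $k$ is large enough that the asymptotic formula $\bell_k(p)^{1/p} = \Theta(p k^{1-1/p})$ from the $k > \log_e^* p$ regime of Theorem~\ref{thm:bkp_lower_bound} applies. I plan to take $r = \Theta(p^2 k)$ and $\varepsilon = \Theta(1/(p^2 k))$, which makes $\alpha_k = \Omega(1)$ and keeps $\log(r/\varepsilon) = O(\log p + \log k)$. Since $k$ will satisfy $\log k = O(\log\log N)$, this in turn gives $\log(r/\varepsilon) = O(\max\{\log p, \log\log N\})$. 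The hypothesis $\log N \geq C \log m \cdot \max\{\log p, \log \log N\} \cdot \log^* p$ then guarantees, for a sufficiently large absolute constant $C'$, that the choice
\[
k = \Bigl\lfloor \tfrac{\log N}{C' \log m \cdot \max\{\log p, \log \log N\}} \Bigr\rfloor
\]
is at least $\log_e^* p$, so that the $\Theta(p k^{1-1/p})$ regime of Theorem~\ref{thm:bkp_lower_bound} is in force.

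Verifying the three required bounds is then a matter of bookkeeping. For size, Theorem~\ref{thm:hardness-r-hypergraph-label-cover} produces $|\mathcal{H}| = m^{O(\log(r/\varepsilon))}$, and Theorem~\ref{thm:main_general} gives $|E(G^{\otimes k})| \leq |\mathcal{H}|^{2k} = m^{O(k \log(r/\varepsilon))}$, which by the choice of $k$ and the estimate on $\log(r/\varepsilon)$ is at most $N$. For dimension, the same theorem bounds the coordinate count by $r^{pk} |\mathcal{H}|^{2k}$; taking logs, $pk \log r \leq p \log N/\log m$ because the extra $\log r$ factor is absorbed by the $\max\{\log p, \log\log N\}$ factor in the denominator of $k$, so the dimension is at most $N^{1 + p/\log m}$. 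The yes-case bound of $1$ is exactly what Theorem~\ref{thm:main_general} delivers. For the no-case, $\|\cost(P)\|_p \geq (\alpha_k \bell_k(p))^{1/p} = \Omega(p k^{1-1/p})$, which upon substituting the chosen $k$ yields \eqref{eq : ell-p cost of final solution} after absorbing constants into $C$. The running-time claim follows because each of the three steps (the hypergraph label cover reduction, the vector-system construction of Lemma~\ref{lem:vector-system}, and the tensoring) runs in time polynomial in its output size. The main obstacle I anticipate is not any single step but the careful simultaneous tuning of $r$, $\varepsilon$, and $k$: the dimension bound forces $\log r$ to be comparable to $\log(r/\varepsilon)$, the lower bound on $\alpha_k$ forces $r$ and $1/\varepsilon$ to grow with $pk$, and the Bell number formula forces $k \geq \log^* p$, and all three demands must be satisfied under the single hypothesis on $\log N$ in the claim.
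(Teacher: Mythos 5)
Your proposal is correct and follows essentially the same route as the paper: the same chaining of Theorem~\ref{thm:hardness-r-hypergraph-label-cover}, Theorem~\ref{thm:main_general}, and Theorem~\ref{thm:bkp_lower_bound}, with the same parameter choices ($r=\Theta(p^2k)$, $\varepsilon=\Theta(1/(p^2k))$, and $k$ set to $\lfloor\log N/(\Theta(\log m\cdot\max\{\log p,\log\log N\}))\rfloor$), and the same size/dimension bookkeeping via the observation that $\log r=O(\max\{\log p,\log\log N\})$. The only nitpick is that you should ensure $k\geq 2\log^* p$ (not just $k>\log^* p$) so that $(k-\log^* p)^{1-1/p}=\Omega(k^{1-1/p})$ in Theorem~\ref{thm:bkp_lower_bound}, which the hypothesis on $\log N$ readily permits.
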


Before proving Claim~\ref{cl : reduction from 3SAT to ell-p shortest path}, we introduce some notation and recall previously discussed results.
The size of an instance $\mathcal{H}$ of $r$-Hypergraph Label Cover, denoted $|\mathcal{H}|$, is the maximum among the number of vertices in $\mathcal{H}$, the number of labels in $\mathcal{H}$, the number of colors in $\mathcal{H}$, and the number of hyperedges in $\mathcal{H}$.
We will use Theorem~\ref{thm:hardness-r-hypergraph-label-cover} with parameter $\varepsilon = \frac{1}{r}$. In that case, it provides a reduction that transforms 3SAT formulas of size $m$ to $r$-Hypergraph Label Cover  instances $\mathcal{H}$ of size $|\mathcal{H}| \leq m^{O(\log r)}$. We will denote the hidden constant in the big-$O$ notation by $c_1$. That is, the reduction produces instrances $\cal H$ of size at most $m^{c_1 \cdot \log r}$. Theorem~\ref{thm:bkp_lower_bound} guarantees that 
$\bell_k(p)^{1/p} \geq c_2^{-1} p(k - \log^* p)^{1-1/p}$ for some $c_2 \geq 1$. We will let $c_2$ be the smallest value greater than or equal to 1 for which this inequality holds.
We are now ready to prove Claim~\ref{cl : reduction from 3SAT to ell-p shortest path}.

\begin{proof}[Proof of Claim~\ref{cl : reduction from 3SAT to ell-p shortest path}]
Let $C = e \cdot 48 c_1 \cdot c_2$. We first describe the reduction.
Let 
\[
k' = \frac{\log N}{4c_1 \cdot \log m \cdot \max\{\log(2p^2),\log\log N\}}, \quad k = \lfloor k'\rfloor, \quad\text{and}\quad r = 2k \cdot p^2.
\]
The algorithm uses the reduction from Theorem~\ref{thm:hardness-r-hypergraph-label-cover} on the input formula $\varphi$ with $r =2kp^2$ (as defined above) and $\varepsilon = \frac{1}{r}$ (in the analysis, we will show that $k \geq 1$ and therefore this value of $k$ is valid).
Let $\mathcal{H}$ denote the resulting instance of $r$-Hypergraph Label Cover. Note that $\vert\mathcal{H}\vert \leq m^{c_1 \cdot \log r}$.
Next, the algorithm uses the reduction from $r$-Hypergraph Label Cover to $\ell_p$-Shortest Path from Theorem~\ref{thm:main_general} with $k$ as defined above.
The algorithm returns the resulting instance $\mathcal{I}$ of $\ell_p$-Shortest Path.

\paragraph{Bounding $k$}
We will begin the analysis by showing that 
\begin{equation}\label{eq : bounds on k}
1\leq 2\log^* p \leq k \leq \frac{\log N}{2c_1 \cdot \log m \cdot \log r}.
\end{equation}
Since $p\geq 2$, $\log^* p \geq 1$.
Using that $\log N \geq C \log m \cdot \max\{\log p,\log\log N\} \cdot \log^* p$ and our choice of $C$, we get that $k' \geq 3\log^* p$ and thus $k > k' - 1 \geq k' - \log^* p \geq 2\log^* p$.

We now prove the upper bound on $k$ stated in~\eqref{eq : bounds on k}.  
By the definition of $k$, we have $k \leq \log N$, and therefore $\log k \leq \log\log N$. This implies that  
\[
\log(2p^2) + \log\log N \geq \log(2p^2) + \log k = \log r.
\]  
It follows that  
\[
\max\{\log(2p^2), \log\log N\} \geq \frac{\log r}{2}.
\]  
From the definition of $k'$, we obtain  
\[
k \leq k' \leq \frac{\log N}{4c_1 \cdot \log m \cdot (\log r)/2}.
\]
This concludes the proof of inequality~\eqref{eq : bounds on k}.

\paragraph{Correctness Analysis}
We prove that the optimal solution cost of instance $\mathcal{I}$ is bounded as described in the statement of the claim.
First, if the input 3SAT formula is satisfiable, then Theorem~\ref{thm:hardness-r-hypergraph-label-cover} guarantees that the instance $\mathcal{H}$ of $r$-Hypergraph Label Cover is completely satisfiable.
In this case, Theorem~\ref{thm:main_general} tells us that the instance $\mathcal{I}$ has a solution with cost $1$, as required.

Now we assume that the input 3SAT formula is unsatisfiable and show that the cost of every solution to instance $\mathcal{I}$ is at least~\eqref{eq : ell-p cost of final solution}.
Theorem~\ref{thm:hardness-r-hypergraph-label-cover} guarantees that every assignment for the instance $\mathcal{H}$ of $r$-Hypergraph Label Cover weakly satisfies at most a $\varepsilon = \frac{1}{r}$ fraction of the hyperedges.
Then, Theorem~\ref{thm:main_general} guarantees that every solution to instance $\mathcal{I}$ has an $\ell_p$-cost of at least $(\alpha_k \bell_k(p))^{1/p}$ where $\alpha_k = (1-p/r)^{pk} (1 - p^2\varepsilon)^k$.
Since $\varepsilon = \frac{1}{r}$ and $r = 2kp^2$, we get that $\alpha_k \geq e^{-2}$, so every solution to instance $\mathcal{I}$ has an $\ell_p$-cost of at least $e^{-2/p} \cdot \bell_k(p)^{1/p} \geq e^{-1} \cdot \bell_k(p)^{1/p}$.
Furthermore, since $k \geq 2\log^* p$ (see~\eqref{eq : bounds on k}), by Theorem~\ref{thm:bkp_lower_bound}, we have
\begin{align*}
    \bell_k(p)^{1/p}
    \geq \frac{1}{c_2}p \cdot (k - \log^* p)^{1-1/p}
    \geq \frac{1}{c_2}p \cdot(k/2)^{1-1/p}
    \geq \frac{1}{2c_2}p \cdot k^{1-1/p}.
\end{align*}
Thus, the $\ell_p$-cost of every solution to instance $\mathcal{I}$ is at least $p \cdot k^{1-1/p}/({2e \cdot c_2})$.
Now observe that
\begin{align*} 
    k
    \geq \frac{k'}{2}
    = \frac{\log N}{8c_1 \cdot \log m \cdot \max\{\log(2p^2),\log\log N\}}
    \geq \frac{\log N}{24c_1 \cdot \log m \cdot \max\{\log p,\log\log N\}}.
\end{align*}
This means that the $\ell_p$-cost of every solution to instance $\mathcal{I}$ is at least
\[
    \frac{p}{2e \cdot c_2} \cdot \left(\frac{\log N}{24c_1 \cdot \log m \cdot \max\{\log p,\log\log N\}}\right)^{1-1/p}
    \geq \frac{p}{C} \cdot \left(\frac{\log N}{\log m \cdot \max\{\log p,\log\log N\}}\right)^{1-1/p}.
\]
Define $\ell = N^{1+\frac{p}{\log m}}$ (as in the statement of the claim).
We prove that 
$$\left(\frac{\log N}{\log m \cdot \max\{\log p,\log\log N\}}\right)^{1-1/p} \geq \left( \frac{\log \ell}{(p+\log m) \cdot \log\log \ell} \right)^{1-\frac{1}{p}}.$$
 From the definition of $\ell$, we derive $\log m \cdot \log \ell = (p + \log m) \cdot \log N$, and thus
\begin{equation}\label{eq : intermediate equation for proving inequality relating N and ell}
    \left(\frac{\log N}{\log m \cdot \max\{\log p,\log\log N\}}\right)^{1-1/p} = \left( \frac{\log \ell}{(p+\log m) \cdot \max\{\log p,\log\log N\}} \right)^{1-\frac{1}{p}} \geq
\end{equation}
$$
\frac{p}{C}\cdot \left( \frac{\log \ell}{(p+\log m) \cdot \log\log \ell} \right)^{1-\frac{1}{p}}.
$$
In the last inequality, we used the fact that $\ell \geq 2^p$, which follows from the definition of $\ell = N^{1+\frac{p}{\log m}} = 2^{\log N +\frac{p \log N}{\log m}}$ as $\log N \geq \log m$.

\paragraph{Size Analysis}
We now upper bound the size of the resulting instance $\mathcal{I}$ of $\ell_p$-Shortest Path.
Recall that Theorem~\ref{thm:main_general} guarantees that the number of edges in the output graph is at most $\vert\mathcal{H}\vert^{2k}$ and the cost vectors have at most $r^{p \cdot k} \cdot \vert\mathcal{H}\vert^{2k}$ coordinates.
Further, $\vert\mathcal{H}\vert \leq m^{c_1 \log r}$ holds by our choice $c_1$. 
Thus, the size of the graph is at most
\begin{equation}\label{eq : bound on size of I in full reduction}
    (m^{c_1 \log r})^{2k}
    = 2^{2c_1 \cdot k \cdot \log r \cdot \log m}
    \leq 2^{\log N}
    = N
\end{equation}
here we applied the upper bound on $k$ from~\eqref{eq : bounds on k}. The dimension of the cost vectors is at most
\begin{align*}
    r^{p \cdot k} \cdot (m^{c_1 \log r})^{2k}
    = 2^{p \cdot k \cdot \log r + 2k \cdot \log r \cdot c_1 \log m}
    \leq 2^{2k \cdot \log r \cdot c_1\log m \cdot (1+\frac{p}{\log m})}
    \leq N^{1+\frac{p}{\log m}}
    =\ell.
\end{align*}

\paragraph{Running Time Analysis}
The reduction in Theorem~\ref{thm:hardness-r-hypergraph-label-cover} runs in time $m^{O(\log \frac{r}{\varepsilon})} = m^{O(\log r)}$. Theorem~\ref{thm:main_general} provides a polynomial-time reduction, with running time $\mathrm{poly}(N \cdot \ell) = \mathrm{poly}(\ell)$.
Since $\ell \geq |\mathcal{H}| = m^{O(\log r)}$, the $\mathrm{poly}(\ell)$ term dominates the overall running time.  
Thus, the total running time of the algorithm is $\mathrm{poly}(\ell)$.

\end{proof}

\subsection{Proof of Theorem~\ref{thm:hardness-of-ell-p-shortest-path}}\label{sec : proof of hardness of ell-p shortest path}

In this section, we prove Theorem~\ref{thm:hardness-of-ell-p-shortest-path}.

\begin{proof}[Proof of Theorem \ref{thm:hardness-of-ell-p-shortest-path}]
Assume that $NP \nsubseteq \bigcap_{\delta > 0} \mathrm{DTIME}(2^{n^{\delta}})$. Let $c \geq 1$ and $p:\bbN \to \bbN$, $\alpha:\bbN \to \bbN$ be as in the statement of Theorem \ref{thm:hardness-of-ell-p-shortest-path}.
Since $NP \nsubseteq \bigcap_{\delta > 0} \mathrm{DTIME}(2^{n^{\delta}})$, $\mathrm{3SAT} \notin \bigcap_{\delta > 0} \mathrm{DTIME}(2^{n^{\delta}})$.
Fix $\delta^* \in (0, 1/2)$ such that $\mathrm{3SAT} \notin \mathrm{DTIME}(2^{n^{\delta^*}})$.
Define $C' = 2\cdot C/\delta^*$ where $C \geq 1$ is the constant from Claim \ref{cl : reduction from 3SAT to ell-p shortest path}.

We prove that there is no algorithm for $\ell_p$-Shortest Path with the following guarantee: given an instance of $\ell_p$-Shortest Path of size $N$, with  $p=p(N)$, the algorithm returns an $\alpha(N)$-approximate solution in time $O(2^{(\log N)^c})$.
Assume towards contradiction that there exists an algorithm $\mathcal{A}$  with the above guarantee.
We will show that under this assumption, $3SAT \in \mathrm{DTIME}(2^{n^{\delta^*}})$.
Consider the following algorithm $\cal B$ for solving a $3SAT$ instance $\varphi$ of size $m$:
Run the algorithm from Claim~\ref{cl : reduction from 3SAT to ell-p shortest path} on $\varphi$ with parameters $p=p(\lfloor 2^{m^{\delta^*/c}}\rfloor)$ and $N=\lfloor 2^{m^{\delta^*/c}/(1+\frac{p}{\log m})}\rfloor$.  
Take the resulting $\ell_p$-Shortest Path instance $\mathcal{I}$ and run $\mathcal{A}$ on it to get the final result.

We will assume below that $m$ and $n$ are sufficiently large. 
We need to show that the parameters with which $\cal B$ calls the subroutine from Claim~\ref{cl : reduction from 3SAT to ell-p shortest path} satisfy the condition $\log N \geq C \log m \cdot \max\{\log p,\log\log N\} \cdot \log^* p$ in the statement of that claim.
Additionally, we need to show the algorithm $\mathcal{A}$ can differentiate between the yes- or a no-cases, as described in the statement of Claim~\ref{cl : reduction from 3SAT to ell-p shortest path}.
\begin{itemize}
    \item We begin by proving that
    \begin{equation}\label{eq : condition from full reduction that we need to prove}
        \log N \geq C \log m \cdot \max\{\log p,\log\log N\} \cdot \log^* p.
    \end{equation}
    By our choice of the parameters $N$ and $p$, both the graph size and the dimension of the cost vectors in instance $\mathcal{I}$ are upper bounded by $\ell=\lfloor2^{m^{\delta^*/c}}\rfloor$. By our assumption, $p = p(\ell)$  satisfies $p(\ell) \cdot \log^* p(\ell) \leq \frac{\log \ell}{C' \cdot c \cdot \log \log \ell}$ for all large enough $\ell$. Since $\ell$ is strictly increasing as a function of $m$, we get that this holds for all sufficiently large $m$.
    Furthermore, since $\ell = \lfloor 2^{m^{\delta^*/c}}\rfloor$,
    we get that $\log m \cdot \log^* p \leq \frac{\log \ell}{C'\cdot c\cdot \log\log \ell}$ for sufficiently large $m$.
    Thus,
    \begin{equation}\label{eq : intermidiate condition relating ell and p}
        \frac{2\log \ell}{C'\cdot c \cdot \log \log \ell} \geq (p+\log m) \cdot \log^* p.
    \end{equation}
    In particular, we have $\log \ell \geq p$ and thus, $\log\log \ell \geq \max\{\log (p),\log\log N\}$.
    Plugging this last inequality into~\eqref{eq : intermidiate condition relating ell and p}, we get 
    \[
     \frac{2}{C'\cdot c}\log \ell \geq (p+\log m) \cdot \max\{\log (p),\log\log N\} \cdot \log^* p.
    \]
    Finally, since $\log \ell \leq \frac{p+\log m}{\log m}\log N$ and $C' \geq 2\cdot C$, the inequality above implies~\eqref{eq : condition from full reduction that we need to prove}.

    \item We will now prove that the algorithm $\mathcal{A}$ can differentiate between yes- and no-instance.
    To this end, we need to show that the following inequality holds
    \begin{equation}\label{eq : approximation is less than gap}
        \alpha(\ell) 
        < \frac{1}{C} \cdot p \cdot \left(\frac{\log \ell}{(p+\log m)\log\log \ell}\right)^{1-\frac{1}{p}}.
    \end{equation}
    Since $p=p(\ell)$ and $\ell$ is a strictly increasing function of $m$, we have 
    \[
        \min\left\{p \cdot \left(\frac{\log \ell}{\log^2\log \ell}\right)^{1-\frac{1}{p}}, \frac{\log \ell}{\log \log \ell} \right\} 
        < 
        p\cdot\left(\frac{\log \ell}{(p + \log \log \ell)\log\log \ell}\right)^{1-\frac{1}{p}}
    \]
    and thus, 
    \[
    \alpha(\ell) < \frac{1}{C' \cdot c} \cdot         p\cdot\left(\frac{\log \ell}{(p + \log \log \ell)\log\log \ell}\right)^{1-\frac{1}{p}}.
    \]
    Using that $\log\log \ell \geq \log\log 2^{m^{\delta^*/c}} - 1= \delta^*/c \cdot \log m - 1$, we obtain
    \begin{align*}
        \alpha(\ell)
        &\leq \frac{1}{C' \cdot c} \cdot p\cdot\left(\frac{\log \ell}{(p + \delta^*/c \cdot \log m - 1)\log\log \ell}\right)^{1-\frac{1}{p}}\\
        &\leq \frac{1}{C' \cdot c} \cdot p\cdot\left(\left(\frac{c}{\delta^*}\right)\frac{\log \ell}{(p +  \log m)\log\log \ell}\right)^{1-\frac{1}{p}}\\
        &\leq \frac{1}{C' \cdot c} \cdot p\cdot \left(\frac{c}{\delta^*}\right) \cdot \left(\frac{\log \ell}{(p +  \log m)\log\log \ell}\right)^{1-\frac{1}{p}} 
    \end{align*}
    where the last two inequalities use that $c/\delta^* \geq 2$.
    Since $C' > C/\delta^*$, the last equation implies~\eqref{eq : approximation is less than gap}.
    This completes the proof that~\eqref{eq : approximation is less than gap} holds for sufficiently large $m$, implying that $\mathcal{A}$ can indeed determine whether instance $\mathcal{I}$ is a yes-instance or a no-instance for sufficiently large $m$.
\end{itemize}

It remains to upper bound the running time of the algorithm.
The algorithm from Claim~\ref{cl : reduction from 3SAT to ell-p shortest path} runs in time polynomial in $\ell$.
The instance $\mathcal{I}$ of $\ell_p$-Shortest Path produced has both the graph size and the dimension of the cost vectors upper bounded by $\ell$. 
Thus, the running time of the algorithm $\mathcal{A}$ on instance $\mathcal{I}$ is upper bounded by $O(2^{(\log \ell)^c}) = O(2^{m^{\delta^*}})$. 
It follows that the running time of $\mathcal{B}$ is $O(2^{m^{\delta^*}})$ which contradicts $\mathrm{3SAT} \notin \mathrm{DTIME}(2^{n^{\delta^*}})$. 
\end{proof}

\section{Hardness of \texorpdfstring{$\ell_{\infty}$}{ℓ∞}-Shortest Path}\label{sec:infinity}
In this section, we consider the case when $p = \infty$ and prove Theorem~\ref{thm: hardness of ell-infinity shortest path}. 
We begin with the following claim, which  captures the details of our reduction. 

\begin{claim}\label{clm : reduction for ell-infinity}
There exists an algorithm that receives a 3SAT formula $\varphi$ of size $m$ and integer parameters $k \geq 2$ and outputs an instance $\mathcal{I}$ of the $\ell_\infty$-Shortest Path problem satisfying the following conditions:
\begin{itemize}
\item \textbf{Yes-Instance.} If the input formula is satisfiable, then $\mathcal{I}$ has a solution of cost at most $1$.
\item \textbf{No-Instance.} If the input formula is unsatisfiable, then every solution to instance $\mathcal{I}$ has cost at least $\frac{k^2}{900}$.
\end{itemize}
Further, the size of the graph in instance $\mathcal{I}$ and the cost vectors' dimension are both at most $2^{O(k \log k \log m)}$. The running time of the algorithm is $m^{\operatorname{poly}(k)}$.
\end{claim}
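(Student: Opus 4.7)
The plan is to recycle the high-level architecture of the reductions in Sections~\ref{sec:reduction-l2}--\ref{sec:analysis-for-general-p} with three main modifications tailored to $\ell_\infty$: (a)~apply Theorem~\ref{thm:hardness-r-hypergraph-label-cover} with $r=2$ and $\varepsilon = \Theta(1/k^{4})$ to produce a 2-Hypergraph Label Cover instance $\mathcal{H}$ of size at most $m^{O(\log k)}$ with $|C|\le k^{O(1)}$; (b)~replace the $(r,q,p)$-vector system of Lemma~\ref{lem:vector-system} with $\{0,1\}$-valued cost vectors whose coordinates per hyperedge are indexed by a fixed $4$-wise independent family $\mathcal{F}\subseteq\{0,1\}^{C}$ of size $O(|C|^4)=k^{O(1)}$ that is closed under $f\mapsto 1-f$; and (c)~analyse $G^{\otimes k}$ by a direct induction driven by Bernoulli concentration (Appendix~\ref{sec:Bernoulli}). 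Concretely, I build $G$ as in Section~\ref{subsec: basic reduction}, and for each hyperedge $h=(u_1,u_2)\in\mathcal{E}$ I set $c^h(e(u_1,l))_f = f(\pi_h^{u_1}(l))$ and $c^h(e(u_2,l))_f = 1-f(\pi_h^{u_2}(l))$ for every $f\in\mathcal F$, with $c^h\equiv 0$ on edges in blocks not associated with $h$. The final instance is $\mathcal I := G^{\otimes k}$ with tensored costs. The yes-case follows immediately: a satisfying $\sigma$ makes the two contributions at every coordinate sum to $f(c)+(1-f(c))=1$, so $\|\cost(P_\sigma)\|_\infty=\|\cost(P_\sigma^{\otimes k})\|_\infty=1$. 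The size bound $|\mathcal I|\le 2^{O(k\log k\log m)}$ follows from $|E(G)|\le m^{O(\log k)}$ and total per-level dimension $|\mathcal{E}|\cdot|\mathcal F|\le m^{O(\log k)}$, raised to the $k$-th power by tensoring.

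For the no-case I prove by induction on $j\ge 0$ that every $s$-$t$ path in $G^{\otimes j}$ has $\ell_\infty$-cost at least $y_j$, where $y_0=1$ and $y_{j+1}\ge y_j+\sqrt{y_j}/C$ for an absolute constant $C$; solving the recurrence yields $y_k\ge k^2/(4C^2)$, and choosing constants appropriately produces the required bound $y_k\ge k^2/900$. In the inductive step I decompose a path $P$ in $G^{\otimes(j+1)}=G^{\otimes j}\otimes G$ into a projected path $P'$ in $G^{\otimes j}$ together with subpaths $\{P_e\}_{e\in P'}$ in $G$, each defining an assignment $\sigma_e$ of~$\mathcal H$. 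By the inductive hypothesis some coordinate $\alpha^*$ of $\cost(P')$ has value $\ge y_j$; since all tensored costs are $\{0,1\}$-valued, this means at least $y_j$ edges of $P'$, call them $e_1,\dots,e_{y_j}$, contribute a $1$ at coordinate $\alpha^*$. Applying Lemma~\ref{lem:colorful} with $p=y_j\le O(k^2)$ and $\varepsilon=\Theta(1/k^4)$ shows that at least half of the hyperedges of $\mathcal H$ are colorful with respect to $(\sigma_{e_1},\dots,\sigma_{e_{y_j}})$; fix any such $h=(u_1,u_2)$ and write $c_\ell^i=\pi_h^{u_\ell}(\sigma_{e_i}(u_\ell))$ for $\ell\in\{1,2\}$, $i\in[y_j]$, so that colorfulness gives $\{c_1^i\}_i\cap\{c_2^i\}_i=\emptyset$. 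The value of $\cost(P)$ at coordinate $(\alpha^*,(h,f))$ is at least $\sum_{i=1}^{y_j}\cost(P_{e_i})_{h,f}=y_j+\sum_{i=1}^{y_j}(f(c_1^i)-f(c_2^i))$, where the inequality uses that all discarded contributions are non-negative. The fluctuation term has mean $0$ and variance $\Theta(y_j)$ by $4$-wise independence of $\mathcal F$, so the concentration lemma in Appendix~\ref{sec:Bernoulli} guarantees the existence of $f^*\in\mathcal F$ with $\sum_{i=1}^{y_j}(f^*(c_1^i)-f^*(c_2^i))\ge \sqrt{y_j}/C$, and thus $\|\cost(P)\|_\infty\ge y_j+\sqrt{y_j}/C\ge y_{j+1}$.

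The main technical obstacle is the design of $\mathcal F$ at an affordable size: I need \emph{existence} (not just high-probability) of a single $f^*\in\mathcal F$ achieving the $\Omega(\sqrt{y_j})$ deviation, while keeping $|\mathcal F|=k^{O(1)}$. Using the full function set $\{0,1\}^{C}$ would give deviations with very high probability but blow the per-hyperedge dimension up to $2^{k^{O(1)}}$, far exceeding the allowed $m^{O(\log k)}$ budget. The $4$-wise independent choice is the smallest family for which the second-moment/Chebyshev-style argument behind Appendix~\ref{sec:Bernoulli} guarantees that a positive fraction of $f\in\mathcal F$ realises the deviation, and symmetry under $f\mapsto 1-f$ then ensures the one-sided bound. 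Once this step is in place, solving the recurrence, verifying the size/dimension bounds, and bounding the running time by $m^{\mathrm{poly}(k)}$ are routine.
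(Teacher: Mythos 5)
Your proposal is correct and follows the same overall architecture as the paper's proof: the same base reduction with $r=2$ and $\varepsilon=\Theta(1/k^4)$ (the paper uses $\varepsilon=1/(8k^4)$ and $p=2k^2$), the same tensoring to $G^{\otimes k}$, and the same inductive recurrence $y_{j+1}\ge y_j+\Omega(\sqrt{y_j})$ obtained by locating, via Lemma~\ref{lem:colorful}, a colorful hyperedge for the $y_j$ assignments sharing a $1$-coordinate and then applying an anti-concentration bound to $2y_j$ Bernoulli contributions. Where you genuinely diverge is in the gadget. The paper (Section~\ref{sec: modified vector system}) defines a $(2,q,p)$-\emph{modified vector system} of dimension $d_0=O(p\log q)$: for each color the two vectors are complementary $0$--$1$ indicators of a random partition of $[d_0]$, existence is shown by a union bound over all $\le p(2q)^p$ multisets (each coordinate beats the threshold with probability $\ge 1/10$ by Claim~\ref{cl: claim about Bernouli random variables}, so all $d_0$ coordinates fail with probability $\le (9/10)^{d_0}$), and the system is then found by brute-force search in time $2^{\mathrm{poly}(k)}$. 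You instead index coordinates by an explicit $4$-wise independent family $\mathcal F$ closed under complementation and argue that a \emph{constant fraction} of $f\in\mathcal F$ achieves the $\Omega(\sqrt{y_j})$ deviation, so one exists deterministically. This buys a fully explicit, search-free construction (and avoids the multiset union bound), at the cost of having to redo Appendix~\ref{sec:Bernoulli} under bounded independence: the appendix's Khintchine-based bound assumes full independence, so you must use the fourth-moment Paley--Zygmund variant ($\E[Z^4]\le 3\E[Z^2]^2$ under $4$-wise independence, then symmetry under $f\mapsto 1-f$ to get the one-sided bound). That adaptation is standard and your constants go through ($\theta=\sqrt{2}/10$ gives deviation $\sqrt{y_j}/10$ with probability $\ge(1-0.02)^2/6>0$, matching the paper's recurrence $y_{j+1}\ge y_j+\sqrt{y_j/100}$ and hence $y_k\ge k^2/900$); just make sure to also enforce the cap $y_j\le k^2$ (or verify $y_j^2\varepsilon\le 1/2$ directly) so that Lemma~\ref{lem:colorful} still yields a colorful hyperedge at every level, as the paper does with its explicit $\min\{k^2,\cdot\}$.
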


Throughout this section, we assume that $NP \nsubseteq \bigcap_{\delta > 0} \mathrm{DTIME}(2^{n^{\delta}})$.
The proof of Claim~\ref{clm : reduction for ell-infinity} is presented in Section~\ref{sec : reduction for ell-infinity} and requires the notion of a \emph{modified vector system} that we describe in Section~\ref{sec: modified vector system}.

\subsection{Proof of Theorem~\ref{thm: hardness of ell-infinity shortest path}}

In this section, we prove Theorem~\ref{thm: hardness of ell-infinity shortest path} assuming Claim~\ref{clm : reduction for ell-infinity}, which is proved later.

We say that a deterministic approximation algorithm $\mathcal{A}$ for the $\ell_{\infty}$-Shortest Path problem is \emph{$c$-good} if it finds an $\left(\frac{\log n}{C'' \cdot c \cdot (\log \log n)^2}\right)^2$-approximate solution in time $O(2^{(\log n)^c})$, where $n$ is the size of the input instance.
Our goal is to prove that there is no $c$-good approximation algorithm for every $c\geq 1$.

Our complexity assumption $NP \nsubseteq \bigcap_{\delta > 0} \mathrm{DTIME}(2^{n^{\delta}})$ implies that $\mathrm{3SAT} \notin \bigcap_{\delta > 0} \mathrm{DTIME}(2^{n^{\delta}})$. Thus, $\mathrm{3SAT} \notin \mathrm{DTIME}(2^{n^{\delta^*}})$ for some $\delta^*\in(0,1)$. 

Our proof of Theorem~\ref{thm: hardness of ell-infinity shortest path} is by contradiction. We assume that there is a $c$-good approximation algorithm $\cal A$ for some $c \geq 1$ and then show how to use $\cal A$ to solve $\mathrm{3SAT}$ in time $O(2^{n^{\delta^*}})$.

Claim~\ref{clm : reduction for ell-infinity} presents a reduction that converts a 3SAT formula of size $m$ to an $\ell_{\infty}$-Shortest Path instance of size at most $2^{c^* \cdot k \log k \log m}$, in time $O\left(m^{k^{c^*}}\right)$, for some $c^* \geq 1$. We fix such a constant $c^* \geq 1$ for the remainder of the proof.

We now present an algorithm that solves $\mathrm{3SAT}$ in time $O(2^{n^{\delta^*}})$. The algorithm receives a 3SAT formula $\varphi$ of size $m$.  Define 
\[
k=\left\lfloor m^{\delta^*/c'}/(C^*\log m)^2\right\rfloor,
\]
where $c'$ is given by $c' = c \cdot c^* \geq 1$.
We assume that $m$ is sufficiently large and, in particular, 
$k > 30$, $k > \sqrt{m^{\delta^*/c'}}$, and $m\geq 2$ (otherwise, we solve $\varphi$ via brute-force search). Let
\[
k' = m^{\delta^*/c^*}/(\log m)^{1/c^*} \geq m^{\delta^*/c'}/(c^*\log m)^2 \geq k.
\]

We use the reduction from Claim~\ref{clm : reduction for ell-infinity} on $\varphi$ with parameter $k$ and obtain an instance $\mathcal{I}$.
The reduction running time is upper bounded by $$O\left(m^{k^{c^*}}\right) \leq O\left(m^{(k')^{c^*}}\right) = O\left(m^{\left(m^{\delta^*}/\log m\right)}\right) = O\left(2^{m^{\delta^*}}\right).$$
Further, the size of $\mathcal{I}$ is $|\mathcal{I}|\leq 2^{c^* \cdot k \log k \log m}$.
Also, since $k \geq \sqrt{m^{\delta^*/c'}} = m^{\delta^*/(2c')}$,  we have $\log m \leq \frac{2c'}{\delta^*}\log k = \frac{2c^* \cdot c}{\delta^*}\log k$ and thus 
$$|\mathcal{I}|\leq 2^{\left(\frac{2(c^*)^2 \cdot c}{\delta^*} \cdot k (\log k)^2\right)}.$$
We can also upper bound $|\mathcal{I}|$ by a function of $m$ using the upper bound $|\mathcal{I}|\leq 2^{c^* \cdot k \log k \log m}$ as follows. Using that $m \geq 2$, we get
$$k=\left\lfloor m^{\delta^*/c'}/(c^*\log m)^2\right\rfloor \leq m^{\delta^*/c'}/(c^*\log m)^2.$$
Also, $\log k \leq \log m$. Therefore, 
$$k \log k \log m \leq k (\log m)^2 \leq \frac{1}{(c^*)^2} m^{\delta^*/c'} \leq \frac{1}{(c^*)^2} m^{\delta^*/c},$$
where the last inequality used that $c'=c \cdot c^* \geq c$. Now the inequality $|\mathcal{I}|\leq 2^{c^* \cdot k \log k \log m}$ implies that $|\mathcal{I}|\leq 2^{\left(\frac{1}{c^*} \cdot m^{\delta^*/c}\right)} \leq 2^{\left(m^{\delta^*/c}\right)}$.
To summarize the algorithm so far, we spent $O\left(2^{m^{\delta^*}}\right)$ time to construct an instance $\mathcal{I}$ of the $\ell_{\infty}$-Shortest Path problem, whose size is upper bounded both by $2^{\left(\frac{2(c^*)^2 \cdot c}{\delta^*} \cdot k (\log k)^2\right)}$ and by $2^{\left(m^{\delta^*/c}\right)}$. By Claim~\ref{clm : reduction for ell-infinity}: if $\varphi$ is satisfiable then $\mathcal{I}$ has a solution of value at most $1$; and, if $\varphi$ is unsatisfiable, then every solution to $\mathcal{I}$ has value at least $\frac{k^2}{900}$.
It now remains to distinguish between the case where the optimal solution value of $\mathcal{I}$ is $\operatorname{OPT} \leq 1$, and the case where it is $\operatorname{OPT} \geq \frac{k^2}{900}$.

Next, we run the algorithm $\mathcal{A}$ on $\mathcal{I}$.
Since the algorithm $\mathcal{A}$ is $c$-good and since the size of $\mathcal{I}$ is upper bounded by $2^{\left(m^{\delta^*/c}\right)}$, the running time of this call is upper bounded by $O\left(2^{(\log |\mathcal{I}|)^c}\right) = O\left(2^{m^{\delta^*}}\right)$.
Furthermore, since $\mathcal{A}$ is $c$-good, the solution returned by $\mathcal{A}$ must be $\alpha$-approximate optimal solution of $\mathcal{I}$, for $\alpha =\max\left\{1,\left(\frac{\log |\mathcal{I}|}{C'' \cdot c \cdot (\log \log |\mathcal{I}|)^2}\right)^2\right\}$.
We now upper bound $\alpha$ by a function of $k$: define $h = \frac{2(c^*)^2 \cdot c}{\delta^*} \cdot k (\log k)^2$; since the size of $|\mathcal{I}|$ is upper bounded by $2^{h}$, it follows that $\frac{\log |\mathcal{I}|}{(\log \log |\mathcal{I}|)^2} \leq \frac{16h}{(\log h)^2} \leq \frac{16h}{(\log k)^2} = \frac{32(c^*)^2 \cdot c \cdot k}{\delta^*}$; so, $\alpha \leq \max\left\{1,\left(\frac{32(c^*)^2 \cdot k}{C'' \cdot \delta^*}\right)^2\right\}$.
We now reveal the value of the absolute constant $C''$ from the statement of Theorem~\ref{thm: hardness of ell-infinity shortest path} to be $C''=\frac{961(c^*)^2}{\delta^*}$, meaning that $\alpha \leq \max\left\{1,\left(\frac{960}{961}\right)^2 \cdot \left(\frac{k}{30}\right)^2\right\}$.
Thus, from the assumption  $k > 30$, we get $\alpha < \left(\frac{k}{30}\right)^2 = \frac{k^2}{900}$.
Therefore, the value of the $\alpha$-approximate solution produced by $\mathcal{A}$ can be used to distinguish between the case where the optimal solution value of $\mathcal{I}$ is $\operatorname{OPT} \leq 1$, and the case where it is $\operatorname{OPT} \geq \frac{k^2}{900}$, as required.
In the former case, we report that the 3SAT formula $\varphi$ is satisfiable; in the latter case, we report that it is unsatisfiable.
This concludes the description of the algorithm for solving a 3SAT formula $\varphi$ of size $m$.
Note that the running time of each part of the algorithm is at most $O\left(2^{m^{\delta^*}}\right)$, as required.

To complete the proof of Theorem~\ref{thm: hardness of ell-infinity shortest path}, we now prove Claim~\ref{clm : reduction for ell-infinity}.

\subsection{Modified Vector Systems}\label{sec: modified vector system}

In this subsection, we introduce the notion of a modified vector system, which is conceptually similar to the vector system defined in Definition~\ref{def : vector system}, but provides slightly different guarantees.

\begin{definition}\label{def : modified vector system}
Consider a collection of vectors $\calS = \{v_i^c: i\in [r], c\in [q]\}$ (all vectors $v_i^c$ are distinct) whose entries are in $\{0,1\}$. Let us say that $c$ is the color of vector $v_i^c$.
$\calS$ is a $(r, q, p)$-modified vector system if the following conditions hold:
\begin{itemize}
    \item  For every two distinct vectors $u_1 \neq u_2$ of the same same color, we have 
    $$\left\|u_1 + u_2\right\|_\infty \leq 1.$$
    \item For every $p' \in [p]$ and a sequence of $p'$ vectors $u_1, \dots, u_{p'}\in {\cal S}$ (the same vector may appear more than once in this sequence), in which every two distinct vectors $u_i\neq u_j$ have different colors, we have
    $$\Biggl\|\sum_{i=1}^{p'} u_i\Biggr\|_\infty \geq \frac{p'}{r} + \frac{1}{10}\sqrt{\frac{p'}{r}}.$$
\end{itemize}
\end{definition}

The next claim shows that for $r = 2$, there exist $(r, q, p)$-modified vector systems with vectors of dimension $O(p \log q)$ and provides an algorithm for constructing such systems.


\begin{claim}\label{clm: constructing modified vector systems}
There exists a deterministic algorithm that, given integers $p, q \geq 2$, constructs a $(2, q, p)$-modified vector system $\mathcal{S}$ with vectors of dimension $O(p \log q)$, in time $2^{O(pq \log q)}$.
\end{claim}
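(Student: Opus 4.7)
The plan is to construct the system probabilistically and then derandomize by exhaustive search. Take $D = Cp\log q$ for a sufficiently large absolute constant $C$; independently for each color $c\in[q]$, sample $v_1^c \in \{0,1\}^D$ with i.i.d.\ Bernoulli$(1/2)$ coordinates, and set $v_2^c = \mathbf{1} - v_1^c$. Then $v_1^c + v_2^c = \mathbf{1}$ coordinatewise, so the same-color condition of Definition~\ref{def : modified vector system} holds deterministically; it remains to arrange the distinct-color condition with positive probability over the random bits.

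For the distinct-color condition, I would parametrize each potential ``bad'' sequence $u_1,\dots,u_{p'}$ by a tuple $(p', T, \epsilon, m)$: a length $p'\in[p]$, the set $T\subseteq[q]$ of colors appearing, signs $\epsilon_c\in\{\pm 1\}$ (choosing $v_1^c$ or $v_2^c$), and positive multiplicities $m_c$ with $\sum_{c\in T} m_c = p'$. Elementary counting gives at most $(C'q)^{p}$ such tuples for some absolute constant $C'$. Fix a tuple and a coordinate $k\in[D]$; writing $X_c = v_1^c(k)$, the $k$-th coordinate $S_k$ of $\sum_i u_i$ is a signed integer combination of independent Bernoulli$(1/2)$ variables with $\mathbb{E}[S_k] = p'/2$ and $\mathrm{Var}(S_k) = \tfrac{1}{4}\sum_{c\in T}m_c^2 \geq p'/4$. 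The key step is an anti-concentration bound
\[
\Pr\!\left[\, S_k \;\geq\; \tfrac{p'}{2} + \tfrac{1}{10}\sqrt{\tfrac{p'}{2}}\,\right] \;\geq\; \rho
\]
for some absolute constant $\rho>0$; this is precisely the Bernoulli estimate stated and proved in Appendix~\ref{sec:Bernoulli}. Intuitively, $S_k - \mathbb{E}[S_k]$ is symmetric around $0$ and the target deviation $\tfrac{1}{10}\sqrt{p'/2}$ is at most $\tfrac{\sqrt{2}}{10}\sigma$ where $\sigma\geq\sqrt{p'}/2$ is the standard deviation of $S_k$, so a Paley--Zygmund-style second/fourth-moment bound (using $\mathbb{E}[(S_k-\mathbb{E} S_k)^4] = O(\sigma^4)$) yields the desired constant-probability lower bound.

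Because the $D$ coordinates are independent, any fixed tuple fails the threshold in every coordinate with probability at most $(1-\rho)^{D}$. A union bound over the $(C'q)^{p}$ tuples gives total failure probability at most $(1-\rho)^{D}(C'q)^{p}$, which is strictly below $1$ once $C$ is large enough relative to $1/\rho$ and $\log C'$; existence of a valid $(2,q,p)$-modified vector system of dimension $D = O(p\log q)$ follows. To derandomize, I would brute-force over all $2^{qD} = 2^{O(pq\log q)}$ candidate tuples $(v_1^c)_{c\in[q]} \in (\{0,1\}^D)^q$, verifying the two conditions for each candidate by iterating over the $\leq (C'q)^{p}$ bad tuples and the $D$ coordinates; each verification takes $2^{O(p\log q)}$ time, yielding total running time $2^{O(pq\log q)}$ as claimed. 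The main obstacle is the anti-concentration bound itself: since $S_k$ can be highly degenerate (for instance, supported on only two values when $|T|=1$) and the threshold is only a small constant fraction of $\sigma$, one cannot simply invoke a Gaussian tail bound or a CLT, and a careful moment-based argument uniform over all parameter tuples $(p',T,\epsilon,m)$ is required.
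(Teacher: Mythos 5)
Your proposal follows essentially the same route as the paper: the same balanced random partition $v_2^c = \mathbf{1} - v_1^c$ making the same-color condition automatic, the same union bound over the at most $q^{O(p)}$ bad sequences combined with the constant-probability anti-concentration bound of Claim~\ref{cl: claim about Bernouli random variables} per coordinate, and the same brute-force derandomization in time $2^{O(pq\log q)}$. The one point you omit is that Definition~\ref{def : modified vector system} requires all $2q$ vectors in the system to be \emph{distinct}; your construction guarantees this within a color class but not across classes, so you need the additional (easy) union bound the paper uses, namely that any two vectors of different colors coincide with probability $2^{-D}$ and there are only $O(q^2)$ such pairs.
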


Before proving Claim~\ref{clm: constructing modified vector systems}, we state the following simple claim, whose proof appears in Appendix~\ref{sec:Bernoulli}.

\begin{claim}\label{cl: claim about Bernouli random variables}
Let $Y_1, \ldots, Y_k \in \{0,1\}$ be independent, unbiased Bernoulli random variables, and let $a_1, \ldots, a_k$ be non-negative real numbers. Define the random variable
$
X = \sum_{i=1}^k a_i Y_i,
$
and let $\mu = \mathbb{E}[X]$ and $\sigma = \sqrt{\mathrm{Var}(X)}$ denote its mean and standard deviation, respectively. Then, for every $c \in [0,1]$,
\[
\Pr\left(X \geq \mu + c\sigma\right) \geq \frac{(1 - c)^2}{4}.
\]
\end{claim}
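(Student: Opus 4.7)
The plan is to derive the bound from the symmetry of $W := X - \mu = \sum_{i=1}^k a_i(Y_i - \tfrac12)$ around zero, combined with a Paley--Zygmund-type second moment argument. The symmetry follows because each summand $Y_i - \tfrac12$ is a symmetric $\pm\tfrac12$-valued random variable, so $W$ and $-W$ have the same distribution. The first step is to bound the fourth moment: expanding $\Exp[W^4]$ and using independence with $\Exp[Y_i - \tfrac12] = 0$, all cross terms containing some factor $Y_j - \tfrac12$ to an odd power vanish, so $\Exp[W^4] = \tfrac{1}{16}\bigl(3(\sum_i a_i^2)^2 - 2\sum_i a_i^4\bigr) \leq 3\sigma^4$, where $\sigma^2 = \tfrac14 \sum_i a_i^2$.

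Next, I would apply the generalized Paley--Zygmund inequality $\Pr(Z \geq t) \geq (\Exp[Z] - t)^2/\Exp[Z^2]$ (valid for non-negative $Z$ and $t \leq \Exp[Z]$) to $Z = W^2$ with threshold $c^2 \sigma^2$. Together with $\Exp[W^2] = \sigma^2$ and the fourth moment bound, this gives $\Pr(W^2 \geq c^2\sigma^2) \geq (1-c^2)^2/3$. By the symmetry of $W$, the events $\{W \geq c\sigma\}$ and $\{W \leq -c\sigma\}$ are disjoint and equiprobable for $c > 0$, so $\Pr(|W| \geq c\sigma) = 2\Pr(W \geq c\sigma)$, which yields $\Pr(W \geq c\sigma) \geq (1-c^2)^2/6$.

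To reach the precise form $(1-c)^2/4$, I would note that $(1-c^2)^2/6 = (1-c)^2(1+c)^2/6$ dominates $(1-c)^2/4$ whenever $(1+c)^2 \geq 3/2$, that is, $c \geq \sqrt{3/2}-1 \approx 0.224$. For smaller $c$, I would complement the above by applying Paley--Zygmund to the non-negative random variable $V = W^+$, using $\Exp[V^2] = \sigma^2/2$ (which is immediate from symmetry) and the Khintchine-type lower bound $\Exp[V] = \Exp[|W|]/2 \geq \sigma/(2\sqrt{2})$ (which follows from the sharp constant in Khintchine's inequality for Rademacher sums, since $W$ can be rewritten as a Rademacher sum scaled by $\tfrac12$). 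Joining the two bounds at the crossover $c = \sqrt{3/2}-1$ and exploiting monotonicity of $\Pr(W \geq c\sigma)$ in $c$ then yields $(1-c)^2/4$ uniformly on $[0,1]$.

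The main obstacle is the bookkeeping at the crossover, ensuring that the small-$c$ and large-$c$ estimates really interlock to give exactly the stated constant $(1-c)^2/4$ rather than some weaker variant. This step is routine once both Paley--Zygmund bounds are in hand, and I would additionally remark that the weaker bound $(1-c^2)^2/6$ produced by the main Paley--Zygmund step alone is already positive for every $c \in [0,1)$ and therefore suffices for the application in Section~\ref{sec:infinity}, where only a positive constant lower bound at a fixed value of $c$ is required.
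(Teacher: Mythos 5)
Your main step is correct and takes a genuinely different route from the paper's. You apply the generalized Paley--Zygmund inequality to $(X-\mu)^2$ together with the fourth-moment bound $\Exp[(X-\mu)^4]\le 3\sigma^4$, which validly yields $\Pr(X\ge\mu+c\sigma)\ge\tfrac12\Pr(|X-\mu|\ge c\sigma)\ge(1-c^2)^2/6$ for all $c\in[0,1]$, with no Khintchine input (your $(1-c^2)^2/6$ is in fact exactly one of the alternative bounds the paper records in its closing remark). The paper instead applies Paley--Zygmund to the first power $|X-\mu|$ at the relative threshold $c\,\Exp[|X-\mu|]$ and invokes Szarek's sharp Khintchine constant $\Exp[|X-\mu|]^2\ge\sigma^2/2$ to reach $(1-c)^2/2$ for the two-sided event and hence $(1-c)^2/4$. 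You are also right that your weaker bound suffices for the only use of the claim (in the proof of Claim~\ref{cl: claim in construction of modified vector system}, where $c=1/\sqrt{50}$ and $(1-\tfrac{1}{50})^2/6\approx 0.16\ge \tfrac1{10}$).

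However, the patch you propose for $c<\sqrt{3/2}-1$ does not close, so the claim with the stated constant $(1-c)^2/4$ is not proved. Paley--Zygmund applied to $V=W^+$ with $\Exp[V^2]=\sigma^2/2$ and $\Exp[V]\ge\sigma/(2\sqrt2)$ gives at best $\Pr(W\ge c\sigma)\ge(\Exp[V]-c\sigma)^2/\Exp[V^2]\ge(\tfrac12-\sqrt2\,c)^2$ on the range where it applies, and $(\tfrac12-\sqrt2\,c)^2<(\tfrac12-\tfrac{c}{2})^2=(1-c)^2/4$ for every $c>0$ in that range: the bound matches the target only at $c=0$ and falls below it immediately. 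Monotonicity cannot rescue this, because lower-bounding $\Pr(W\ge c\sigma)$ by the estimate at some fixed $c_0>0$ would require that estimate to be at least $1/4$ (the target at $c=0$), and neither $(1-c_0^2)^2/6\le 1/6$ nor $(\tfrac12-\sqrt2\,c_0)^2<1/4$ ever reaches $1/4$ for $c_0>0$. Concretely, at $c=0.1$ the target is $(0.9)^2/4\approx 0.2025$ while your two estimates give about $0.163$ and $0.129$. To obtain the factor $(1-c)^2$ exactly one must run Paley--Zygmund as the paper does, on $Z=|X-\mu|$ with the multiplicative threshold $\theta=c$, so that $(1-\theta)^2=(1-c)^2$ appears directly and Khintchine supplies $\Exp[Z]^2/\Exp[Z^2]\ge\tfrac12$; note, though, that even there the passage from the threshold $c\,\Exp[|X-\mu|]$ to the threshold $c\sigma$ deserves justification, since $\Exp[|X-\mu|]\le\sigma$ and the relevant events are nested the wrong way.
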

We now present the proof of Claim~\ref{clm: constructing modified vector systems}.

\begin{proof}[Proof of Claim~\ref{clm: constructing modified vector systems}]
Fix $d_0=\roundup{100 \cdot p \log q} = O(p \log q)$.
We first give a simple brute-force search algorithm that either finds a $(2,q,p)$-modified vector system where the vectors have dimension $d_0$, or correctly declares that no such system exists.
Next, we prove that a $(2,q,p)$-modified vector system of dimension $d_0$ always exists.

\paragraph{The Algorithm}
Observe that a $(2,q,p)$-modified vector system contains exactly $2q$ vectors, and all of the entries of these vectors belong to $\{0,1\}$.
To find such a modified vector system, we iterate over all sequences $(v_1,\ldots,v_{2q})$ of $2q$ vectors from $\{0,1\}^{d_0}$, and check for each such sequence whether it constitutes a $(2,q,p)$-modified vector system.
Performing this check for a single sequence $(v_1,\ldots,v_{2q})$ requires going over all positive integers $p' \leq p$ and all possible multisets of cardinality $p'$ containing only vectors from the sequence $(v_1,\ldots,v_{2q})$, and spending $\operatorname{poly}(p' \cdot qd_0)=\operatorname{poly}(pq)$ time on each such multiset.
As there are at most $\sum_{p'=1}^{p} (2q)^{p'} \leq p (2q)^{p}$ such multisets for each sequence $(v_1,\ldots,v_{2q})$, and as there are at most $2^{2qd_0}=2^{O(qp\log q)}$ such sequences, we get that the total running time of the algorithm is $2^{O(qp \log q)} \cdot p(2q)^{p} \cdot \operatorname{poly}(pq) = 2^{O(qp \log q)}$, as required.

\paragraph{Existence of a Vector System}
It remains to prove that there exists a $(2,q,p)$-modified vector system whose vectors are all of dimension $d_0$.
To prove this, we present a randomized construction that obtains a $(2,q,p)$-modified vector system $\calS = \{v_b^c: b\in [2], c\in [q]\}$ of dimension $d_0$ with non-zero probability.
In this construction, for every color $c \in [q]$, we treat the two vectors $v^c_1,v^c_2 \in \{0,1\}^{d_0}$ as representing two subsets of $[d_0]$, and we construct them so that these two subsets form a partition of the set $[d_0]$.
Before describing how we select these partitions, we observe that this construction will satisfy the first bullet point in Definition~\ref{def : modified vector system}, because, among two distinct vectors $u_1$ and $u_2$ of the same color $c \in [q]$, 
one will be equal to $v^c_1$ and the other to $v^c_2$. Accordingly, $\|u_1+u_2\|_\infty = \|v^c_1 + v^c_2\|_\infty = 1$.

We now describe how we select the partitions. To this end, for every color $c \in [q]$ and every element $j \in [d_0]$, we choose independently and uniformly at random whether to add $j$ to the set represented by $v^c_1$ or to the set represented by $v^c_2$.

It remains to show that, with non-zero probability, this system satisfies the second bullet point in Definition~\ref{def : modified vector system}, and all the vectors in this system are distinct.
Let $\mathcal{E}$ denote the bad event that this system does not satisfy the second bullet point in Definition~\ref{def : modified vector system}.
Let $\mathcal{E}'$ denote the bad event that there exist two distinct colors $c,c' \in [q]$ and two indices $b,b' \in \{1,2\}$ (where $b$ and $b'$ may be the same), such that $v^{c}_{b}=v^{c'}_{b'}$.
If $\mathcal{E}'$ does not occur, then all of the vectors in the system are distinct, becuase two vectors of the same color are always distinct.

It now remains to show that, with non-zero probability, neither of the bad events $\mathcal{E}$ and $\mathcal{E}'$ happens.
Specifically, it suffices to show that 
$\Pr[\mathcal{E}] + \Pr[\mathcal{E}'] < 1$.

We begin by proving that $\Pr[\mathcal{E}']<\frac{1}{2}$: observe that, for each color $c \in [q]$ and for each $b \in \{1,2\}$, the vector $v^c_b$ is uniformly distributed over $\{0,1\}^{d_0}$.
Furthermore, vectors of different colors are pairwise independent. Hence, for every two distinct colors $c,c' \in [q]$ and for every two indices $b,b' \in \{1,2\}$, we have $\Pr[v^{c}_{b}=v^{c'}_{b'}] = 2^{-d_0}$. 
By applying a union bound over all $4\binom{q}{2}$ possible choices of $c, c' \in [q]$ and $b, b' \in \{1,2\}$, we obtain
\[
\Pr[\mathcal{E}'] \leq 4\binom{q}{2} \cdot 2^{-d_0} \leq 4q^2 \cdot 2^{-d_0} \leq 4q^2 \cdot 2^{-6\log q} < \frac{1}{2}.
\]

Finally, we prove that $\Pr[\mathcal{E}] \leq \frac{1}{2}$.
For this proof, we use the following definition: 

\begin{definition}
    A sequence $P$ is a \emph{valid sequence of pairs} if every element of $P$ is a pair $(c,b) \in [q] \times \{1,2\}$, and every two elements $(c_1,b_1),(c_2,b_2)$ of $P$ are either equal, or have $c_1 \neq c_2$.
\end{definition}

Observe that the bad event $\mathcal{E}$ happens if and only if there exists an integer $p' \in [p]$ and a valid sequence of pairs $P=((c_1,b_1),\ldots,(c_{p'},b_{p'}))$ of length $p'$ which satisfy the inequality
\begin{equation}\label{eq: opposite of condition 2 of modified vector system}
    \left\|\sum_{i=1}^{p'} v^{c_i}_{b_i}\right\|_\infty < \frac{p'}{2} + \sqrt{\frac{p'}{200}}.
\end{equation}
Furthermore, observe that there are at most $\sum_{p'=1}^{p}(2q)^{p'} \leq p \cdot (2q)^p$ different possible ways of choosing an integer $p' \in [p]$ and a valid sequence of pairs $P$ of length $p'$.
Therefore, the desired inequality $\Pr[\mathcal{E}] < \frac{1}{2}$ follows by a union bound argument combined with the following claim.

\begin{claim}\label{cl: claim in construction of modified vector system}
    Consider any integer $p' \in [p]$, and any valid sequence of pairs $P=((c_1,b_1),\ldots,(c_{p'},b_{p'}))$ of length $p'$.
    Then, the probability that~\eqref{eq: opposite of condition 2 of modified vector system} holds with respect to $p'$ and $P$, is at most $\frac{1}{2 \cdot p \cdot (2q)^p}$.
\end{claim}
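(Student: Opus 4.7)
The plan is to analyze the random vector $\sum_{i=1}^{p'} v^{c_i}_{b_i}$ coordinate by coordinate: first show that each coordinate exceeds the threshold $p'/2 + \sqrt{p'/200}$ with constant probability, then boost this via independence across the $d_0$ coordinates to drive the failure probability below $1/(2p(2q)^p)$. The crucial preliminary step is to exploit validity of $P$ to extract an independent structure. By validity, whenever $c_i = c_j$ we must have $b_i = b_j$; so if $\tilde c_1,\ldots,\tilde c_t$ denote the distinct colors appearing in $P$ with multiplicities $n_1,\ldots,n_t$ summing to $p'$, each color $\tilde c_k$ carries a single fixed index $b_k^\star \in \{1,2\}$. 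Consequently the $j$th entry of the sum equals
\[
X_j \;:=\; \sum_{i=1}^{p'} v^{c_i}_{b_i}(j) \;=\; \sum_{k=1}^t n_k \cdot v^{\tilde c_k}_{b_k^\star}(j),
\]
where $\{v^{\tilde c_k}_{b_k^\star}(j)\}_{k\in[t]}$ are independent $\mathrm{Bernoulli}(1/2)$ variables by the randomized construction (different colors are independent, and for each fixed color we only ever sample one of the two coupled bits). Moreover, the $X_j$ are independent across $j \in [d_0]$.

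Next I would apply Claim~\ref{cl: claim about Bernouli random variables} to each $X_j$. Its mean is $\mu = p'/2$ and its variance equals $\sum_k n_k^2 / 4 \geq p'/4$ (using only $n_k \geq 1$), so $\sigma \geq \sqrt{p'}/2$. Choosing $c = \sqrt{p'/200}/\sigma$, which lies in $[0, 1/\sqrt{50}] \subseteq [0,1]$, the Bernoulli claim yields
\[
\Pr\!\bigl[X_j \geq p'/2 + \sqrt{p'/200}\bigr] \;\geq\; \tfrac{1}{4}\bigl(1 - 1/\sqrt{50}\bigr)^2 \;\geq\; \rho
\]
for some absolute constant $\rho > 1/6$.

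Finally, because the $X_j$ are independent across $j$, the event of inequality~\eqref{eq: opposite of condition 2 of modified vector system}---that \emph{no} coordinate crosses the threshold---has probability at most $(1-\rho)^{d_0}$. The choice $d_0 = \lceil 100\, p \log q\rceil$ makes this at most $q^{-\Omega(p)}$, which sits comfortably below $1/(2p(2q)^p) \geq q^{-O(p)}$ for $p, q \geq 2$; this is a routine arithmetic check since $\log(2p(2q)^p) = O(p \log q)$ and the constant $100$ is chosen large enough to swallow $1/\log(1/(1-\rho))$. The main obstacle is the first step: recognizing that validity collapses each color's contribution to a single independent Bernoulli bit---rather than to two perfectly anticorrelated bits---is precisely what lets us view $X_j$ as a weighted sum of genuinely independent Bernoullis and apply the anti-concentration bound. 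Without this observation the variance lower bound and the cross-coordinate independence both break down.
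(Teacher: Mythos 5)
Your proposal is correct and follows essentially the same route as the paper's proof: collapse the valid sequence to one independent $\mathrm{Bernoulli}(1/2)$ bit per distinct color with integer weights summing to $p'$, apply Claim~\ref{cl: claim about Bernouli random variables} per coordinate with $\mu=p'/2$ and $\sigma\geq\sqrt{p'}/2$ to get a constant success probability, and then use independence across the $d_0=\lceil 100\,p\log q\rceil$ coordinates to push the failure probability below $\frac{1}{2p(2q)^p}$. The only differences are cosmetic (you group by colors rather than by unique pairs, and you carry the constant $\rho>1/6$ where the paper settles for $1/10$).
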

\begin{proof}
    Consider $p' \in [p]$ and a valid sequence of pairs $P$ of length $p'$, as in the statement of the claim.
    Let $p''$ be the number of unique pairs in $P$, and let $P'$ be the sequence obtained from $P$ after eliminating all duplicate pairs.
    So, the length of $P'$ is $p''$.
    For every $i \in [p'']$, we use $(c_i,b_i)$ to denote the $i$th pair in $P'$, and we use $w_i$ to denote the number of times that this pair appears in the sequence $P$.
    Thus, $\sum_{i=1}^{p''}w_i=p'$.
    Furthermore, the goal of the claim can now be restated as proving that
    \begin{equation}\label{eq: bad event in claim in proof of modified vector system}
        \Pr\left(\left\|\sum_{i=1}^{p''} (w_i \cdot v^{c_i}_{b_i})\right\|_\infty < \frac{p'}{2} + \sqrt{\frac{p'}{200}}\right) \leq \frac{1}{2 \cdot p \cdot (2q)^p}.
    \end{equation}
    For every $i \in [p'']$ and every $j \in [d_0]$, let $v^{c_i}_{b_i}[j]$ denote the $j$'th entry of vector $v^{c_i}_{b_i}$.
    Furthermore, for every $j \in [d_0]$, let $\mathcal{E}_j$ denote the event that the inequality $\sum_{i=1}^{p''}(w_i \cdot v^{c_i}_{b_i}[j]) < \frac{p'}{2} + \sqrt{\frac{p'}{200}}$ holds.
    Then, the probability expression in the left-hand side of~\eqref{eq: bad event in claim in proof of modified vector system} exactly describes the probability that all events $\mathcal{E}_1,\ldots,\mathcal{E}_{d_0}$ occur simultaneously.
    So, it suffices the prove that
    $$\Pr\left(\mathcal{E}_1 \land \ldots \land \mathcal{E}_{d_0}\right) \leq \frac{1}{2 \cdot p \cdot (2q)^p}.$$
    Observe also that by the definition of the vector system and the definition of the events $\mathcal{E}_1,\ldots,\mathcal{E}_{d_0}$, these events are independent.
    Therefore, it suffices to prove that
    $$\Pr\left(\mathcal{E}_1\right) \cdot \ldots \cdot \Pr\left(\mathcal{E}_{d_0}\right) \leq \frac{1}{2 \cdot p \cdot (2q)^p}.$$
    Thus, since $d_0=\roundup{100 \cdot p \log q}$ implies that $\left(1-\frac{1}{10}\right)^{d_0} \leq 2^{-d_0/10} \leq 2^{-10p \log q} \leq \frac{1}{2 \cdot p \cdot (2q)^p}$, it suffices to prove that for every $j \in [d_0]$,
    $$\Pr\left(\mathcal{E}_j\right) \leq (1-\frac{1}{10}).$$
    It thus suffices to show that for any $j \in [d_0]$, 
    \begin{equation}\label{eq: final goal in claim in construction of modified vector system}
        \sum_{i=1}^{p''}(w_i \cdot v^{c_i}_{b_i}[j]) \geq \frac{p'}{2} + \sqrt{\frac{p'}{200}}
    \end{equation}
    holds with probability at least $\frac{1}{10}$.
    Observe that the values $\left\{v^{c_i}_{b_i}[j] \mid i \in [p'']\right\}$ are independent Bernoulli random variables that each have expectation $1/2$.
    Therefore, by Claim~\ref{cl: claim about Bernouli random variables}, the random variable $X = \sum_{i=1}^{p''}(w_i \cdot v^{c_i}_{b_i}[j])$ satisfies
    \begin{equation}\label{eq:result of using probability claim}
        \Pr\left[X \geq \Exp[X] + \frac{\sigma}{\sqrt{50}}\right] \geq \frac{(1 - 1/\sqrt{50})^2}{4} \geq \frac{1}{10},
    \end{equation}
    where $\sigma$ denote the standard deviation of $X$.
    Furthermore, it is not hard to see that the standard deviation of $X$ is $\sigma=\sqrt{\frac{w_1^2}{4}+\ldots+\frac{w_{p''}^2}{4}} \geq \sqrt{\frac{w_1}{4}+\ldots+\frac{w_{p''}}{4}} = \sqrt{\frac{p'}{4}}$.
    Plugging this into~\eqref{eq:result of using probability claim}, we get that $\Pr[X \geq \Exp[X] + \sqrt{\frac{p'}{200}}] \geq \frac{1}{10}$.
    As $\Exp[X] = \frac{\sum_{i=1}^{p''}w_i}{2} = \frac{p'}{2}$, we conclude that~\eqref{eq: final goal in claim in construction of modified vector system} holds with probability at least $\frac{1}{10}$, as we needed to prove.
    This concludes the proof of Claim~\ref{cl: claim in construction of modified vector system}.
\end{proof}

\noindent This concludes the proof of Claim~\ref{clm: constructing modified vector systems}.

\end{proof}
\subsection{Proof of Claim~\ref{clm : reduction for ell-infinity}}\label{sec : reduction for ell-infinity}

We present a reduction that given a 3SAT formula $\varphi$ of size $m$ and an integer parameter $k \geq 2$, produces an $\ell_{\infty}$-Shortest Path instance $\mathcal{I}$ of size $2^{O(k \log k \log m)}$ such that 
\begin{itemize}
\item if $\varphi$ is satisfiable then $\mathcal{I}$ has a solution of $\ell_{\infty}$-cost at most $1$; and,
\item if $\varphi$ is unsatisfiable, then every solution to $\mathcal{I}$ has $\ell_{\infty}$-cost at least ${k^2}/{900}$.
\end{itemize}


At the high level, the reduction is very similar to that in Section~\ref{sec:reduction-l2}. It first constructs a depth-$1$ series-parallel graph $G$ and then uses the tensoring technique from Section~\ref{sec: l2-general-reduction} to obtain $G^{\tensor k}$  with appropriately defined edge costs.

We now present the reduction formally.

\paragraph{Preliminary steps}
First, we apply the reduction from Theorem~\ref{thm:hardness-r-hypergraph-label-cover} with parameters $r=2$ and $\varepsilon = \frac{1}{8k^4}$ to transform the input 3SAT formula $\varphi$ into an instance $\mathcal{H}$ of the $2$-Hypergraph Label Cover problem such that: if $\varphi$ is satisfiable then $\mathcal{H}$ is satisfiable; and, if $\varphi$ is unsatisfiable, then every assignment for $\mathcal{H}$ weakly satisfies at most an $\varepsilon$-fraction of the hyperedges.
By Theorem~\ref{thm:hardness-r-hypergraph-label-cover}, the instance $\mathcal{H}$ contains at most $m^{O\left(\log \frac{2}{\varepsilon}\right)}$ vertices and hyperedges, and at most $2^{O\left(\log \frac{2}{\varepsilon}\right)}$ labels and colors.
The running time of this reduction from Theorem~\ref{thm:hardness-r-hypergraph-label-cover} is $m^{O\left(\log \frac{2}{\varepsilon}\right)} = m^{O(\log k)}$.
Let $q=2^{O\left(\log \frac{2}{\varepsilon}\right)}$ denote the number of colors in $\mathcal{H}$, and define $p=2k^2$.

Next, we run the algorithm from Claim~\ref{clm: constructing modified vector systems} with parameters $p$ and $q$ as defined in the previous paragraph, to construct a $(2,q,p)$-modified vector system $\calS = \{v_i^c: i\in [2], c\in [q]\}$ of dimension $d_0=O(p \log q)$.
Observe that $q=2^{O\left(\log \frac{2}{\varepsilon}\right)} = 2^{O\left(\log \operatorname{poly}(k)\right)} = \operatorname{poly}(k)$ and $p=2k^2 = \operatorname{poly}(k)$.
Hence, by Claim~\ref{clm: constructing modified vector systems}, the running time of this step is $2^{\operatorname{poly}(k)}$.

\paragraph{Constructing base graph $G$} Now, we transform the $2$-Hypergraph Label Cover instance $\mathcal{H}$ into an instance $\hat{\mathcal{I}}$ of the $\ell_{\infty}$-Shortest Path problem using a construction nearly identical to the one in Section~\ref{subsec: basic reduction}: we construct the same graph $G$ as in Section~\ref{subsec: basic reduction} and the only difference is that we use the modified vector system $\calS$ instead of the one we used before.
In Section~\ref{subsec: proof of no-instance untensored for ell infinity}, we state this construction explicitly and prove the following claim, which states the relevant properties of this construction.

\begin{claim}\label{cl: properties of untensored instance in ell infinity proof}
    The instance $\hat{\mathcal{I}}$ of the $\ell_{\infty}$-Shortest Path problem satisfies the following four properties.
    \begin{enumerate}
        \item\label{prop: size of untensored instance in ell infinity proof} The graph in the instance $\hat{\mathcal{I}}$ contains at most $2^{O(\log k \log m)}$ edges, and the dimension $d$ of the cost vectors in this instance is $2^{O(\log k \log m)}$.
        \item\label{prop: yes-instance untensored in ell infinity proof} If the 3SAT formula $\varphi$ is satisfiable, then the $\ell_{\infty}$-Shortest Path instance $\hat{\mathcal{I}}$ has a solution of $\ell_{\infty}$-cost at most $1$.
        \item\label{prop: no-instance untensored in ell infinity proof} If the 3SAT formula $\varphi$ is unsatisfiable, then for every integer $y < k^2$ and every $y$ solutions $P_1,\ldots,P_{y}$ to the instance $\hat{\mathcal{I}}$, the inequality $\left\|\sum_{i=1}^{y} \cost(P_{i})\right\|_\infty \geq y + \sqrt{\frac{y}{100}}$ holds.
        \item\label{prop: 0-1 costs} The entries of the cost vectors of the edges in $\hat{\mathcal{I}}$ all belong to $\{0,1\}$.
    \end{enumerate}
    
\end{claim}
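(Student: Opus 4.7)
The plan is to handle the four properties separately, with properties 1, 2, and 4 being direct bookkeeping and property 3 being the only one that uses the structure of the modified vector system nontrivially.

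For property 1, I combine the sizes of $\mathcal{H}$, of the block graph $G$, and of the vectors in $\mathcal{S}$. Theorem~\ref{thm:hardness-r-hypergraph-label-cover} with $\varepsilon = 1/(8k^4)$ gives $|\mathcal{H}| \leq m^{O(\log k)}$ and $|L| \leq 2^{O(\log k)}$, so $G$ has at most $2^{O(\log k \log m)}$ edges. The cost-vector dimension is $d_0 \cdot |\mathcal{E}|$, where $d_0 = O(p\log q) = \operatorname{poly}(k)$ by Claim~\ref{clm: constructing modified vector systems}, giving the same $2^{O(\log k \log m)}$ bound. Property 4 is immediate from Definition~\ref{def : modified vector system}, since the cost vectors are concatenations of vectors from $\mathcal{S}$ together with zero blocks.

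For property 2, assume $\varphi$ is satisfiable, so there is a fully satisfying assignment $\sigma$ for $\mathcal{H}$. I consider $P_\sigma$ and bound its cost group-by-group. Within the coordinate block of each hyperedge $h = (u_1, u_2)$, only the two edges $e(u_j, \sigma(u_j))$ contribute, producing the vector $v_1^c + v_2^c$ with common color $c = \pi_h^{u_1}(\sigma(u_1)) = \pi_h^{u_2}(\sigma(u_2))$. The first bullet of Definition~\ref{def : modified vector system} then yields $\|v_1^c + v_2^c\|_\infty \leq 1$, and taking the max over all hyperedges gives $\|\cost(P_\sigma)\|_\infty \leq 1$.

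For property 3, assume $\varphi$ is unsatisfiable, fix $y < k^2$ solutions $P_1, \ldots, P_y$, and let $\sigma_1, \ldots, \sigma_y$ be the induced assignments (the case $y=0$ is trivial). Lemma~\ref{lem:colorful} applied with $p := y$ and $\varepsilon = 1/(8k^4)$ gives that at least a $1 - y^2\varepsilon \geq 7/8$ fraction of hyperedges are colorful w.r.t.\ $\sigma_1,\ldots,\sigma_y$, so at least one colorful hyperedge $h = (u_1, u_2)$ exists. Writing $c_{i,j} = \pi_h^{u_j}(\sigma_i(u_j))$, the restriction of $\sum_{i=1}^y \cost(P_i)$ to the coordinate block for $h$ equals $\sum_{i=1}^y (v_1^{c_{i,1}} + v_2^{c_{i,2}})$, a sequence of $p' = 2y$ vectors from $\mathcal{S}$. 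Since $2y < 2k^2 = p$, I may apply the second bullet of Definition~\ref{def : modified vector system} with $r = 2$, obtaining the $\ell_\infty$ lower bound $p'/r + (1/10)\sqrt{p'/r} = y + \sqrt{y}/10 \geq y + \sqrt{y/100}$ on this block, which trivially lower-bounds the $\ell_\infty$-norm of the whole sum.

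The main obstacle, and really the only nontrivial verification, is checking the hypothesis of the second bullet of Definition~\ref{def : modified vector system}, namely that any two distinct vectors in the sequence $(v_1^{c_{1,1}}, v_2^{c_{1,2}}, \ldots, v_1^{c_{y,1}}, v_2^{c_{y,2}})$ have distinct colors. Within each of the two families $\{v_1^{c_{i,1}}\}_i$ and $\{v_2^{c_{i,2}}\}_i$ this follows directly from the indexing of $\mathcal{S}$ (distinct vectors of a common $v_j^*$-family differ precisely in their color). Across the two families, vectors $v_1^{c_{i,1}}$ and $v_2^{c_{i',2}}$ are automatically distinct, and they carry distinct colors exactly because colorfulness of $h$ forces $c_{i,1} \neq c_{i',2}$ for all $i, i'$. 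This is the point at which the parameter choices $\varepsilon = 1/(8k^4)$ and $p = 2k^2$ were tuned: the first ensures many colorful hyperedges for $y$ up to $k^2$, and the second keeps $2y$ inside the regime where the modified vector system's concentration bound applies.
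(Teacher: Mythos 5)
Your proposal is correct and follows essentially the same route as the paper's proof: properties 1, 2, and 4 by direct bookkeeping and the first bullet of Definition~\ref{def : modified vector system}, and property 3 by invoking Lemma~\ref{lem:colorful} to find a colorful hyperedge, checking that the $2y$ restricted cost vectors have pairwise distinct colors, and applying the second bullet with $p' = 2y$, $r = 2$. Your explicit verification of the distinct-colors hypothesis (within and across the two families $v_1^{\cdot}$, $v_2^{\cdot}$) matches the paper's argument about the multiset $Z$.
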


\paragraph{Tensoring step} We proceed with the presentation of the algorithm. As in Section~\ref{sec: l2-general-reduction}, we construct graph $G^{\otimes k}$. Recall that every edge $e$ of $G^{\otimes k}$ is encoded by a $k$-tuple of edges of $G$: $e = e(e_1,\ldots,e_k)$. For such an edge $e$, we define its edge cost by $c(e) = \bigotimes_{i=1}^{k} c(e_i)$ (similarly to what we did for $p < \infty$).
This concludes the description of the algorithm.

It remains to upper bound the size of the obtained instance $\mathcal{I}$ and the running time of the reduction, and most importantly to prove that $\mathcal{I}$ indeed satisfies the completeness and soundness guarantees from Claim~\ref{clm : reduction for ell-infinity}.

\paragraph{Size Analysis}
We upper bound the size of the obtained instance $\mathcal{I}$.
By Property~\ref{prop: size of untensored instance in ell infinity proof} of Claim~\ref{cl: properties of untensored instance in ell infinity proof}, the number of edges $|E(G)|$ and dimension $d$ of the cost vectors in the instance $\hat{\mathcal{I}}$ is $2^{O(\log k \log m)}$.
Therefore, the number of edges  in the instance $\mathcal{I}$ is $|E(G^{\otimes k})| = |E(G)|^k = 2^{O(k \log k \log m)}$, and the cost vector dimension is $d^k = 2^{O(k \log k \log m)}$, as required.

\paragraph{Running Time Analysis}
Constructing the $2$-Hypergraph Label Cover instance $\mathcal{H}$ takes time $m^{O\left(\log \frac{2}{\varepsilon}\right)} = m^{O(\log k)}$.
Obtaining a $(2,q,p)$-modified vector system takes time $2^{\operatorname{poly}(k)}$.
Thus, as the size of the graph and the dimension of the cost vectors in instance $\hat{\mathcal{I}}$ is $m^{O(\log k)}$ by Property~\ref{prop: size of untensored instance in ell infinity proof} of Claim~\ref{cl: properties of untensored instance in ell infinity proof}, the total runtime required to construct the instance $\hat{\mathcal{I}}$ is $m^{O(\log k)} + 2^{\operatorname{poly}(k)}=m^{\operatorname{poly}(k)}$.
Finally, once the instance $\hat{\mathcal{I}}$ is constructed, transforming it into tensored instance $\mathcal{I}$ takes time that is polynomial in the size of the graph and the dimension of the cost vectors; therefore, this transformation takes time $2^{O(k \log k \log m)} = m^{O(k \log k)}$.
In summary, the total running time of the algorithm is $m^{\operatorname{poly}(k)}$, as desired.

\paragraph{Correctness Analysis}
We now prove the correctness of our reduction by showing that if the 3SAT formula $\varphi$ is satisfiable then $\mathcal{I}$ has a solution of $\ell_{\infty}$-cost at most $1$; and, if $\varphi$ is unsatisfiable, then every solution to $\mathcal{I}$ has $\ell_{\infty}$-cost at least $\frac{k^2}{900}$.

Suppose the 3SAT formula $\varphi$ is satisfiable. Then, by Property~\ref{prop: yes-instance untensored in ell infinity proof} of Claim~\ref{cl: properties of untensored instance in ell infinity proof}, there exists a solution path $P$ for the instance $\hat{\mathcal{I}}$ whose cost $\cost(P)$ satisfies $\left\|\cost(P)\right\|_\infty \leq 1$.
Then, by tensoring the path $P$ we obtain a solution path $P^{\otimes k} = \{e(e_1,\ldots,e_k) \mid e_1,\ldots,e_k \in P\}$ for instance $\mathcal{I}$ with 
$$\cost(P^{\otimes k}) = \sum_{e_1,\ldots,e_k \in P}\left(\bigotimes_{i=1}^k c(e_i)\right) = \left(\sum_{e \in P}c(e)\right)^{\otimes k} = \cost(P)^{\otimes k},$$ which therefore satisfies $\left\|\cost(P^{\otimes k})\right\|_\infty = \left\|\cost(P)\right\|_\infty^k \leq 1$.

Now, suppose the 3SAT formula $\varphi$ is unsatisfiable.
For every $j \leq k$, consider the $\ell_\infty$-Shortest Path instance $\hat{\mathcal{I}}^{\otimes j}$ on the graph $G^{\otimes j}$, where the cost $c(e')$ of each edge $e' \in E(G^{\otimes j})$ is defined as $c(e')=\bigotimes_{i=1}^{j} c(e_i)$, where $(e_1,\ldots,e_j)$ is the $j$-tuple of edges of $G$ that corresponds to the edge $e' \in E(G^{\otimes j})$.
Thus, $\hat{\mathcal{I}}^{\otimes k} = \mathcal{I}$ and $\hat{\mathcal{I}}^{\otimes 1} = \hat{\mathcal{I}}$.
We need the following claim, which we will prove later.

\begin{claim}\label{cl: how tensoring affects cost in ell infinity proof}
    Consider an unsatisfiable 3SAT formula $\varphi$. Let $j\leq k-1$. Assume that the cost of the optimal solution for $\hat{\mathcal{I}}^{\otimes j}$ is at least $y_j\in \mathbb{Z}_{\geq 1}$. Then the cost of the optimal solution to the instance $\hat{\mathcal{I}}^{\otimes (j+1)}$ is at least $\min\{k^2,\left\lceil y_j+\sqrt{\frac{y_j}{100}}\right\rceil\}$.
\end{claim}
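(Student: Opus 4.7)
The plan is to fix an arbitrary $s$-$t$ path $P$ in $G^{\otimes(j+1)}$ and lower-bound $\|\cost(P)\|_\infty$ by restricting attention to a single well-chosen coordinate in the $G^{\otimes j}$-factor, which reduces the problem to an application of Property~\ref{prop: no-instance untensored in ell infinity proof} of Claim~\ref{cl: properties of untensored instance in ell infinity proof} to $s$-$t$ paths in $G$.

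First, using the identification $G^{\otimes(j+1)} \equiv G^{\otimes j} \otimes G$, I decompose $P$ as a ``projected'' path $P'$ in $G^{\otimes j}$ together with an $s$-$t$ path $P_e$ in $G$ for each edge $e$ of $P'$. Since edge costs tensor, for each coordinate $(a,b)$ one has $\cost(P)_{(a,b)} = \sum_{e \in P'} c(e)_a \cdot \cost(P_e)_b$, where $c(e)$ denotes the cost of $e$ in $\hat{\mathcal{I}}^{\otimes j}$. Next I identify a ``heavy'' coordinate of $\cost(P')$: Property~\ref{prop: 0-1 costs} of Claim~\ref{cl: properties of untensored instance in ell infinity proof}, together with the fact that tensor products of $\{0,1\}$-vectors are $\{0,1\}$-vectors, gives $c(e) \in \{0,1\}^d$. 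The assumption that $\hat{\mathcal{I}}^{\otimes j}$ has optimum at least $y_j$ forces $\|\cost(P')\|_\infty \geq y_j$, so some coordinate $a^*$ satisfies $|F| \geq y_j$, where $F := \{e \in P' : c(e)_{a^*} = 1\}$. Plugging $a = a^*$ into the identity above yields the key inequality $\|\cost(P)\|_\infty \geq \bigl\|\sum_{e \in F} \cost(P_e)\bigr\|_\infty$.

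The last step is to apply Property~\ref{prop: no-instance untensored in ell infinity proof} of Claim~\ref{cl: properties of untensored instance in ell infinity proof} to the paths $\{P_e\}_{e \in F}$ (which are $s$-$t$ paths in $G$, hence solutions of $\hat{\mathcal{I}}$). Because that property is restricted to $y < k^2$, I take a subset $F' \subseteq F$ of size $y' := \min\{y_j,\, k^2 - 1\}$ and apply it to the paths indexed by $F'$, obtaining $\bigl\|\sum_{e \in F'} \cost(P_e)\bigr\|_\infty \geq y' + \sqrt{y'/100}$; non-negativity of the cost vectors lifts this bound coordinatewise from $F'$ back to $F$. Since the left-hand side is a non-negative integer, it is in fact at least $\lceil y' + \sqrt{y'/100}\rceil$. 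In the case $y_j < k^2$ we have $y' = y_j$ and this is exactly $\lceil y_j + \sqrt{y_j/100}\rceil$. In the case $y_j \geq k^2$ we have $y' = k^2 - 1 \geq 3$ (since $k \geq 2$), so $\sqrt{(k^2-1)/100} > 0$ and the ceiling is at least $k^2$. Combining the two regimes gives exactly $\min\{k^2,\, \lceil y_j + \sqrt{y_j/100}\rceil\}$.

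The main obstacle is reconciling the ``$y < k^2$'' restriction in Property~\ref{prop: no-instance untensored in ell infinity proof} with the need to recover a bound of $k^2$ when $y_j$ is itself $\geq k^2$. The subset-and-monotonicity device in the previous paragraph handles this cleanly, relying on non-negativity of the cost vectors and on integrality of the $\ell_\infty$-norm of a sum of $\{0,1\}$-vectors to push a strictly positive fractional gain across the ceiling, thereby unifying the two cases into the single $\min$ appearing in the statement.
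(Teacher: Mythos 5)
Your proposal is correct and follows essentially the same route as the paper's proof: decompose $P$ via $G^{\otimes(j+1)} \equiv G^{\otimes j}\otimes G$, pick a heavy coordinate of $\cost(P')$ to extract at least $y_j$ edges whose fiber paths contribute additively, truncate to $\min\{y_j, k^2-1\}$ of them (the paper's $\hat{y_j}$ is your $y'$) so that Property~\ref{prop: no-instance untensored in ell infinity proof} applies, and finish with non-negativity and integrality. The only cosmetic difference is that the paper verifies the inequality $\lceil \hat{y_j}+\sqrt{\hat{y_j}/100}\rceil \geq \min\{k^2,\lceil y_j+\sqrt{y_j/100}\rceil\}$ up front, whereas you do the equivalent two-case check at the end.
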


Define integers $y_1,\ldots,y_k$ so that $y_1=1$, and, for every $j  \in [k-1]$, the integer $y_{j+1}$ is $y_{j+1} = \min\{k^2,\left\lceil y_j+\sqrt{\frac{y_j}{100}}\right\rceil\}$.
Since $\varphi$ is unsatisfiable, Claim~\ref{cl: how tensoring affects cost in ell infinity proof} implies that 
the cost of the optimal solution to each instance $\hat{\mathcal{I}}^{\otimes j}$ is at least $y_j$ for every $j\in [k]$.

We show that for this sequence $y_j$, we have $y_j \geq \frac{j^2}{900}$ for all $j\leq k$.
Indeed, if $y_{j} \geq \frac{j^2}{900}$ and $j<k$, then $y_{j+1} \geq \frac{j^2}{900}+\sqrt{\frac{j^2}{90000}} = \frac{j^2+3j}{900} \geq \frac{(j+1)^2}{900}$.
Thus, the cost of every solution to $\hat{\mathcal{I}}^{\otimes k} = \mathcal{I}$ has cost at least $y_k \geq \frac{k^2}{900}$, as required.

Now, we prove Claim~\ref{cl: how tensoring affects cost in ell infinity proof}.

\begin{proof}[Proof of Claim~\ref{cl: how tensoring affects cost in ell infinity proof}]
Fix any $j \leq k-1$, assume that the 3SAT formula $\varphi$ is unsatisfiable, and let $y_j$ be an integer as in the statement of the claim.
Consider an $s$-$t$ path $P$ in $G^{\otimes (j+1)}$.
To prove the claim, we  show that the cost of $P$ is at least $\min\{k^2,\left\lceil y_j+\sqrt{\frac{y_j}{100}}\right\rceil\}$.
Let $\hat{y_j} = \min\{k^2-1,y_j\}$. As $k \geq 2$, we have
$$\left\lceil \hat{y_j}+\sqrt{\frac{\hat{y_j}}{100}}\right\rceil \geq \min\{k^2,\left\lceil y_j+\sqrt{\frac{y_j}{100}}\right\rceil\},$$
Therefore, it suffices to prove that $\left\|\cost(P)\right\|_\infty \geq \left\lceil\hat{y_j} + \sqrt{\frac{\hat{y_j}}{100}}\right\rceil$.

The path $P$ in $G^{\otimes (j+1)}$ is specified by an $s$-$t$ path $P'$ in $G^{\otimes j}$, and a collection of paths $P_e$ in $G$ for all edges $e \in P'$.
Using this representation, the cost vector of $P$ can be expressed as
\begin{equation}\label{eq: path in tensored graph as tensor of paths in G, in proof of ell infinity lowerbound}
    \cost(P)=\sum_{e \in P'} c(e) \otimes \cost(P_e).
\end{equation}
Recall that the cost vectors of all edges in $\hat{\mathcal{I}}$ are in $\{0,1\}^d$.
Therefore, the cost vector of each edge in $G^{\otimes j}$ in in $\{0,1\}^{d^j}$.
By the assumption of the claim, the cost of every $s$-$t$ path in $G^{(j)}$ is at least $y_j$. In particular, the cost of $P'$ is at least $y_j$:
$$\left\|\cost(P')\right\|_\infty \geq y_j \geq \hat{y_j}.$$
Thus, there exists a coordinate $z$ and $\hat{y_j}$ distinct edges $e_1,\ldots,e_{\hat{y_j}} \in P'$ such that the $z$'th coordinate of each of the cost vectors $c(e_1),\ldots,c(e_{\hat{y_j}})$ is $1$.
For the remainder of the proof, fix any such $z$ and $e_1,\ldots,e_{\hat{y_j}}$.
Observe that for every $\hat{y_j}$ vectors $u_1,\ldots,u_{\hat{y_j}}$, it holds that 
$$\biggl\|\sum_{i=1}^{\hat{y_j}} c(e_i) \otimes u_i\biggr\|_\infty \geq \biggl\|\sum_{i=1}^{\hat{y_j}} u_i\biggr\|_\infty.$$
In particular, for paths $P_{e_1},\ldots,P_{e_{\hat{y_j}}}$ in $G$  (some of them may be equal) and their cost vectors $\cost(P_{e_1}),\ldots,\cost(P_{e_{\hat{y_j}}}) \in \bbR^d$, we have 
$$\biggl\|\sum_{i=1}^{\hat{y_j}} c(e_i) \otimes \cost(P_{e_i})\biggr\|_\infty \geq \biggl\|\sum_{i=1}^{\hat{y_j}} \cost(P_{e_i})\biggr\|_\infty.$$
It follows from~\eqref{eq: path in tensored graph as tensor of paths in G, in proof of ell infinity lowerbound} that
\begin{equation}\label{eq: cost of path in tensored graph as sum of costs of paths in blocks, in proof of ell infinity lowerbound}
    \left\|\cost(P)\right\|_\infty
    = \biggl\|\sum_{e \in P'} c(e) \otimes \cost(P_e)\biggr\|_\infty
    \geq \biggl\|\sum_{i=1}^{\hat{y_j}} c(e_i) \otimes \cost(P_{e_i})\biggr\|_\infty
    \geq \biggl\|\sum_{i=1}^{\hat{y_j}} \cost(P_{e_i})\biggr\|_\infty,
\end{equation}
Here we used that all cost vectors $c(e) \otimes \cost(P_{e})$ have non-negative coordinates.

We lower bound the costs of paths $P_{e_i}$ using Property~\ref{prop: no-instance untensored in ell infinity proof} of Claim~\ref{cl: properties of untensored instance in ell infinity proof}: $$\Bigl\|\sum_{i=1}^{\hat{y_j}} \cost(P_{i})\Bigr\|_\infty \geq \hat{y_j} + \sqrt{\frac{\hat{y_j}}{100}}.$$

Plugging this into~\eqref{eq: cost of path in tensored graph as sum of costs of paths in blocks, in proof of ell infinity lowerbound}, we get that $\left\|\cost(P)\right\|_\infty \geq \hat{y_j} + \sqrt{{\hat{y_j}}}/{10}$.
As all entries of the cost vector $\cost(P)$ must be integral, the last inequality implies that $\left\|\cost(P)\right\|_\infty \geq \Bigl\lceil\hat{y_j} + \sqrt{{\hat{y_j}}}/{10}\Bigr\rceil$, as we needed to prove.
This concludes the proof of Claim~\ref{cl: how tensoring affects cost in ell infinity proof}.
\end{proof}
To complete the proof of Claim~\ref{clm : reduction for ell-infinity}, it now only remains to prove Claim~\ref{cl: properties of untensored instance in ell infinity proof}.

\subsection{Proof of Claim~\ref{cl: properties of untensored instance in ell infinity proof}}\label{subsec: proof of no-instance untensored for ell infinity}

Since the coordinates of the vectors in the modified vector system $\calS$ are either zeros or ones, the coordinates of the cost vectors for $G$ are also zeros and ones, as stated in Property~\ref{prop: 0-1 costs} of Claim~\ref{cl: properties of untensored instance in ell infinity proof}.
Next, we prove the other three properties of this claim.

\paragraph{Proof of Property~\ref{prop: size of untensored instance in ell infinity proof}.}
Our goal in this paragraph is to prove that the number of edges and the dimension of the cost vectors in $\hat{\mathcal{I}}$ are both upper bounded by $2^{O(\log k \log m)}$. 
The number of blocks in the series parallel graph of instance $\hat{\mathcal{I}}$ is equal to the number of vertices in $\mathcal{H}$, which is $|V|=m^{O\left(\log \frac{2}{\varepsilon}\right)}$. Furthermore, each such block contains $|L| = 2^{O\left(\log \frac{2}{\varepsilon}\right)}$ edges.
So, the total number of edges in $\hat{\mathcal{I}}$ is $|V| \cdot |L| \leq (2m)^{O\left(\log \frac{2}{\varepsilon}\right)} \leq 2^{O(\log k \log m)}$, where we used the definition of $\varepsilon = \frac{1}{8k^4}$ in the last inequality.
It remains to upper bound the dimension of the cost vectors.
Recall that the dimension of the cost vectors in $\hat{\mathcal{I}}$ is $d=d_0 \cdot |\mathcal{E}|$, where $d_0=O(p \log q)$ is the dimension of vectors in the $(2,q,p)$-modified vector system $\calS$, and $|\mathcal{E}|= m^{O\left(\log \frac{2}{\varepsilon}\right)}$ is the number of hyperedges in $\mathcal{H}$.
Furthermore, as $p=2k^2$ and $q=2^{O\left(\log \frac{2}{\varepsilon}\right)}$, we have that $d_0 \ll m^{O\left(\log \frac{2}{\varepsilon}\right)}$.
Thus,
$$d = d_0 \cdot |\mathcal{E}| = m^{O\left(\log \frac{4}{\varepsilon}\right)} = 2^{O\left(\log k \log m\right)}.$$

\paragraph{Proof of Property~\ref{prop: yes-instance untensored in ell infinity proof}.}
Suppose the 3SAT formula $\varphi$ is satisfiable.
Then, by the construction of the $2$-Hypergraph Label Cover instance $\mathcal{H}$, there exists an assignment $\sigma$ for $\mathcal{H}$ that satisfies all the hyperedges.
Consider the solution path $P=\left\{e(u,\sigma(u)) : u \in V\right\}$ of $\hat{\mathcal{I}}$ corresponding to this assignment.
Furthermore, consider any hyperedge $h=(u_1,u_2) \in \mathcal{E}$ and let $c^h(P) \in \bbR^{d_0}$ denote the restriction of the cost of $P$ to the group of coordinates corresponding to $h$.
To prove Property~\ref{prop: yes-instance untensored in ell infinity proof}, it suffices to show that $\left\|c^h(P)\right\|_\infty \leq 1.$
Observe that $c^h(P) = \sum_{e \in P}c^H(e)$.
Furthermore, note that the only edges $e$ for which the vector $c^h(e)$ has non-zero entries are those in the blocks corresponding to the vertices $u_1$ and $u_2$ of $\mathcal{H}$.
Thus, $c^h(P) = c^H(e(u_1,\sigma(u_1))) + c^H(e(u_2,\sigma(u_2)))$.
We show that $\left\|c^H(e(u_1,\sigma(u_1))) + c^H(e(u_2,\sigma(u_2)))\right\|_\infty \leq 1$.
Since the assignment $\sigma$ satisfies the hyperedge $h=(e_1,e_2)$, we get that the colors $\pi_h^{u_1}(\sigma(u_1))$ and $\pi_h^{u_2}(\sigma(u_2))$ are the same.
So, the cost vectors $c^H(e(u_1,\sigma(u_1)))$ and $c^H(e(u_2,\sigma(u_2)))$ are two distinct vectors of the same color from the modified vector system $\calS$.
By the definition of a modified vector system, we get $$\left\|c^H(e(u_1,\sigma(u_1))) + c^H(e(u_2,\sigma(u_2)))\right\|_\infty \leq 1.$$

\paragraph{Proof of Property~\ref{prop: no-instance untensored in ell infinity proof}.}
Suppose that the 3SAT formula $\varphi$ is unsatisfiable.
Consider any integer $y < k^2$ and solutions $P_1,\ldots,P_y$ to the instance $\hat{\mathcal{I}}$, as in the statement of the property.
Our goal is to prove that $\left\|\sum_{i=1}^{y} \cost(P_{i})\right\|_\infty \geq y + \sqrt{\frac{y}{100}}$.
For every $i \in [y]$, there exists an assignment $\sigma_i$ for $\mathcal{H}$ that corresponds to the path $P_i$, such that $P_i=\left\{e(u,\sigma_i(u)) : u \in V\right\}$.
We begin by showing that there must exist a hyperedge $h \in \mathcal{E}$ that is colorful with respect to the assignments  $\sigma_1,\ldots,\sigma_y$. 
Recall that a hyperedge $h \in \mathcal{E}$ is \emph{colorful} with respect to the assignments $\sigma_1,\ldots,\sigma_y$ if  $\pi_h^u(\sigma_i(u)) \neq \pi_h^v(\sigma_j(v))$ for all $i,j\in [p]$ and all distinct $u,v\in h$.
Since $\varphi$ is unsatisfiable, by the construction of $\mathcal{H}$, every assignment to $\mathcal{H}$ weakly satisfies at most an $\varepsilon$-fraction of the hyperedges $h \in \mathcal{E}$.
So, by Lemma~\ref{lem:colorful}, at least a $1-y^2\varepsilon$ fraction of the hyperedges $h \in \mathcal{E}$ are colorful with respect to the assignments $\sigma_1,\ldots,\sigma_y$.
Since $y < k^2 = p/2$ and $\varepsilon = \frac{1}{2p^2}$, we get that $y^2\varepsilon \leq 1/2$; so, by Lemma~\ref{lem:colorful}, there must exist at least one hyperedge $h \in \mathcal{E}$ that is colorful with respect to the assignments $\sigma_1,\ldots,\sigma_y$.
Fix any  such hyperedge $h \in \mathcal{E}$.
To prove Property~\ref{prop: no-instance untensored in ell infinity proof}, it suffices to show that $\left\|\sum_{i=1}^{y} c^h(P_{i})\right\|_\infty \geq y + \sqrt{\frac{y}{100}}$, which we rewrite as 
\begin{equation}\label{eq: goal in proof of no case in untensored ell infinity construction}
    \left\|\sum_{i \in [y], u \in h} c^h(e(u,\sigma_i(u)))\right\|_\infty \geq y + \sqrt{\frac{y}{100}}.
\end{equation}
using that $\sum_{i=1}^{y} c^h(P_{i}) = \sum_{i \in [y], u \in h} c^h(e(u,\sigma_i(u)))$.

Consider the multiset $Z = \left\{ c^h(e(u,\sigma_i(u))) : i \in [y], u \in h \right\}$ of vectors that appear inside the sum in~\eqref{eq: goal in proof of no case in untensored ell infinity construction}. Observe that all such vectors belong to the $(2,q,p)$-modified vector system $\calS = \{v_i^c: i\in [2], c\in [q]\}$.
Next, we show that no two distinct vectors from $Z$ have the same color according to $\calS$.
Let $u_1,u_2$ denote the two vertices in the hyperedge $h$.
Then, $Z$ contains vectors $v_1^c$ with colors $c \in \left\{\pi^{u_1}_h(\sigma_i(u_1)) : i \in [y]\right\}$ vectors $v_2^c$ with colors $c \in \left\{\pi^{u_2}_h(\sigma_i(u_2)) : i \in [y]\right\}$. We now observe that these two sets of colors are disjoint: indeed, this follows from the definition of a colorful hyperedge, as $h$ is colorful with respect to the assignments $\sigma_1,\ldots,\sigma_y$.
We conclude that no two distinct vectors have the same color, as desired.

Observe that the cardinality of $Z$ is $2y \leq 2k^2 = p$.
Since $Z$ is a multiset of vectors from the $(2,q,p)$-modified vector system and no two distinct vectors in $Z$ have the same color, it follows from the definition of a $(2,q,p)$-modified vector system that $\left\|\sum_{i \in [y], u \in h} c^h(e(u,\sigma_i(u)))\right\|_\infty \geq \frac{|Z|}{2} + \sqrt{\frac{|Z|}{200}} = y + \sqrt{\frac{y}{100}}$, as we needed to prove.
This concludes the proof of Property~\ref{prop: no-instance untensored in ell infinity proof}, and thus concludes the proof of Claim~\ref{cl: properties of untensored instance in ell infinity proof}.

\section{Asymptotic Bounds on Higher-Order Bell Numbers}
\label{sec:bounds-on-bell}
In this section, we first prove the asymptotic lower bounds on $\bell_k(p)$ stated in Theorem~\ref{thm:bkp_lower_bound}, and then present the upper bound. The upper bound is essentially due to~\cite{MOT24}, with a minor refinement for the case $k > k_0$: it improves from $\Theta(p \cdot k^{1 - 1/p})$ to $\Theta(p \cdot (k - k_0)^{1 - 1/p})$, which is relevant only in the regime $k = (1 + \delta)k_0$.

Before proceeding with the proof, we justify the inequality  
\[
p\cdot (k - k_0)^{1 - 1/p} = \Theta\big(p \cdot \min(k^{1 - 1/p},\, k - k_0)\big)
\]
from item 2 of Theorem~\ref{thm:bkp_lower_bound}.  
The left-hand side is clearly at most the right-hand side, since  
\[
p\cdot (k - k_0)^{1 - 1/p} \leq p \cdot k^{1 - 1/p} \quad \text{and} \quad p\cdot (k - k_0)^{1 - 1/p} \leq p\cdot(k - k_0).
\]

To show the reverse bound, we consider two cases. If $k \geq 2k_0$, then $k - k_0 \geq k/2$, and thus  
$(k - k_0)^{1 - 1/p} \geq \frac{1}{2} k^{1 - 1/p}$.
If  $k \in [k_0 + 1, 2k_0)$, then  
$(k - k_0)^{1/p} \leq k_0^{1/p} = O(1)$,  
hence  
$(k - k_0)^{1 - 1/p} \geq \Omega(k - k_0)$.

\subsection{Lower Bound}
To prove the lower bound from Theorem~\ref{thm:bkp_lower_bound}, we will introduce some machinery that reduces the problem of bounding $\bell_k(P)$ to constructing a special family of trees. 

\begin{definition}
\label{def:realizable_trees}
A rooted plane tree is a rooted tree, in which the children of every node are arranged from left to right.
We say that a rooted plane tree $T = (V_T,E_T)$ is a $(k,p)$-tree if it has exactly $p$ leaves and all root-leaf paths have length $k$. 
Let $\mathcal{T}_{k,p}$ be the set of all $(k,p)$-trees. 
We denote the root of a  $(k,p)$-tree $T$ by $r_T$.

We will refer to tree nodes at distance $i$ from the root as level-$i$ nodes. In particular, the root is at level $0$ and all leaves are at level $k + 1$. 

Let us say that two $(k,p)$-trees are identical if one can be obtained from the other by renaming vertices (in particular, the order of children of every node is the same in both trees). 

A bijection $\varphi:V(T) \to V(T')$ is a transformation of a $(k,p)$-tree $T$ to a $(k,p)$-tree $T'$ if it maps root $r_T$ to $r_{T'}$ and adjacent vertices to adjacent; a transformation does not have to preserve the order of children. 

Two trees are equivalent if there is a transformation that maps one to the other.
An automorphism of $T$ is a transformation of $T$ to $T$. We denote the set of all transformations of $T$ as $\Aut(T)$.
\end{definition}
We note that if two $(k,p)$-trees are identical, then they are also equivalent; however, the converse need not hold. Specifically, $T$ and $T'$ are equivalent if there exists a transformation that maps $T$ to $T'$, whereas they are identical if such a transformation additionally \textit{preserves the order of the children}.

\begin{definition}\label{def:tree-to-partition}
Let $\mathbb{S}_p$ be the set of all permutations of $[p]$.
Given a $(k,p)$-tree $T$ and permutation $\tau \in \mathbb{S}_p$, we define a $k$-level hierarchical partition $P(T, \tau)$ as follows.
Denote the leaves of $T$ from left to right as $v_1,\dots, v_p$.
Elements $a,b\in [p]$ belong to the same part of the partition at level $i$ if leaves $v_{\tau(a)}$ and $v_{\tau(b)}$ have a common ancestor at level $i$ (in other words, the lowest common ancestor of $v_{\tau(a)}$ and $v_{\tau(b)}$ is at some level $j \geq i$).

For all $T \in \mathcal{T}_{k,p}$, let $\Gamma_T := \{ P(T,\tau) : \tau \in \mathbb{S}_p \}$ be the set of all $k$-level hierarchical partitions of $[p]$ that can be produced with $T$. 
\end{definition}

We now make a few simple observations about $\Gamma_T$ and $\Aut(T)$. 
This first claim states that if two trees are non-equivalent, then they produce different hierarchical partitions. 

\begin{claim}
    If $T_1 \in \mathcal{T}_{k,p}$ and $T_2\in \mathcal{T}_{k,p}$ are non-equivalent then 
    \[
        \Gamma_{T_1} \cap \Gamma_{T_2} = \emptyset.
    \]
\end{claim}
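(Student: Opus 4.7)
The plan is to show that from any hierarchical partition $\pi = P(T,\tau) \in \Gamma_T$ we can canonically reconstruct the isomorphism type of $T$. Once this is established, two trees whose $\Gamma$-sets meet must be equivalent, which is the contrapositive of the claim.

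First, I would define a reconstruction map $R$ sending a $k$-level hierarchical partition $\pi = (\pi_1,\dots,\pi_k)$ of $[p]$ to a rooted (unordered) $(k,p)$-tree $R(\pi)$, as follows. The root is a single node at level $0$. For each $i \in \{1,\dots,k\}$ the nodes at level $i$ are indexed by the parts of $\pi_i$. The children of a level-$i$ node labeled $S \in \pi_i$ are the level-$(i+1)$ nodes labeled by the parts $S' \in \pi_{i+1}$ with $S' \subseteq S$ (well-defined because $\pi_{i+1}$ refines $\pi_i$), and the children of the root are the level-$1$ nodes. Finally, each element $a \in [p]$ is attached as a leaf to the unique level-$k$ node whose label contains $a$. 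This yields a rooted tree with $p$ leaves, all at depth $k+1$, so $R(\pi) \in \mathcal{T}_{k,p}$ (viewed without a specified sibling order).

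Next, I would verify the key lemma: for every $T \in \mathcal{T}_{k,p}$ and every $\tau \in \mathbb{S}_p$, the reconstructed tree $R(P(T,\tau))$ is equivalent to $T$. The transformation $\varphi : V(T) \to V(R(P(T,\tau)))$ is defined by sending the leaf $v_{\tau(a)}$ of $T$ to the leaf $a$ of $R(P(T,\tau))$, and sending each internal level-$i$ node $u$ of $T$ to the node of $R(P(T,\tau))$ labeled by the set
\[
S_u := \{ a \in [p] : v_{\tau(a)} \text{ is a descendant of } u \}.
\]
Using the definition of $P(T,\tau)$ directly, two leaves $v_{\tau(a)}, v_{\tau(b)}$ share a level-$i$ ancestor in $T$ iff $a,b$ lie in a common part of $\pi_i$, so the sets $\{S_u : u \text{ is a level-}i\text{ node of } T\}$ are exactly the parts of $\pi_i$. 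Thus $\varphi$ is a bijection that maps root to root and sends parent-child pairs to parent-child pairs, i.e.\ a transformation in the sense of Definition~\ref{def:realizable_trees}.

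Finally, I would conclude by contradiction. Suppose some partition $\pi$ lies in $\Gamma_{T_1} \cap \Gamma_{T_2}$, so $\pi = P(T_1,\tau_1) = P(T_2,\tau_2)$ for some $\tau_1,\tau_2 \in \mathbb{S}_p$. By the reconstruction lemma applied to each side, both $T_1$ and $T_2$ are equivalent to $R(\pi)$; composing one transformation with the inverse of the other gives a transformation from $T_1$ to $T_2$, contradicting the hypothesis that $T_1$ and $T_2$ are non-equivalent. I expect the only subtle point to be keeping the ordered/unordered distinction straight: $R(\pi)$ is naturally unordered, and equivalence of $(k,p)$-trees does not require preserving the left-to-right order of children, so this causes no trouble.
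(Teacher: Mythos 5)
Your proof is correct. The paper actually states this claim without proof, presenting it as one of a ``few simple observations,'' so there is no argument in the paper to compare against; your reconstruction map $R$ --- recovering the unordered tree from the hierarchical partition by taking parts of $\pi_i$ as level-$i$ nodes with containment as the parent--child relation, and checking that $R(P(T,\tau))$ is equivalent to $T$ --- is the natural and complete way to justify the observation, and your handling of the ordered/unordered distinction is exactly the point that needs care.
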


Now observe that $(T,\alpha)$ and $(T, \beta)$ define the same $k$-level partition if only if there is an automorphism of $T$ that maps each $v_i$ to $v_{\beta\alpha^{-1}(i)}$. Accordingly, we obtain the following corollary.
\begin{claim}
    \[
        |\Gamma_{T}| = \frac{p!}{| \Aut(T) |}.
    \]
\end{claim}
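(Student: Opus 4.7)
The plan is to apply a standard coset-counting argument built on top of the observation that immediately precedes the claim. That observation says $P(T,\alpha)=P(T,\beta)$ holds precisely when $\beta\alpha^{-1}$ is the leaf-permutation induced by some automorphism of $T$. Accordingly, I will define $H\subseteq\mathbb{S}_p$ to be the set of permutations $\pi$ such that $\varphi(v_i)=v_{\pi(i)}$ for some $\varphi\in\Aut(T)$. Since composition of automorphisms composes the corresponding leaf-permutations, $H$ is a subgroup of $\mathbb{S}_p$. The map $\tau\mapsto P(T,\tau)$ is a surjection from $\mathbb{S}_p$ onto $\Gamma_T$, and by the observation its fibers are exactly the cosets $H\alpha$. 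Hence $|\Gamma_T|=p!/|H|$, and the proof reduces to showing $|H|=|\Aut(T)|$.

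For this remaining step, I will show that the homomorphism $\rho:\Aut(T)\to\mathbb{S}_p$ sending each automorphism to its leaf-permutation is injective. Concretely, I must show that any $\varphi\in\Aut(T)$ fixing every leaf must be the identity. The plan is to argue level by level. Since $\varphi$ preserves the root and adjacency, it preserves the distance from the root, so each level is mapped to itself. For any vertex $u$, let $L(u)$ denote the set of leaves in the subtree rooted at $u$. The key point is that at each level the subtrees rooted at distinct vertices are vertex-disjoint, which forces $u\mapsto L(u)$ to be injective on every level. Since $\varphi$ fixes leaves, $L(\varphi(u))=\varphi(L(u))=L(u)$, and because $\varphi(u)$ lies at the same level as $u$, this forces $\varphi(u)=u$. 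Hence $\rho$ is injective, so $|H|=|\Aut(T)|$ and consequently $|\Gamma_T|=p!/|\Aut(T)|$.

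The only (mild) obstacle is the faithfulness of the leaf action, i.e.\ injectivity of $\rho$; once that is in place, the claim is essentially Lagrange's theorem applied to the subgroup $H\leq\mathbb{S}_p$.
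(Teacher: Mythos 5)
Your proof is correct and follows exactly the route the paper intends: the paper states the claim as an immediate corollary of the preceding observation (that $P(T,\alpha)=P(T,\beta)$ iff $\beta\alpha^{-1}$ is induced by an automorphism), and your coset argument is the standard formalization of that step. The only detail you add beyond what the paper leaves implicit is the faithfulness of the leaf action of $\Aut(T)$, and your level-by-level argument for it is sound (it uses that in a $(k,p)$-tree distinct same-level vertices have disjoint, nonempty leaf sets).
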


Combining the above claims, we get the following lower bound using the mapping $(T, \tau)\mapsto P(T, \tau)$ from Definition \ref{def:tree-to-partition}. 

\begin{corollary}
    \label{cor:bkp_bound}
    Let $\mathcal{T}$ be a set of non-equivalent $(k,p)$-trees. 
    Then
    \[
        \bell_k(p) \geq p! \sum_{T \in \mathcal{T}} | \Aut(T) |^{-1}.
    \]
\end{corollary}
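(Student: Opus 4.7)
The plan is to read Corollary~\ref{cor:bkp_bound} as a direct packaging of the two preceding claims. My starting observation is that, by Definition~\ref{def:tree-to-partition}, every element of $\Gamma_T$ is by construction a $k$-level hierarchical partition of $[p]$. Consequently $\bigcup_{T \in \mathcal{T}} \Gamma_T$ is a subset of the set of all $k$-level hierarchical partitions of $[p]$, whose cardinality equals $\bell_k(p)$, which gives the trivial bound $\bell_k(p) \geq \bigl|\bigcup_{T \in \mathcal{T}} \Gamma_T\bigr|$.

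The remaining task is then to evaluate this union from below. Since $\mathcal{T}$ consists of pairwise non-equivalent $(k,p)$-trees, the first claim supplies $\Gamma_{T_1} \cap \Gamma_{T_2} = \emptyset$ for all distinct $T_1, T_2 \in \mathcal{T}$, so the union is in fact disjoint and
\[
\Bigl|\bigcup_{T \in \mathcal{T}} \Gamma_T\Bigr| \;=\; \sum_{T \in \mathcal{T}} |\Gamma_T|.
\]
Substituting the orbit-counting identity $|\Gamma_T| = p!/|\Aut(T)|$ from the second claim yields $\bell_k(p) \geq p! \sum_{T \in \mathcal{T}} |\Aut(T)|^{-1}$, as required.

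There is no genuine obstacle here: the corollary is a straightforward bookkeeping consequence of the two claims, packaging them into a summation bound that can later be optimized over choices of $\mathcal{T}$ when proving the lower bound half of Theorem~\ref{thm:bkp_lower_bound}. The substantive content lives in the two preceding claims themselves (the disjointness of the $\Gamma_T$'s across non-equivalent trees, and the orbit-counting identity $|\Gamma_T| = p!/|\Aut(T)|$); the only thing one must verify in the corollary itself is that the hypothesis of pairwise non-equivalence allows the first claim to be invoked, which is immediate from the statement.
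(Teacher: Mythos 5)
Your proposal is correct and matches the paper's intended argument exactly: the paper derives the corollary by "combining the above claims," i.e., using disjointness of the $\Gamma_T$'s across non-equivalent trees together with $|\Gamma_T| = p!/|\Aut(T)|$, and bounding $\bell_k(p)$ below by the size of the disjoint union. Nothing is missing.
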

A tree transformation is uniquely specified by the way it reorders the children of nodes. The number of different ways to perform the reordering is $\Tran(T) = \prod_{u\in V(T)} |r_u|!$, where $r_u$ is the number of children of node $u$.
Note that $|\Aut(T)| \leq \Tran(T)$.
\begin{definition}
Let us say that a $(k,p)$-tree is $C$-nice if $\Tran(T)\leq C^p$. 
\end{definition}

\begin{claim}
    \label{claim:bkp_bound_refined}
    Let $\mathcal{T}$ be a set of $C$-nice $(k,p)$-trees, no two of which are identical. 
    Then
    \[
        \bell_k(p)^{1/p} \geq \frac{p}{e\cdot C^2} |\mathcal T|^{1/p}.
    \]
\end{claim}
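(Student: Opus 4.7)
The plan is to reduce to Corollary~\ref{cor:bkp_bound}, which applies only to sets of pairwise non-equivalent trees; the work, therefore, is in handling the fact that $\mathcal{T}$ may contain several non-identical but equivalent trees.

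First, I would extract a subset $\mathcal{T}' \subseteq \mathcal{T}$ of pairwise non-equivalent trees by choosing one representative from each equivalence class that meets $\mathcal{T}$. The key quantitative step is to bound how many non-identical plane trees can lie in a single equivalence class. Any tree equivalent to $T$ is obtained by reordering the children of some nodes of $T$, and there are exactly $\Tran(T)$ such reorderings. Since $T$ is $C$-nice, this gives $\Tran(T) \leq C^p$, so each equivalence class of $\mathcal{T}$ contributes at most $C^p$ non-identical plane trees, yielding $|\mathcal{T}'| \geq |\mathcal{T}|\cdot C^{-p}$.

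Next, I would apply Corollary~\ref{cor:bkp_bound} to $\mathcal{T}'$. For each $T \in \mathcal{T}'$, using $|\Aut(T)| \leq \Tran(T) \leq C^p$, we get
$$\bell_k(p) \;\geq\; p! \sum_{T \in \mathcal{T}'} |\Aut(T)|^{-1} \;\geq\; p! \cdot |\mathcal{T}'| \cdot C^{-p} \;\geq\; p! \cdot |\mathcal{T}| \cdot C^{-2p}.$$
Taking the $p$-th root and invoking Stirling's bound $(p!)^{1/p} \geq p/e$ gives
$$\bell_k(p)^{1/p} \;\geq\; \frac{p}{e\,C^{2}}\, |\mathcal{T}|^{1/p},$$
as required. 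The two factors of $C$ come from the two losses of $C^p$: one in passing from $\mathcal{T}$ to the non-equivalent $\mathcal{T}'$, and one in the lower bound $|\Aut(T)|^{-1} \geq C^{-p}$.

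The only non-routine point is justifying the $C^p$ bound on non-identical trees per equivalence class, i.e., verifying that equivalence is realized by reorderings of children so that the total count is at most $\Tran(T)$. Once this combinatorial observation is in place, the remainder is a direct substitution into the corollary and one application of Stirling's inequality.
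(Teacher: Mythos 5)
Your proposal is correct and follows essentially the same route as the paper: pass to a subset of pairwise non-equivalent representatives of size at least $|\mathcal{T}|/C^p$ (using that each equivalence class contains at most $\Tran(T)\leq C^p$ non-identical plane trees), apply Corollary~\ref{cor:bkp_bound} together with $|\Aut(T)|\leq \Tran(T)\leq C^p$, and finish with $(p!)^{1/p}\geq p/e$. The combinatorial point you flag --- that equivalence is realized by child-reorderings, of which there are exactly $\Tran(T)$ --- is precisely the justification the paper uses as well.
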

\begin{proof}
For every tree $T\in \cal T$, there are at most $\Tran(T)$ equivalent to $T$ trees in $\cal T$. Further, since $T$ is $C$-nice, $\Tran(T) \leq C^p$. We conclude that there is a subset of at least $|\mathcal{T}|/C^p$ pairwise non-equivalent trees in $\cal T$. Applying Corollary~\ref{cor:bkp_bound} to this subset and using that $|\Aut(T)| \leq \Tran(T) \leq C^p$, we get that 
$$\bell_k(p) \geq p! \cdot \frac{|{\mathcal{T}}|}{C^p}\cdot \frac{1}{C^p} \geq \left(\frac{p}{e}\right)^p \cdot \frac{|\cal T|}{C^{2p}}.$$
Raising both sides to the power of $1/p$, we get the desired statement.
\end{proof}

Now we prove Lemma~\ref{lem:automorphism-count} that will immediately yield item 1 of Theorem~\ref{thm:bkp_lower_bound} and will also be used in the proof of item~2.
\begin{lemma}\label{lem:automorphism-count}
Let $f_k(p)$ be the minimum of $\Tran(T)$ over all $(k,p)$-trees $T$. Assume that $\log_e^{(k)} p \geq 1$. Then
$$f_k(p)^{1/p} \leq C\log_e^{(k)} p$$
for $C = 10$.
\end{lemma}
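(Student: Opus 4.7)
My plan is to construct an explicit $(k,p)$-tree $T$ with branching factors following an iterated-logarithm recipe and to bound $\Tran(T)$ directly. Concretely, I would set $a_j := \lceil \log_e^{(j)} p \rceil$ for each $j = 1, 2, \ldots, k$ (all at least $1$ since $\log_e^{(k)} p \geq 1$), and then choose $a_0$ so that $\prod_{i=0}^{k} a_i = p$, so that $a_0 \approx p / \prod_{j=1}^{k} \log_e^{(j)} p$. This produces a balanced $(k,p)$-tree in which each internal node at level $j$ has exactly $a_j$ children. Notably, the choice of $a_0$ is somewhat \emph{smaller} than the naive iterated-log prescription $p/\log_e p$; this is crucial, since the naive iterated log assignment at level $0$ gives a bound of order $e^{k-1} \log_e^{(k)} p$, which would exceed $10\log_e^{(k)} p$ for $k \geq 4$.

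The key computation is a telescoping argument. Write $n_i = \prod_{j<i} a_j$ for the number of level-$i$ nodes, so that $n_{i+1} = p/\prod_{j>i} a_j$. Stirling's inequality $\log(a!) \leq a(\log a - 1) + O(\log a)$ gives $\log \Tran(T) \lesssim \sum_{i=0}^{k} n_{i+1}(\log a_i - 1)$. For $i \geq 1$, substituting $a_i \approx \log_e^{(i)} p$ yields the identity $n_{i+1}\log a_i \approx n_{i+2}$; and for $i = 0$, we have $n_1 \log a_0 \approx n_1 \log p = n_2$ since $\log a_0 = \log p - \sum_{j=1}^{k}\log_e^{(j+1)} p \approx \log p$ to leading order. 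Summing these, the intermediate terms telescope, leaving $\sum_{i=0}^{k} n_{i+1}(\log a_i - 1) \approx p\log_e^{(k+1)} p - n_1$. Dividing by $p$ gives $\log \Tran(T)/p \lesssim \log_e^{(k+1)} p + O(1)$, which exponentiates to $\Tran(T)^{1/p} \leq e \cdot \log_e^{(k)} p + \text{lower order}$, comfortably below $10 \log_e^{(k)} p$.

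The main obstacle is rigorously controlling the lower-order corrections to achieve the explicit constant $C = 10$. Two issues warrant care: (a) when $a_k$ is small (equal to $1$, $2$, or $3$, which can occur when $\log_e^{(k)} p$ is close to the limiting value $1$), Stirling's approximation is inaccurate and the exact values of $\log(a_k!)$ must be substituted into the per-level contribution; (b) the rounding $\lceil \log_e^{(j)} p \rceil = \log_e^{(j)} p + O(1)$ propagates through the products $n_i$ and must be tracked. Since the leading-order bound is already $\approx e \log_e^{(k)} p \approx 2.72 \log_e^{(k)} p$, the slack factor $10/e \approx 3.7$ is ample to absorb these corrections; formally, I would establish $\Tran(T)^{1/p} \leq (e + \varepsilon)\log_e^{(k)} p$ with $\varepsilon = o(1)$ as $p \to \infty$ and then verify $e + \varepsilon \leq 10$ uniformly over the allowed parameter range, using a short case analysis for the regime where $\log_e^{(k)} p$ is close to $1$ and the asymptotics are not yet in effect.
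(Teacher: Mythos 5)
Your tree is essentially the same object the paper builds: branching factor $\approx \log_e^{(i)} p$ at level $i$ for $i \geq 1$, with the residual $p/\prod_{i\geq 1} a_i$ absorbed at the root. The difference is in how $\Tran(T)=\prod_u r_u!$ is bounded. The paper proceeds by induction on $k$, peeling off the bottom level: it sets $p'=\lfloor p/\log_e^{(k)}p\rfloor$, attaches $\Delta_\pm=\lceil p/p'\rceil$ or $\lfloor p/p'\rfloor$ children to the leaves of an optimal $(k-1,p')$-tree, and closes the induction with the single clean estimate $(C\log_e^{(k-1)}p)^{1/\log_e^{(k-1)}p}\le (Ce)^{1/e}$ together with $\Delta_+<e\log_e^{(k)}p$, giving $(10e)^{1/e}\cdot e\approx 9.16\le 10$. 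You instead bound $\log\Tran(T)$ globally via Stirling and observe that $\sum_i n_{i+1}\log a_i$ telescopes to $p\log_e^{(k+1)}p$ plus lower-order terms; that identity is correct and arguably more transparent about why the answer is $\exp(\log_e^{(k+1)}p+O(1))=O(\log_e^{(k)}p)$. What the paper's route buys is that the constant falls out of one inequality per level with no accumulation; what yours buys is a non-inductive, closed-form view of the exponent.

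Two loose ends in your plan deserve emphasis. First, you ``choose $a_0$ so that $\prod_i a_i=p$,'' but $\prod_{i\ge 1}\lceil \log_e^{(i)}p\rceil$ will generally not divide $p$, so no such integer $a_0$ exists; you need non-uniform branching at some level (the paper's $\Delta_+,\Delta_-$ with counts $a_+,a_-$ is exactly this fix, and it is why the paper works with $\Delta_+^p\ge(\Delta_+!)^{a_+}(\Delta_-!)^{a_-}$ rather than a single factorial). Second, the claim that the slack $10/e$ is ``ample'' is optimistic: the Stirling remainders, the ceiling corrections $\log a_i-\log\log_e^{(i)}p\le 1/\log_e^{(i)}p$, and the non-telescoped boundary terms each contribute $\Theta(\sum_i n_i)=\Theta(p)$ to the exponent, and when $\log_e^{(k)}p$ is near $1$ (so $a_k\in\{1,2\}$ and $\sum_i n_i$ can approach $2p$ or more) these add up to a factor that is not obviously below $e^{\ln 10}\approx 10/1$ divided by the leading term. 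You correctly identify both issues as the work remaining, and the paper's own margin ($9.16$ vs.\ $10$) shows the target constant is not loose, so the ``short case analysis'' you defer is genuinely the crux of hitting $C=10$ rather than a routine cleanup.
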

\begin{proof}
We prove the statement by induction on $k$. If $k=0$, the desired tree simply consists of a root connected to $p$ leaves. The tree has $p! \leq p^p = (\log_e^{(0)} p)^p$ transformations.

Assuming that the statement holds for all $k' < k$, we prove it for $k$. Let $p'=  \lfloor{p}/{\log_e^{(k)} p}\rfloor$. 
Note that $\exp(p') = \exp\Bigl({\bigl\lfloor\left.{p}\right/{\log^{(k)}_e p}\bigr\rfloor}\Bigr) \geq p$ for $p\geq 2$ and hence,
$$\log_e^{(k-1)} p' \geq \log_e^{(k)} e^{p'} \geq \log_e^{(k)} p \geq 1$$ 
and therefore, the induction hypothesis applies to $p'$ and $k' = k-1$.

First, we construct a $(k-1, p')$-tree $T'$ with $\Tran(T') = f_{k-1}(p')$.
Let $\Delta = p/p'$. Denote $\Delta_+ = \lceil\Delta\rceil$ and $\Delta_- = \lfloor\Delta\rfloor$. 
If $\Delta_+ = \Delta_-$, we let $a_+ = p'$ and $a_-=0$.
If $\Delta_+ = \Delta_- +1$, we define integers $a_+ = p - p' \Delta_-$ and $a_- = p' - a_+$ so that $p = a_+ \Delta_+ + a_-\Delta_-$. 
Now we attach $\Delta_+$ child nodes to (arbitrarily chosen) $a^+$ leaves of $T'$ and $\Delta_-$ child nodes to the other $a_-$ leaves of $T'$. 
We obtain a $(k,p)$-tree $T$. 

Note that every transformation of $T$ induces a transformation of $T'$. Further, for every transformation of $T'$, there are $(\Delta_+!)^{a_+}(\Delta_-!)^{a_-}$ transformations of $T$. 
Now, $(\Delta_+!)^{a_+} \leq \Delta_+^{a_+ \Delta_+}$ and 
$(\Delta_-!)^{a_-} \leq \Delta_-^{a_- \Delta_-} \leq \Delta_+^{a_- \Delta_-}$. Therefore,
\begin{equation}
\label{eq:number-of-extensions-of-Tp}
(\Delta_+!)^{a_+} \cdot (\Delta_-!)^{a_-} \leq
\Delta_+^{a_+ \Delta_+ + a_-\Delta_-} = \Delta_+^{p}.
\end{equation}
Using the induction hypothesis, the definition of $p'$, and \eqref{eq:number-of-extensions-of-Tp}, we get 
$$\Tran(T) \leq \Tran(T') \cdot \Delta_+^{p} \leq f_{k-1}(p') \cdot \Delta_+^{p}\leq (C\log_e^{(k-1)} p)^{p/\log_e^{(k-1)} p}\cdot \Delta_+^{p}$$
and
$$\Tran(T)^{1/p}\leq (C\log_e^{(k-1)} p)^{1/\log_e^{(k-1)} p}\cdot \Delta_+.$$
Note that, $t^{1/t} \leq e^{1/e}$ for $t > 0$ and $1/\log_e^{(k-1)} p = 1/e^{\log_e^{(k)} p} \leq 1/e$ since $\log_e^{(k)} p \geq 1$.
Also, $\Delta_+ < \Delta + 1 \leq p/(p/\log_e^{(k)} p - 1) + 1 = (p/(p - \log_e^{(k)} p) + 1/\log_e^{(k)} p) \log_e^{(k)} p < e \log_e^{(k)} p$.
Therefore,
$$(C\log_e^{(k)} p)^{1/\log_e^{(k)} p} \Delta_+\leq (C\cdot e)^{1/e} \cdot e \cdot \log_e^{(k)}p \leq C\log_e^{(k)}p,$$
where we used that $C =10$ and thus, $(C\cdot e)^{1/e} \cdot e \leq C$.
\end{proof}

We now prove Theorem~\ref{thm:bkp_lower_bound}, item 1. 

\begin{proof}[Proof of Theorem \ref{thm:bkp_lower_bound}, item 1]
Assume that $k \leq k_0$.
From Lemma~\ref{lem:automorphism-count}, we get that there exists a $(k,p)$-tree $T$ with $|\Aut(T)|^{1/p} \leq 10 \log_e^{(k)} p$. Applying Corollary~\ref{cor:bkp_bound} with ${\cal T} = \{T\}$, we get
$$\bell_k(p)^{1/p} \geq \frac{(p!)^{1/p}}{10\log_e^{(k)} p} = \Omega\left(\frac{p}{\log_e^{(k)} p}\right).$$
\end{proof}

While proving item 1 in Theorem \ref{thm:bkp_lower_bound} required constructing only a single $(k,p)$-tree with a small number of automorphisms, proving item 2 will require building a large family of $(k, p)$-trees. We now present a few definitions and claims to describe and analyze this family of trees.

Consider $k'\in [k]$ and $p' \in [p]$, which we will fix later. Assume that $\log_e^{(k')} p' \leq e$. Then for some $k''\leq k'$, $\log_e^{(k'')} p'\in[1,e)$. Therefore, by Lemma~\ref{lem:automorphism-count} $f_{k''}(p') \leq (10 e)^{p'} < 30^{p'}$ and thus there exists a $30$-nice $(k'',p')$-tree $T_0'$. If $k'' < k'$, we create a new root and attach it to the root of $T_0'$ by a path of length $k'-k''$; denote the obtained tree $T_0$. If $k' = k''$, simply let $T_0 = T_0'$. We obtain a 30-nice $(k',p')$-tree $T_0$.

Represent $p$ as a sum of positive integers
$p = q_1 + \dots + q_{p'}$,
where each $q_i$ is either $\lfloor p/p' \rfloor$ or $\lceil p/p' \rceil$.  Set $\Delta = k - k'$.  Suppose that, for each $i$, we are given a family $\mathcal{Q}_i$ of binary $(\Delta,q_i)$-trees in which no two trees are identical.  We construct a family $\mathcal{T}$ of $(k,p)$-trees by selecting a tree $Q_i \in \mathcal{Q}_i$ for every $i \in [p']$ and attaching $Q_i$ to the $i$-th leaf $v_i'$ of $T_0$, identifying its root $r_{Q_i}$ with $v_i'$.

\begin{claim}
Every tree $T\in \mathcal{T}$ is $30$-nice.
\end{claim}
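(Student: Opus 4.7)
The plan is a direct computation that decomposes $\Tran(T)$ into contributions from $T_0$ and from the attached binary trees $Q_1,\dots,Q_{p'}$. Since $T$ is built by identifying the $i$-th leaf $v_i'$ of $T_0$ with the root of $Q_i$, every non-leaf node $u$ of $T_0$ retains its $T_0$-children in $T$, and every node $u$ of $Q_i$ retains its $Q_i$-children in $T$. The leaves of $T_0$ each contribute $0! = 1$ to $\Tran(T_0)$, so absorbing them incurs no double counting with the roots of the $Q_i$, whose degrees are instead recorded inside $\Tran(Q_i)$. Hence
\[
\Tran(T)\;=\;\Tran(T_0)\,\prod_{i=1}^{p'}\Tran(Q_i).
\]

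By construction $T_0$ is $30$-nice, so $\Tran(T_0)\le 30^{p'}$. For each $Q_i$, the key step is a standard edge-counting identity: letting $n_0,n_1,n_2$ denote the number of nodes of $Q_i$ with $0$, $1$, and $2$ children, the relations $n_0 = q_i$ and $n_0+n_1+n_2-1 = n_1+2n_2$ (total edges counted two ways) yield $n_2 = q_i - 1$. Since only degree-$2$ nodes contribute a factor larger than $1$ to $\Tran(Q_i)$, we obtain $\Tran(Q_i) = 2^{q_i-1}$, and using $\sum_i q_i = p$ gives $\prod_i \Tran(Q_i) \le 2^{p-p'}$.

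Combining these bounds,
\[
\Tran(T)\;\le\;30^{p'}\cdot 2^{p-p'}\;\le\;30^{p'}\cdot 30^{p-p'}\;=\;30^{p},
\]
which is exactly the $30$-niceness of $T$. There is no real obstacle here; the only subtlety is the bookkeeping at the glued nodes $v_i'$, handled cleanly by the factorization identity in the first paragraph.
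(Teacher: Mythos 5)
Your proof is correct and follows essentially the same route as the paper's: factor $\Tran(T)$ as $\Tran(T_0)\prod_i \Tran(Q_i)$, bound $\Tran(T_0)\le 30^{p'}$ by the $30$-niceness of $T_0$, and use that each binary $(\Delta,q_i)$-tree has exactly $q_i-1$ degree-$2$ internal nodes to get $\Tran(Q_i)=2^{q_i-1}$. The only additions are the explicit edge-counting justification of $n_2=q_i-1$ and the remark about the glued leaves contributing $0!=1$, both of which the paper leaves implicit.
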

\begin{proof} Consider a tree $T\in \mathcal{T}$, which is obtained by attaching trees $Q_i$ to leaves of $T_0$. Since $T_0$ is $30$-nice, $\Tran(T_0) \leq 30^{p_0}$. Since $Q_i$ is a binary tree with $q_i$ leaves, it has exactly $q_i - 1$ nodes with 2 children each; all other nodes have $0$ or $1$ children. Therefore, $\Tran(Q_i) = 2^{q_i-1}$. We have
$$\Tran(T) = \Tran(T_0)\cdot \prod_{i=1}^{p'} \Tran(Q_i) \leq 30^{p'} \prod_{i=1}^{p'} 2^{q_i - 1} = 30^{p'}\cdot 2^{p - p'} \leq 30^{p}.$$
\end{proof}
Applying Claim~\ref{claim:bkp_bound_refined} to the family $\cal T$, we get the following corollary. 
\begin{corollary}\label{cor:bell-lower-bound-in-terms-of-Q}
    $$\bell_k(p)^{1/p} \geq \Omega(p)\cdot \prod_{i=1}^{p'} |\mathcal{Q}_i|^{1/p}.$$
\end{corollary}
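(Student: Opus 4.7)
The plan is to combine the preceding claim (that every tree in $\mathcal{T}$ is $30$-nice) with Claim~\ref{claim:bkp_bound_refined} applied to $\mathcal{T}$. What remains is to (i) verify that no two trees in $\mathcal{T}$ are identical, and (ii) evaluate $|\mathcal{T}|^{1/p}$.

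First, I would show that the construction is \emph{injective}: distinct tuples $(Q_1,\ldots,Q_{p'})\in \mathcal{Q}_1\times\cdots\times\mathcal{Q}_{p'}$ yield non-identical trees in $\mathcal{T}$. Here the crucial point is that $T_0$ is a rooted \emph{plane} tree, so its leaves $v_1',\ldots,v_{p'}'$ have a well-defined left-to-right order. When we attach $Q_i$ at $v_i'$, the root of $Q_i$ inherits the position of $v_i'$, and within the resulting tree the subtree hanging off the $i$-th leaf of $T_0$ is canonically identified by its left-to-right position. Since identity of plane trees preserves this order, the subtree at the $i$-th position in the resulting tree must match $Q_i$. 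Thus if two tuples $(Q_1,\ldots,Q_{p'})$ and $(Q_1',\ldots,Q_{p'}')$ yield identical trees in $\mathcal{T}$, then $Q_i$ and $Q_i'$ are identical for each $i$; since each $\mathcal{Q}_i$ consists of pairwise non-identical trees by assumption, the tuples coincide. This gives
\[
|\mathcal{T}| \;=\; \prod_{i=1}^{p'} |\mathcal{Q}_i|.
\]

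Next, I would invoke Claim~\ref{claim:bkp_bound_refined} with $C=30$ and the family $\mathcal{T}$ (which is $30$-nice by the preceding claim and, by the above, contains no two identical trees). This yields
\[
\bell_k(p)^{1/p} \;\geq\; \frac{p}{e \cdot 30^2}\,|\mathcal{T}|^{1/p} \;=\; \Omega(p)\cdot \prod_{i=1}^{p'} |\mathcal{Q}_i|^{1/p},
\]
which is precisely the statement of the corollary.

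The only subtlety I anticipate is making the injectivity argument airtight: one must be careful that the attachment procedure (identifying $r_{Q_i}$ with $v_i'$) does not create ambiguity and that ``identical'' in Definition~\ref{def:realizable_trees} refers to a map preserving the left-to-right order of children, so that the $i$-th subtree from the left is canonically recoverable. Everything else is a direct citation of the preceding lemmas and claims, so I do not foresee any additional technical difficulty.
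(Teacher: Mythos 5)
Your proposal is correct and follows exactly the paper's (one-line) proof: apply Claim~\ref{claim:bkp_bound_refined} with $C=30$ to the family $\mathcal{T}$, using the preceding claim for niceness. The injectivity argument you spell out (distinct tuples $(Q_1,\dots,Q_{p'})$ give non-identical plane trees, so $|\mathcal{T}|=\prod_i|\mathcal{Q}_i|$) is left implicit in the paper but is exactly the right justification, and your handling of it is sound.
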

\begin{claim} 
\label{claim:Q-trees}
Assume that $\Delta \leq k$ and $q \leq p$ satisfy $2\lceil \log_2 q\rceil \leq \Delta$. Then there exists a family $\mathcal{Q}$ of binary $(\Delta, q)$-trees (no two of which are identical) of size at least $\Delta^{q-1}/(8q)$.
\end{claim}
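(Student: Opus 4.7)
The plan is to take $\mathcal{Q}$ to be the family of all binary $(\Delta,q)$-trees, reducing the claim to the lower bound $T_\Delta(q) \geq \Delta^{q-1}/(8q)$ on the total count.

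I would begin by parameterizing these trees via their \emph{branching profile} $(m_0, m_1, \ldots, m_{\Delta-1})$, where $m_d$ is the number of two-child (branching) nodes at depth $d$. Writing $a_d := 1 + \sum_{d' < d} m_{d'}$ for the number of nodes at depth $d$, a profile is \emph{valid} iff $\sum_d m_d = q-1$ and $m_d \leq a_d$ for every $d$, and the number of binary $(\Delta,q)$-trees realizing a given valid profile is exactly $\prod_{d=0}^{\Delta-1}\binom{a_d}{m_d}$, obtained by choosing which of the $a_d$ depth-$d$ nodes are branchings. Summing over valid profiles gives an exact expression for $T_\Delta(q)$.

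I would then restrict this sum to profiles with $m_d \in \{0,1,2\}$. Setting $r := \max(0, q-1-\Delta)$, I consider profiles where exactly $r$ depths carry $m_d = 2$ (this is necessary when $q - 1 > \Delta$) and the other $\Delta - r$ carry $m_d = 1$. For each such profile the product $\prod_d \binom{a_d}{m_d}$ evaluates in closed form as a product of the $a_d$'s and $\binom{a_d}{2} = a_d(a_d-1)/2$ factors; summing over the $\binom{\Delta}{r}$ placements of the doubled depths produces a concrete lower bound. When $q-1 \leq \Delta$ (so $r = 0$), this specializes to the distinct-depth contribution $\binom{\Delta}{q-1}(q-1)!$, which one recognizes as counting the classical \emph{increasing ordered binary trees} with $q-1$ internal nodes (the product telescopes to $(q-1)!$ per subset); for larger $q$, the doubled-depth terms supply what distinct depths cannot.

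The final step is to verify that this combined contribution is at least $\Delta^{q-1}/(8q)$. The hypothesis $2\lceil\log_2 q\rceil \leq \Delta$ enters via $q \leq 2^{\Delta/2}$, which controls tail-products of the form $\prod_{i=0}^{q-2}(1 - i/\Delta)$ coming from the distinct-depth sum and ensures enough ``room'' for the doubled-depth factors. I expect the main obstacle to be the intermediate regime where $q$ is comparable to $\Delta$ (say $\sqrt\Delta \lesssim q \lesssim 2^{\Delta/2}$): here neither the pure distinct-depth nor the pure doubled-depth family is by itself decisive, and a unified analysis combining both contributions is required. Fortunately, the generous constant $8$ in the denominator of $\Delta^{q-1}/(8q)$ provides enough slack that a coarse, case-based computation should suffice, rather than requiring a tight single-profile estimate.
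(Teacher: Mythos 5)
Your setup is fine---the profile decomposition is correct, and $\prod_d\binom{a_d}{m_d}$ does count the plane binary trees realizing a given valid profile---but the restriction to profiles with $m_d\in\{0,1,2\}$ is fatal, and the shortfall is not a constant factor that the ``generous $8$'' can absorb. First, if every $m_d\le 2$ then $q-1=\sum_d m_d\le 2\Delta$, so for $2\Delta+1<q\le 2^{\Delta/2}$ (a range permitted by the hypothesis $2\lceil\log_2 q\rceil\le\Delta$, and genuinely used downstream, where $q$ is taken as large as $2^{\Delta/2}$) your restricted sum is empty. Second, even at $q=\Delta+1$ your family reduces to the single all-ones profile, contributing exactly $\prod_{d}(d+1)=\Delta!$, against a target of $\Delta^{\Delta}/(8(\Delta+1))$; the ratio is about $8(\Delta+1)\sqrt{2\pi\Delta}\,e^{-\Delta}$, so you are short by a factor of roughly $e^{\Delta}$. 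More generally the distinct-depth contribution $\binom{\Delta}{q-1}(q-1)!=\Delta^{q-1}\prod_{i=1}^{q-2}(1-i/\Delta)\le \Delta^{q-1}e^{-(q-1)(q-2)/(2\Delta)}$ already drops below $\Delta^{q-1}/(8q)$ once $q\gg\sqrt{\Delta\log\Delta}$, and the $m_d=2$ terms only appear for $q>\Delta+1$ and cannot close an exponential gap. The structural reason: forcing branchings onto (nearly) distinct depths gives the $i$-th branching only $i$ choices of node rather than $\Theta(\Delta)$ choices of depth, so the product degenerates to a factorial instead of $\Delta^{q-1}$.

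The paper's proof goes in the opposite direction. It fixes a \emph{single} full binary tree $Q$ of depth $\Delta_0=\lceil\log_2 q\rceil$ with $q$ leaves---so about $2^{j}$ branchings at level $j$, i.e.\ a profile with $m_d$ as large as $a_d$---and then stretches it into depth $\Delta$: each internal node of $Q$ at level $j$ independently chooses its depth from a dedicated interval $I_j$ of length roughly $\Delta/3^{\Delta_0+1-j}$. Nodes on the same level of $Q$ are incomparable and may share a depth, so each of the $q-1$ internal nodes gets $\Omega(\Delta)$ essentially free choices, giving $\prod_j |I_j|^{\lambda_j}\ge \Delta^{q-1}/3^{8q}$ pairwise non-identical trees. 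Note this is $\Delta^{q-1}/3^{8q}$, \emph{not} $\Delta^{q-1}/(8q)$: the denominator in the claim's statement appears to be a slip, and the exponentially weaker bound is all that Theorem~\ref{thm:bkp_lower_bound} needs, since the claim is only invoked through $\prod_i(\Delta^{q_i-1}/C^{q_i})^{1/p}$ with $\sum_i q_i=p$. So if you want to salvage a profile-based count, aim for $\Delta^{q-1}/C^{q}$ and include the ``doubling'' profiles that pack about $2^{j}$ branchings into the depths of $I_j$; with $m_d\le 2$ no selection of profiles can work, and whether the \emph{total} count meets the literal $\Delta^{q-1}/(8q)$ for all allowed $(q,\Delta)$ is itself doubtful and in any case not something your estimates address.
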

\begin{proof}
Let $\Delta_0 = \lceil \log_2 q \rceil$.  
Consider a full binary tree with $q$ leaves of height $\Delta_0$. Create a new root node $r_Q$ and attach it to the original root by an edge. Denote the obtained binary tree of height $\Delta_0 + 1$ by $Q$.

Now we convert $Q$ into a family of $(\Delta, q)$-trees as follows. Choose a label $h(u)\in [\Delta+1]$ for each node $u$ of $Q$ such that
\begin{itemize}
    \item $h(r_{Q}) = 0$
    \item $h(u) = \Delta + 1$ for every leaf $u$ of $Q$
    \item $h(u) > h(v)$ if $u$ is a child of $v$.
\end{itemize}
Next, replace each edge $(u, v)$ with a path of length $|h(u) - h(v)|$. The resulting graph is a $(\Delta, q)$-tree in which each node $u$ resides at level $h(u)$.  
The number of binary trees produced in this way equals the number of possible choices for the labelings $h$; all resulting trees are pairwise non-identical.

It remains to lower-bound the number of possible labelings $h$. Divide $[k]$ into consecutive intervals $I_1, \dots, I_{\Delta_0}$ with $|I_j| \geq \lceil \Delta/3^{\Delta_0 + 1 - j}\rceil$. This is possible, since 
$$\sum_{j=1}^{\Delta_0} \lceil \Delta/3^{\Delta_0 + 1 - j}\rceil = \left\lceil\frac{\Delta}{3}\right\rceil + \left\lceil\frac{\Delta}{9}\right\rceil+ \dots + \left\lceil\frac{\Delta}{3^{\Delta_0}}\right\rceil\leq \Delta_0 +  
\left(\frac{\Delta}{3}+ \frac{\Delta}{9} + \dots+ \frac{\Delta}{3^{\Delta_0}}\right)\leq \Delta_0 + \frac{\Delta}{2} \leq \Delta.
$$
Now we for every vertex $u$ of $Q$, which is not the root or a leaf, we arbitrarily choose label $h(u)$ in $I_j$, where $j$ is the level of $Q$ in which node $j$ lies in. We let $h(r_Q) = 0$ and $h(u) = \Delta + 1$ for every leaf $u$. We obtain a valid labeling $h$. Denote the number of internal nodes of $Q$ at level $j$ by $\lambda_j$. The number of labelings we can get using this approach is
$$\prod_{j=1}^{\Delta_0} \left\lceil \frac{\Delta}{3^{\Delta_0 + 1 - j}}\right\rceil^{\lambda_j}
\geq \frac{\Delta^{\sum_j \lambda_j}}{3^{\sum_j(\Delta_0 + 1 - j)\lambda_j}}.$$

Now, $\sum_j \lambda_j$ is the number of internal nodes of $Q$ other than $r_Q$, which is $q-1$ (here we use that we created a new root $r_Q$ for the original full binary tree). On the other hand, $\lambda_j \leq 2^j$ and thus 
$$\sum_{j=1}^{\Delta_0}(\Delta_0 + 1 - j)\lambda_j \leq \sum_{j=1}^{\Delta_0}(\Delta_0 + 1 - j)2^j = 2(2^{\Delta_0 + 1} - \Delta_0 -2) \leq 4\cdot 2^{\Delta_0} \leq 8 q.$$
We conclude that $|\mathcal T| \geq \nicefrac{\Delta^{q-1}}{8q}$.
\end{proof}

We now finish the proof of Theorem~\ref{thm:bkp_lower_bound}, item 2. 

\begin{proof}[Proof of Theorem \ref{thm:bkp_lower_bound}, item 2]
If $k \geq 2\log_2 p$, let $p' = 1$ and $k' = 0$.
If $k < 2\log_2 p$, choose $k' = k_0 = \log^{*} p - 1$ and the smallest positive integer $p'$ such that $2\lceil \log_2 \lceil p/p'\rceil\rceil\leq k - k'$. Note that $p'/p = O(2^{-\frac{k-k'}{2}})$. In either case $\log_e^{(k')} p' \leq e$, as we assumed above.

Using these parameters $p'$ and $k'$, we construct a tree $T_0$ and define $q_i$. Note that either 
\begin{enumerate}
    \item $q_1 = p$ and $\Delta = k - k_0 = k \geq  2\lceil \log_2 q_1 \rceil$ or 
    \item
$q_i \leq \lceil p/p'\rceil$ and  $\Delta = k - k' \geq 2\lceil \log_2 \lceil p/p'\rceil\rceil \geq 2\lceil \log_2 q_i\rceil$.
\end{enumerate}
In either case,  the condition of Claim~\ref{claim:Q-trees} is met. Applying the claim, we get a family $\mathcal{Q}_i$ of $(k-k', q_i)$-trees for every $i\in [p']$.
Now Corollary~\ref{cor:bell-lower-bound-in-terms-of-Q} yields a lower bound $\bell_k(p)$ 
$$\bell_k(p)^{1/p} \geq \Omega(p) \cdot \prod_{j=1}^{p'} \left(\frac{\Delta^{q_i-1}}{8q_i}\right)^{1/p} \geq 
\Omega(p) \cdot \frac{\Delta^{\frac{p-p'}{p}}}{O(p/p')^{p'/p}} \geq \Omega(p \Delta^{1 - p'/p}).
$$
In case 1, $p' = 1$ and $\Delta = k$, so $\bell_k(p) \geq \Omega(pk^{1-1/p})$, as required.
In case 2, $p'/p \leq O(2^{-\Delta/2})$ and thus $\Delta^{p'/p} = O(1)$. Therefore, $\bell_k(p) \geq \Omega(p (k - k_0))$.
\end{proof}
\subsection{Upper Bound}
Now we present the upper bound. The proof is a very minor modification of that from~\cite{MOT24}. We will need the following claim from~\cite{bell1938iterated}.
\begin{claim}
Let $f_0(x) = \exp(x)$ and $f_{i+1}(x) = \exp(f_i(x) -1)$. Then the exponential generating function for sequence $(\bell_k(i))_{i=0}^\infty$ is given by:
\begin{equation}\label{eq:bell:recurrence}
\sum_{i=0}^\infty \frac{\bell_{k}(i) x^i}{i!} = f_k(x).
\end{equation}
\end{claim}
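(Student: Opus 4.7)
The plan is to proceed by induction on $k$, combining the two recurrences for $\bell_k$ from Section~\ref{sec:bell} with the classical exponential formula for exponential generating functions.

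For the base case $k = 0$, the identity $\bell_0(i) = 1$ for all $i$ immediately yields $\sum_{i \geq 0} \bell_0(i)\,x^i/i! = e^x = f_0(x)$, as required.

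For the inductive step, assume the identity holds at level $k$. Combining the two recurrences from Section~\ref{sec:bell} (applied with $T = \{[p]\}$), we obtain
\[
\bell_{k+1}(p) \;=\; \sum_{T \in \calP(\{[p]\})} \bell_k(T) \;=\; \sum_{T \vdash [p]} \;\prod_{S \in T} \bell_k(|S|),
\]
where the outer sum ranges over all set partitions $T$ of $[p]$, and the empty partition of $[0]$ contributes the empty product $1$ so that $\bell_{k+1}(0) = 1$. The next step is to invoke the exponential formula: for any sequence $(a_n)_{n \geq 1}$ with EGF $A(x) = \sum_{n \geq 1} a_n\,x^n/n!$, the sequence $b_p = \sum_{T \vdash [p]} \prod_{S \in T} a_{|S|}$ (with $b_0 = 1$) has EGF $\exp(A(x))$. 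Applying this with $a_n = \bell_k(n)$ for $n \geq 1$, and using the induction hypothesis together with $\bell_k(0) = f_k(0) = 1$ (which follows by a trivial sub-induction from $f_0(0) = 1$ and $f_{j+1}(0) = \exp(f_j(0) - 1)$), we get $A(x) = f_k(x) - 1$, and hence
\[
\sum_{p \geq 0} \bell_{k+1}(p)\, \frac{x^p}{p!} \;=\; \exp\!\bigl(f_k(x) - 1\bigr) \;=\; f_{k+1}(x),
\]
completing the induction.

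The only non-routine ingredient is the exponential formula itself, which is a standard identity in enumerative combinatorics (the species-theoretic statement that ``sets of structures'' correspond to $\exp$ applied to the EGF of the structures). If a self-contained derivation is desired, one expands $\exp(A(x)) = \sum_{m \geq 0} A(x)^m/m!$ and, for each $p$, collects the coefficient of $x^p/p!$ in $A(x)^m$ by distributing the $p$ labels among $m$ ordered nonempty blocks; identifying ordered $m$-tuples of blocks with unordered set partitions (the $1/m!$ cancels the $m!$ orderings) then yields $\sum_{T \vdash [p]} \prod_{S \in T} a_{|S|}$. This is the only potentially fiddly step; everything else follows directly from the definitions.
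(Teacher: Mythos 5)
Your proof is correct, but note that the paper itself offers no proof of this claim at all: it is cited to Bell's 1938 paper and taken as given. So your argument is a self-contained replacement rather than an alternative to something in the text. The route you take --- induction on $k$ plus the exponential formula applied to the block-product recurrence $\bell_{k+1}(p)=\sum_{T\vdash[p]}\prod_{S\in T}\bell_k(|S|)$ --- is the standard derivation and is exactly why the generating functions iterate as $f_{k+1}=\exp(f_k-1)$; the identification of ordered $m$-tuples of nonempty blocks with unordered partitions is handled correctly, as is the extraction of $A(x)=f_k(x)-1$ from the induction hypothesis.

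One small wrinkle worth patching: your displayed chain derives the block-product recurrence by composing the paper's two recurrences for $\bell_k(T)$, but the paper states $\bell_k(T)=\prod_{S\in T}\bell_k(|S|)$ only for $k\geq 1$, and with the paper's convention $\bell_0(T)=0$ for non-singleton partitions $T$ it is genuinely false at $k=0$ (e.g.\ $\bell_0(\{[2]\})=0$ while $\bell_0(2)=1$); the intermediate quantity $\sum_{T\in\calP(\{[p]\})}\bell_0(T)$ equals $1$, not $B_p$. The endpoints of your chain are nonetheless equal at $k=0$, since $\sum_{T\vdash[p]}\prod_{S\in T}\bell_0(|S|)=\sum_{T\vdash[p]}1=\bell_1(p)$. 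The cleanest fix is to prove $\bell_{k+1}(p)=\sum_{T\vdash[p]}\prod_{S\in T}\bell_k(|S|)$ directly from the definition of $(k+1)$-level partitions --- condition on the top-level partition $T_1=T$, and observe that the remaining $k$ levels refine each block of $T$ independently --- which holds uniformly for all $k\geq 0$ and avoids the paper's edge-case conventions for $\bell_0(T)$. With that adjustment the induction is airtight.
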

From \eqref{eq:bell:recurrence}, we get $\bell_k(p) \leq \frac{p!}{x^p} (f_k(x) - 1)$ for every $x > 0$. Approximating $(p!)^{1/p}$ with $\Theta(p)$, we obtain 
\begin{equation}  \label{eq:bell-base-upper}
\bell_k(p)^{1/p} \leq O(p)\frac{(f_k(x)-1)^{1/p}}{x}.
\end{equation}
If $k\leq k_0$, set $x= \log_e^{(k)} p$. Since $f_{i+1}(x) < \exp(f_i(x))$, we get $f_k(x) \leq \exp(p)$. Therefore, $$\bell_k(p)^{1/p} \leq O(p) \frac{e}{\log_e^{(k)} p},$$
as required in item 1 of Theorem~\ref{thm:bkp_lower_bound}.  

If $k > k_0$, set $x = \log_e(1 + \frac{1}{M})= \Omega(\frac1M)$ where integer parameter $M \geq 1$ will be specified later. We now show that for every $i< M$, 
\begin{equation}\label{eq:bell-induction-ub-on-f}
    f_i(x)\leq 1 +  1/(M -i).
\end{equation}
The proof is by induction on $i$:
\begin{itemize}
    \item 
For $i=0$, $f_i(x) = f_0(x) = \exp(x) = 1 + \frac{1}{M}$.
    \item We prove the induction step from $i$ to $i+1$, using the induction hypothesis for $i$ and inequality $e^x \leq 1+ x+x^2$ for $x\in[0,1]$,
\begin{align*}
    f_{i+1}(x) &= \exp(f_{i}(x) - 1) \leq \exp\left(\frac{1}{M-i}\right) \leq 1 + 
\frac{1}{M-i} \cdot \left(1+ \frac{1}{M-i}\right) \\
&= 1+ \frac{1}{M-(i+1)} - \frac{1}{(M-i)^2(M-(i+1))} \leq 1+ \frac{1}{M-(i+1)}.
\end{align*}
\end{itemize}
Therefore, $f_k(x) \leq 1 + \frac{1}{M-(k+1)}$.
From \eqref{eq:bell-base-upper}, we get 
$$\bell_k(p)^{1/p} = O\left(\frac{p\cdot M}{(M-k - 1)^{1/p}}\right).$$ 
Plugging in $M=2k+1$, we get
\begin{equation}\label{eq:bell-part-2-ub1}
\bell_k(p)^{1/p}\leq O\left(pk^{1-1/p}\right)
\end{equation}

Now set $M= k - k_0 + 1$. By \eqref{eq:bell-induction-ub-on-f}, $f_{M-1}(x) \leq 2$. Using that $f_{i+1}(x) \leq \exp(f_i(x))$ and the definition of $\log^*$-function, we get $f_k(x) = f_{M + k_0 - 1}(x) \leq e^p$.
From \eqref{eq:bell-base-upper}, we obtain 
\begin{equation}\label{eq:bell-part-2-ub2}
\bell_k(p)^{1/p} = O\left(p\cdot e \cdot (k - k_0+1)\right) = O(p\cdot (k-k_0)).
\end{equation}
Inequalities~\eqref{eq:bell-part-2-ub1} and \eqref{eq:bell-part-2-ub2} yield the upper bound in part 2 of Theorem~\ref{thm:bkp_lower_bound}.

\bibliographystyle{siamplain}
\bibliography{refs} 

\appendix

\section{Discussion of Feige's \texorpdfstring{$k$}{k}-Prover System and Theorem~\ref{thm:hardness-r-hypergraph-label-cover}}
\label{sec:Feige}
In this section, we discuss the reduction from Theorem~\ref{thm:hardness-r-hypergraph-label-cover}, which is based on Feige's $r$-prover system~\cite{Feige}. In this system, there are $r$ provers $P_1,...,P_r$ and a verifier $V$.
The system is parameterized by an integer $\ell$, and each prover is associated with a distinct $\ell$-bit codeword, such that the Hamming distance between any two such codewords is at least $\ell/3$.
In this system, the verifier and all the provers receive the same 3SAT formula $\phi$ of size $n$.
They all begin by using the same known reduction (see Proposition 2.1.2 of \cite{Feige}) in order to transform the formula into a 3SAT-5 formula $\phi'$ of size $\mathrm{poly}(n)$ with the properties that:
\begin{itemize}
    \item if $\phi$ is satisfiable then $\phi'$ is satisfiable; and
    \item if $\phi$ is unsatisfiable then every assignment to the variables of $\phi'$ satisfies at most a $(1-\epsilon_{3SAT5})$-fraction of the clauses of $\phi'$, where $\epsilon_{3SAT5}$ is some absolute constant.
\end{itemize}
Then, the verifier uses $O(\ell \log n)$ random bits (that are hidden from the provers) to uniformly sample $\ell$ clauses from the formula $\phi'$, as well as sample one variable from each of the sampled clauses.
The verifier sends each prover an $O(\ell \log n)$-bit message that encodes a subset of the sampled clauses and variables, where this subset is determined by the $\ell$-bit codeword assigned to the prover.
Each prover $P_i$ responds by sending the verifier an $O(\ell)$-bit message encoding a truth-assignment to all variables that appear in the message that $P_i$ received.
Based on the prover's responses, the verifier then decides whether to accept or reject.
Feige's algorithm for the verifier guarantees that, if the input 3SAT formula $\phi$ is satisfiable, then the provers will have a strategy that causes the verifier the accept regardless of the randomness. Furthermore, if the input 3SAT formula $\phi$ is unsatisfiable, then, regardless of the strategy used by the provers, the verifier will reject with probability at least $1-r^2 \cdot 2^{-C \cdot \ell}$, where $C$ is some absolute constant that depends only on $\epsilon_{3SAT5}$.
In the instance of $r$-Hypergraph Label Cover constructed by the reduction from \cite{BCMN}, for each $i \in [r]$, the set $V_i$ of vertices correspond to the set of possible messages that the verifier may send to prover $P_i$, and the set of labels corresponds to the set of possible messages that the prover can respond with.
The set of colors corresponds to the set of possible assignments of values to the variables sampled by the verifier, and the set of hyper-edges corresponds to the set of states of the verifier's random string.
Thus, the size of each set $V_i$ is $2^{O(\ell \log n)}$, the size of the set of labels if $2^{O(\ell)}$, the size of the set of colors is $2^{(\ell)}$, and the number of hyper-edges is $2^{O(\ell \log n)}$.
The only difference between the reduction from \cite{BCMN} and the reduction from our Theorem \ref{thm:hardness-r-hypergraph-label-cover}, is that \cite{BCMN} use the parameter $\ell=r$, while we use $\ell = O(\log \frac{r}{\varepsilon})$, and control this value of $\ell$ using the parameter $\varepsilon$.
When using a smaller value of $\ell$, the resulting instance of $r$-Hypergraph Label Cover is small, but the probability of the verifier accepting an unsatisfiable formula $\phi$ is not as small, which means that the fraction of hyperedges that may be weakly satisfied in the resulting $r$-Hypergraph Label Cover instance is not as small.
Consider also that, by hiding a sufficiently large constant in the $O$-notation of $\ell=O(\log \frac{r}{\varepsilon})$, we guarantee the existence of $r$ different $\ell$-bit codewords whose pair-wise Hamming distance is at least $\ell/3$, and a collection of such codewords can be found via a greedy algorithm in time $2^{O(\ell)}$.

\section{Proof of Claim~\ref{cl: claim about Bernouli random variables}}\label{sec:Bernoulli}
Let $\tilde Y_i = Y_i - \frac{1}{2}$. Then
$X - \mu = \sum_{i=1}^n a_i \tilde Y_i$.
Since all $\tilde Y_i$ are symmetric (random variables $\tilde Y_i$ and $-\tilde Y_i$ have the same distribution), so is $X - \mu$. Hence, 
$
\Pr(X - \mu \geq c\sigma) = \frac{1}{2} \Pr(|X - \mu| \geq c \sigma)
$.

Now we use the Paley--Zygmund inequality that states that for every non-negative random variable $Z$ and $\theta \in [0,1]$,
$$\Pr(Z > \theta \Exp[Z]) \geq (1-\theta)^2 \frac{\Exp[Z]^2}{\Exp[Z^2]}.$$
Let $Z = |X - \mu|$ and $\theta = c$. 
Note that $\Exp[Z^2] = \Var[X] = \sigma^2$ and, by Khintchine's inequality, $\Exp[Z]^2 \geq \frac{1}{2}\sigma^2$ \cite{Szarek1976}.
We get
$$\Pr(|X - \mu| \geq c \sigma) \geq (1-c)^2 \frac{\nicefrac{\sigma^2}{2}}{\sigma^2} = 
\frac{(1-c)^2}{2}$$
and thus $\Pr(X - \mu \geq c\sigma) \geq \frac{(1-c)^2}{4}$. 

We note that by choosing an appropriante $p\in [1,2]$, applying the Paley--Zygmund inequality to $Z = |X-\mu|^p$, and then lower bounding $\Exp[Z]$ and upper bounding $\Exp[Z^{2}]$ using Khintchine's Inequality~\cite{Haagerup1981}, one can get stronger lower bounds on $\rho = \Prob{X \geq \mu + c\sigma}$ for specific values of $c > 0$. Namely, the following bounds hold 

$$\rho \geq \max\limits_{p\in[1,1.8474]}\frac{\sqrt{\pi } \left(1- c^p\right)^2}{8\, \Gamma \left(p+\frac{1}{2}\right)},\quad \rho \geq\max\limits_{p\in[1.8475,2]}\frac{\left(1-c^p\right)^2 \Gamma \left(\frac{p+1}{2}\right)^2}{2 \sqrt{\pi }\, \Gamma \left(p+\frac{1}{2}\right)},\quad\text{and}\quad
\rho \geq (1-c^2)^2/6.$$

\typeout{get arXiv to do 4 passes: Label(s) may have changed. Rerun}
\end{document}